\NewDocumentCommand{\acro}{m o m o}
{%
	\IfValueTF{#2}{%
		\IfValueTF{#4}{%
			\DeclareAcronym{#1}{short={#2},long={#3},#4}
		}{%
			\DeclareAcronym{#1}{short={#2},long={#3}}
		}
	}{%
		\IfValueTF{#4}{%
			\DeclareAcronym{#1}{short={#1},long={#3},#4}
		}{%
			\DeclareAcronym{#1}{short={#1},long={#3}}
		}
	}
}
\newtheorem{definition}{Definition}
\newtheorem{proposition}{Proposition}
\newtheorem{lemma}{Lemma}
\theoremstyle{remark}
\newtheorem{corollary}{Corollary}
\theoremstyle{remark}
\newtheorem{remark}{Remark} 
\theoremstyle{example}
\theoremstyle{assumption}
\DeclareMathAlphabet{\mathbit}{OML}{cmr}{bx}{it}
\DeclareMathAlphabet{\mathsf}{OT1}{cmss}{m}{n}
\DeclareMathAlphabet{\mathTXf}{OT1}{cmss}{bx}{it}
\DeclareMathOperator*{\argmax}{argmax}
\newcommand{\norm}[1]{\lVert{#1}\rVert}
\DeclareAcronym{UE}{
	short = UE,
	short-plural-form = UEs,
	long = user equipment,
	long-plural-form = user equipments,
	foreign-plural = {}
}\DeclareAcronym{BS}{
	short = BS,
	short-plural-form = BSs,
	long = base station,
	long-plural-form = base stations,
	foreign-plural = {}
}
\DeclareAcronym{RB}{
	short = RB,
	short-plural-form = RBs,
	long = resource block,
	long-plural-form = resource blocks,
	foreign-plural = {}
}
\DeclareAcronym{RS}{
	short = RS,
	short-plural-form = RSs,
	long = reference signal,
	long-plural-form = reference signals,
	foreign-plural = {}
}
\DeclareAcronym{AoD}{
	short = AoD, 
	short-plural-form = AoD, 
	long = angle-of-departure, 
	long-plural-form = angles-of-departure,
	foreign-plural = {}
}
\DeclareAcronym{AoA}{
	short = AoA, 
	short-plural-form = AoA, 
	long = angle-of-arrival, 
	long-plural-form = angles-of-arrival,
	foreign-plural = {}
}
\definecolor{orange}{rgb}{0.8627,0.4314,0.1961}
\definecolor{grey}{rgb}{0.2196,0.2471,0.3176}
\definecolor{blue}{rgb}{0.0275,0.4431,0.5294}
\definecolor{green}{rgb}{0,0.6078,0.4392}
\definecolor{redd}{rgb}{0.887, 0.07, 0.27}
\definecolor{orange}{rgb}{0.8627,0.4314,0.1961}
\definecolor{grey}{rgb}{0.2196,0.2471,0.3176}
\definecolor{blue}{rgb}{0.0275,0.4431,0.5294}
\definecolor{green}{rgb}{0,0.6078,0.4392}
\definecolor{EurecomGrey}{RGB}{188 188 188}
\definecolor{EurecomBlue}{RGB}{0 162 225}
\definecolor{EurecomBlueDark}{RGB}{0 73 102}
\begin{document}

\bstctlcite{IEEEexample:BSTcontrol}
\title{User Coordination for Fast Beam Training \\ in FDD Multi-User Massive MIMO}
\author{
     \IEEEauthorblockN{
     	Flavio Maschietti, \emph{Member, IEEE},
     	G\'abor Fodor, \emph{Senior Member, IEEE},\\
     	David Gesbert, \emph{Fellow, IEEE}, 
     	Paul de Kerret, \emph{Member, IEEE}\\
	}
	\thanks{
		This work was in part supported by the ERC under the European Union's Horizon 2020
		research and innovation program (Agreement no. 670896 PERFUME).
		F. Maschietti, D. Gesbert and P. de Kerret are with EURECOM, Sophia-Antipolis, France 
		(e-mail: \{flavio.maschietti, david.gesbert, paul.dekerret\}@eurecom.fr).
		G. Fodor is with Ericsson Research, Kista, Sweden and KTH, Stockholm, Sweden (e-mail: gabor.fodor@ericsson.com).
		Part of this work has been carried out while F. Maschietti was visiting the radio
		department of Ericsson Research in Kista, Sweden, and
		has been published in the proceedings of the IEEE ISWCS $2019$~\cite{Maschietti2019c}. 
		The authors would like to thank A. Bazco-Nogueras at EURECOM and G. Klang at Ericsson Research 
		for their support.
	}
}
\IEEEtitleabstractindextext{
	
	\vspace{-0.77cm}
	\begin{abstract}

		Massive multiple-input multiple-output (mMIMO) communications are one of the enabling technologies 
		of $5$G and beyond networks. While prior work indicates that mMIMO networks employing time division
		duplexing have a significant capacity growth potential, deploying mMIMO in frequency division 
		duplexing (FDD) networks remains problematic. The two main difficulties in FDD networks are the
		scalability of the downlink reference signals and the overhead associated with the required uplink
		feedback for channel state information (CSI) acquisition. To address these 
		difficulties, most existing methods utilize assumptions on the radio environment such as channel sparsity 
		or angular reciprocity. In this work, we propose a novel cooperative method for a scalable and low-overhead 
		approach to FDD mMIMO under the so-called grid-of-beams architecture. The key idea behind our scheme lies in 
		the exploitation of the near-common signal propagation paths that are often found across several mobile users 
		located in nearby regions, through a coordination mechanism. In doing so, we leverage the recently specified
		device-to-device communications capability in 5G networks. Specifically, we design beam selection
		algorithms capable of striking a balance between CSI acquisition overhead and multi-user interference
		mitigation. The selection exploits statistical information, through so-called covariance shaping.
		Simulation results demonstrate the effectiveness of the proposed algorithms, which prove particularly
		well-suited to rapidly-varying channels with short coherence time.
		
	\end{abstract}
	\begin{IEEEkeywords}
		Beam selection, massive MIMO, FDD, covariance shaping, training overhead, device-to-device
	\end{IEEEkeywords}
	
}

\maketitle

\IEEEdisplaynontitleabstractindextext
		
\section{Introduction} \label{sec:Intro}

	\Ac{mMIMO} is expected to enable higher performance in $5$G and beyond networks through 
	increased data rate, more reliable and power-efficient radio links and reduced multi-user 
	interference~\cite{Larsson2014}. The \ac{mMIMO} concept originated in a \ac{TDD} setting,
	where exploiting the \emph{channel reciprocity} through low-overhead orthogonal \ac{UL} sounding led 
	to the design of near-optimal linear precoders~\cite{Larsson2014}. In contrast, \ac{DL} \acp{RS} and
	subsequent \ac{UL} feedback are required to estimate the \ac{DL} channels in \ac{FDD} mode, which makes 
	it considerably more challenging. In general, there exists a one-to-one correspondence between \acp{RS}
	and antenna elements. Therefore, training and feedback overhead are often associated with
	\emph{unfeasibility} in the \ac{FDD} \ac{mMIMO} regime, where too few resource elements would be in
	principle left for data transmission~\cite{Choi2014}.
	
	Nevertheless, operating in \ac{FDD} remains appealing to mobile operators for several reasons, including
	\emph{i)} most radio bands below $6$ GHz are paired \ac{FDD} bands, \emph{ii)} the \acp{BS} have higher
	transmit power available for the \acp{RS} than the \acp{UE}, \emph{iii)} overall deployment, maintenance
	and operation costs are reduced as fewer \acp{BS} are required in \ac{FDD} networks~\cite{Choi2014}.
	
	\subsection{Related Work}
	
		Several papers have proposed methods to cope with the overhead issue in \ac{FDD}
		\ac{mMIMO}. In this section, we provide a short overview of such existing works, which can be
		divided in four categories: \emph{i)} second-order statistics-based approaches, 
		\emph{ii)} \ac{CS}-based approaches, \emph{iii)} channel extrapolation-based approaches, and 
		\emph{iv)} \ac{GoB}-based approaches. 
		
		Among the approaches based on second-order statistics, the work such as~\cite{Yin2013, Adhikary2013}
		demonstrated that -- under \emph{strictly spatially-orthogonal} low-rank channel covariances -- it is
		possible to discriminate across interfering \acp{UE} with even correlated non-orthogonal pilot
		sequences, thus reducing the training overhead and the so-called pilot contamination~\cite{Larsson2014}.
		In~\cite{Adhikary2013}, the low-rankness allows beamforming with low-dimensional \ac{CSI} at the
		\ac{BS}. However, such condition is seldom experienced in practical scenarios where the channel
		components are spread over the angular domain and unlikely result in non-overlapping channel
		eigenspaces~\cite{Gao2015}. In general, the radio environment has an important role in coloring the
		channel covariance. The more recent work~\cite{Khalilsarai2019} has introduced a covariance-based
		precoding method to \emph{artificially} forge low-dimensional effective channels, \emph{independently}
		from the covariance structure. In the MIMO literature, such precoding methods have been known under 
		the term \emph{covariance shaping}~\cite{Newinger2015, Moghadam2017, Mursia2018}.

		On a different note, \ac{CS} techniques for estimating high-dimensional sparse channels with only 
		a few measurements have been known for decades~\cite{Bajwa2010} and have been applied to \ac{FDD} 
		\ac{mMIMO} as well~\cite{Rao2014, Gao2016, Shen2016, Dai2018}. The training overhead reduction in 
		all these works relies on the existence of an intrinsic sparse representation of the radio channels,  
		although this is not always found in networks operating at sub-$6$ GHz bands~\cite{Gao2015,
		Martinez2016}. Alternative \ac{CS}-based methods such as~\cite{Ding2018} capitalize on the 
		\emph{angular reciprocity} between the \ac{DL} and the \ac{UL} channels. In such approaches, 
		the spatial spectrum is estimated from \ac{UL} sounding and used to design the \ac{mMIMO} precoder, 
		under the reasonable assumption that the dominant \acp{AoD} are almost invariant over the spectrum 
		range separating the \ac{DL} and the \ac{UL} channels. Similar angle-based methods can be found 
		in~\cite{Luo2017, Zhang2018, Shen2018}. Nevertheless, the presumed angular correlation can decrease 
		in some practical scenarios, due to e.g. carrier aggregation, and lead to performance degradation.
		
		The intuition behind the channel extrapolation-based approaches is to infer the \ac{DL} \ac{CSI}
		from \ac{UL} pilot estimates~\cite{Rottenberg2020}. Therefore, the complete elimination of the 
		\ac{DL} training overhead is achieved with those approaches, as in a genuine TDD setting.
		The pioneering work in~\cite{Vasisht2016} develops a transform that can infer \ac{DL} parameters 
		such as path distance and gain from \ac{UL} channels measured at the \ac{BS}. In~\cite{Yang2018}, 
		the authors propose to use the \emph{super-resolution theory} for achieving the \ac{DL} channel 
		extrapolation. \Ac{ML}-based techniques have been proposed for channel extrapolation as well.
		In~\cite{Arnold2019}, the \ac{DL} \ac{CSI} is predicted from adjacent \ac{UL} bands through a
		deep neural network, while in~\cite{Yang2019} a complex-valued neural network is trained
		to approximate a deterministic \ac{UL}-\ac{DL} mapping function. Further efforts are made
		in~\cite{Choi2020}, where a \ac{DL} \ac{CSI} extrapolation technique using a neural network with 
		simplified input and output is proposed to reduce the learning time.
			
		The \ac{GoB} approach has recently raised much interest especially within the $3$GPP fora, due to its
		\emph{practical implementability}~\cite{3GPP2018c}. The idea is again to translate high-dimensional
		channels into low-dimensional representations. Indeed, according to this concept, reduced channel
		representations are obtained through a spatial transformation based on fixed transmit-receive 
		beams~\cite{Dahlman2018, Kim2013, 3GPP2018c}. Thus, the \acp{UE} see low-dimensional effective channels 
		which incorporate the beamforming vectors relative to the beams. In this case, there exists a 
		one-to-one correspondence between \acp{RS} and beams in the \ac{GoB} codebook~\cite{Kim2013, Dahlman2018, 
		3GPP2018c}. Therefore, estimating such effective channels reduces the training overhead, as it
		becomes proportional to the codebook size and independent from the number of antenna elements. 
		On the upside, the \ac{GoB} approach allows a low-dimensional representation even when no sparse
		representation of the \ac{DL} channels exists. However, the reduction in training (and feedback)
		overhead again entails a \emph{possibly severe} performance degradation~\cite{Flordelis2018}. 
		This is because the \ac{mMIMO} data precoder is optimized for reduced channel representations 
		which might not capture the prominent characteristics of the actual radio channels.
	
		In order to minimize such losses, an alternative consists in designing the \ac{GoB} with a larger
		number of beams, and then training a small subset of them which contains the most relevant channel
		components~\cite{Zirwas2016, Xiong2017, Flordelis2018}. The number of such components depends on the propagation
		environment, which is in general beyond the designer's control. An interesting twist to the story
		arises when multiple antennas are considered at the \ac{UE} side \emph{as well}, as we did 
		in~\cite{Maschietti2019c}. In fact, in that case, an extra degree of freedom is obtained by letting 
		the \acp{UE} steer energy into suitable spatial regions. In particular, statistical beamforming at the
		\ac{UE} side \emph{(covariance shaping)} can be used to excite desirable channel subspaces. 
		In conventional cases, UE-based beamforming focuses on regions where strong paths are located. 
		While this approach makes much sense in a single-user scenario, it fails to exploit all degrees of
		freedom in a \emph{multi-user} setting. In fact, ~\cite{Maschietti2019c} set forth the idea that beam
		selection at the UE side could be designed differently, with the aim to reduce the number of relevant
		components to estimate (so as to reduce the training overhead). In general, the decision on which beams
		to activate at both the \ac{BS} and \ac{UE} sides is not a trivial one, as several factors participate
		in the sum-rate optimization problem, including \emph{i)} the beamforming gain, 
		\emph{ii)} the multi-user interference and \emph{iii)} the training overhead. Furthermore, the beam
		selection is ideally a \emph{joint} decision problem across the \acp{UE} and the \ac{BS} and a
		coordination problem ensues. It is precisely the exploration of the novel \emph{three-way trade-off}
		arising from the factors \emph{i)}, \emph{ii)}, \emph{iii)} and underlying coordination mechanisms 
		that form the core ideas of the paper.
		
	\subsection{Contributions} 
		
		In this work, we propose a coordination mechanism between the \acp{UE} to facilitate statistical 
		beam selection for \ac{FDD} \ac{mMIMO} performance optimization under the \ac{GoB} design. 
		We consider multi-beam selection at the \ac{UE} side with multi-stream \ac{mMIMO} transmission based 
		on the \ac{BD} precoding. This paper shows that beam selection in \ac{FDD} \ac{mMIMO} involves an
		interesting \emph{trade-off} between \emph{i)} selecting the beams that capture the largest channel
		gains for each \ac{UE}, i.e. the most relevant channel components, and \emph{ii)} selecting the beams
		that might capture somewhat weaker paths but are \emph{common} to multiple \acp{UE}, so as to reduce 
		the training overhead. The essence of such trade-offs is captured in Fig.~\ref{fig:scen}, where \ac{UE}
		$2$ can capitalize on its weaker paths to reduce the number of activated beams at the \ac{BS} side.
		Nevertheless, focusing only on beams that are common to multiple \acp{UE} can lead to decreased
		\emph{spatial separability} among them and reduce the gains. To this end, we introduce the so-called
		\ac{GCMD}, a metric to evaluate the impact of covariance shaping on the average spatial separation 
		of the \acp{UE}.
		
		In order to design the long-term \ac{GoB} beamformers, we propose a suite of (decentralized) 
		coordinated beam selection algorithms exploring various \emph{complexity-performance} trade-offs. 
		In practice, the coordination between the \acp{UE} is enforced through a message exchange protocol
		exploiting low-rate \ac{D2D} communications. In this respect, we leverage from the forthcoming $3$GPP
		Release $16$, which is expected to support point-to-point \emph{side-links} that facilitate
		cooperative communications among neighboring \acp{UE} with low resource consumption~\cite{Fodor2016,
		3GPP2018d}. \enlargethispage{-\baselineskip}
		
		Our numerical results -- under the Winner II channel model -- show that beam-domain coordination improves
	 	the throughput performance of \ac{GoB}-based \ac{FDD} \ac{mMIMO} as compared with uncoordinated beam selection. 
	 	In particular, the highest gains over uncoordinated \acs{SNR}-based beam selection are experienced for 
	 	\emph{rapidly-varying} channels, such as the vehicular or the pedestrian ones. For such channels, fast beam 
	 	training is essential to cope with the short channel coherence time below $20$ ms.
	 	As a design lesson, we show that shaping the covariance matrices so as to favor the \emph{spatial separability} 
	 	(or \emph{orthogonality}) among the \acp{UE} -- such as in~\cite{Newinger2015, Mursia2018} -- is detrimental 
	 	in such fast channels. This is because such approach neglects the optimization of the pre-log factor related 
	 	to the training overhead, which has a substantial impact on the effective network throughput.
						
	\subsection{Notation}
		
		We use the following notation throughout the paper: bold lowercase letters are reserved for vectors,
		while bold uppercase letters for matrices. 
		The conjugate operator is denoted with $\text{conj}\left(\cdot\right)$, the transpose operator is
		denoted with $\left(\cdot\right)^{\text{T}}$, while the Hermitian transpose operator is 
		$\left(\cdot\right)^{\text{H}}$. The expectation operator over the random variable $X$ is denoted with 
		$\mathbb{E}_X\left[\cdot\right]$. $\text{Tr}\left(\cdot\right)$ denotes the trace operator; 
		$\det\left(\cdot\right)$ denotes the determinant operator, while $\text{rank}\left(\cdot\right)$
		denotes the rank operator. We denote with $\text{col}\left(\cdot\right)$ (resp. 
		$\text{row}\left(\cdot\right)$) the set containing the columns (resp. the rows) of a matrix.
		The operator $\text{card}\left(\cdot\right)$ returns the number of elements in a set.
		The operator $\text{vec}\left(\cdot\right)$ is the linear transformation which converts a matrix 
		into a column vector, and $\otimes$ denotes the Kronecker product.
		All the sets are denoted with calligraphic notation. Furthermore, we use the
		$\bar{\left(\cdot\right)}$ notation to distinguish the data/digital beamformers from the beam/analog
		ones. The data beamformers are applied over an \emph{effective} channel. The same (bar) notation is used
		to denote the \emph{effective} channel and its second-order statistics.

\section{System Model and Problem Formulation}
	
	For the sake of exposition, this paper ignores inter-cell interference effects and focuses on training 
	and interference within a given cell. Consider a single cell \ac{mMIMO} \ac{BS} equipped with
	$N_{\text{BS}} \gg 1$ antennas which serves (in downlink transmission) $K \ll N_{\text{BS}}$ \acp{UE} 
	with $N_{\text{UE}}$ antennas each. We assume that the \ac{BS} uses linear precoding techniques to 
	process the signals before transmitting to all \acp{UE}. We consider \ac{FDD} operation, i.e. 
	the \ac{DL} and the \ac{UL} channels are not reciprocal.
	
	Before we detail our mathematical model, let us focus on the toy example shown in Fig.~\ref{fig:scen},
	which carries the essence of the intuition behind the proposed trade-off between \emph{i) energy, 
	ii) spatial separability}, and \emph{iii) training overhead}. We explore this trade-off through a
	coordinated beam selection among the \acp{UE}, made \emph{before} the actual training of the beams. 
	
	Consider Fig.~\ref{fig:scen} and the problem of which beams should each \ac{UE} activate and how it 
	affects which beams are lit up at the \ac{BS} and the subsequent training overhead. In a conventional
	strategy, uncoordinated max-\acs{SNR} based beam selection would collect the highest amount of energy 
	but would result in $M_{\text{BS}} = 5$ beams to train at the \ac{BS}. Instead, \ac{UE} $2$ can
	opt for the weaker (non-bold light blue beams) $\mathbf{w}_{2, 1}$ and $\mathbf{w}_{2, 3}$ while \ac{UE} 
	$1$ continues to activate its three beams. Note that this strategy collects less energy, yet it reduces
	the training overhead by $40\%$ as the number of activated beams at the \ac{BS} falls to $M_{\text{BS}} 
	= 3$, since beams $\mathbf{v}_1$, $\mathbf{v}_2$ and $\mathbf{v}_5$ at the \ac{BS} side serve both
	\ac{UE} $1$ and \ac{UE} $2$ and \emph{maintain separability} between them. In the rest of the paper, 
	we are interested in designing a coordinated beam selection algorithm that optimizes this trade-off 
	from a throughput perspective. We introduce now our mathematical model.
					
	\subsection{Channel Estimation with Grid-of-Beams} \label{sec:Channel_Estimation}
		
		We assume that the \ac{GoB} approach is exploited at both the \ac{BS} and \ac{UE} sides. 
		Let us denote the beam codebooks as $\mathcal{B}_{\text{BS}}$ and $\mathcal{B}_{\text{UE}}$, where 
		$\text{card}\left(\mathcal{B}_{\text{BS}}\right) = B_{\text{BS}}$ and 
		$\text{card}\left(\mathcal{B}_{\text{UE}}\right) = B_{\text{UE}}$, used for \ac{GoB} precoding 
		and combining, respectively, as follows:
		\begin{equation}
			\mathcal{B}_{\text{BS}} \triangleq \{ \mathbf{v}_1, \dots, \mathbf{v}_{B_{\text{BS}}} \}, \qquad
			\mathcal{B}_{\text{UE}} \triangleq \{ \mathbf{w}_1, \dots, \mathbf{w}_{B_{\text{UE}}} \},
		\end{equation}
		where $\mathbf{v}_v \in \mathbb{C}^{N_{\text{BS}} \times 1}$, $v \in \{ 1, \dots, B_{\text{BS}} \}$, 
		denotes the $v$-th beamforming vector in $\mathcal{B}_{\text{BS}}$, and 
		$\mathbf{w}_w \in \mathbb{C}^{N_{\text{UE}} \times 1}$, $w \in \{ 1, \dots, B_\text{UE} \}$, 
		denotes the $w$-th beamforming vector in $\mathcal{B}_{\text{UE}}$. 
	 	To lighten the notation, we assume that $\mathcal{B}_{\text{UE}}$ is the same across all the 
	 	UEs\footnote{The algorithms we present in Section~\ref{sec:Dec_Coo_BS_Algos} can be \emph{easily} generalized 
	 	to different codebooks $\mathcal{B}_{\text{UE}}^k ~\forall k \in \{ 1, \dots, K \}$ at the \ac{UE} side.}.
	 	
	 	\begin{figure}[h]
			\centering
			\resizebox{11.57cm}{!}{
			\begin{tikzpicture}
				\draw[orange, densely dashed, line width=1.78] (4,1.74) -- (-.37,3.37);
				\draw[orange, densely dashed, line width=1.78] (-4.8, 1.15) -- (-.37,3.37);
				\draw[orange, densely dashed] (4,1.74) -- (.16,-3.66);
				\draw[orange, densely dashed] (-4.8, 1.15) -- (.16,-3.66);
				\draw[orange, densely dashed, line width=1.78] (4,1.74) -- (-3.5,-1.66);
				\draw[orange, densely dashed, line width=1.78] (-4.8, 1.15) -- (-3.5,-1.66);
				\draw[orange, densely dashed, line width=1.78] (4,1.74) -- (-4.8,1.15);
				\draw[blue, densely dotted] (5.3,-1.78) -- (-.55,3);
				\draw[blue, densely dotted] (-4.8, 1.15) -- (-.55,3);
				\draw[blue, densely dotted, line width=1.78] (5.3,-1.78) -- (2.96,-3.16);
				\draw[blue, densely dotted, line width=1.78] (-4.8, 1.15) -- (2.96,-3.16);
				\draw[blue, densely dotted] (5.3,-1.78) -- (-2.82,-2.36);
				\draw[blue, densely dotted] (-4.8, 1.15) -- (-2.82,-2.36);
				\draw[blue, densely dotted, line width=1.78] (5.3,-1.78) -- (-4.8,1.15);
				\filldraw[black!71, xshift=1cm, rotate around={26.77:(-4.8-1, 1.15)}] 
				(-4.8, 1.15) ellipse (1cm and .1cm) 
				node[above right, rotate=26.77, xshift=.87cm, yshift=-.077cm]{\small $\mathbf{v}_1$};
				\filldraw[black!71, xshift=1cm, rotate around={4:(-4.8-1, 1.15)}] 
				(-4.8, 1.15) ellipse (1cm and .1cm) 
				node[above right=0cm and 0cm, rotate=4, xshift=.87cm, yshift=-.077cm]{\small $\mathbf{v}_2$};
				\filldraw[black!71, xshift=1cm, rotate around={-15.7:(-4.8-1, 1.15)}] 
				(-4.8, 1.15) ellipse (1cm and .1cm) 
				node[above right=0cm and 0cm, rotate=-15.7, xshift=.87cm, yshift=-.077cm]
				{\small $\mathbf{v}_3$};
				\filldraw[black!71, xshift=1cm, rotate around={-28.7:(-4.8-1, 1.15)}] 
				(-4.8, 1.15) ellipse (1cm and .1cm) 
				node[above right=0cm and 0cm, rotate=-28.7, xshift=.87cm, yshift=-.077cm]
				{\small $\mathbf{v}_4$};
				\filldraw[black!71, xshift=1cm, rotate around={-64.77:(-4.8-1, 1.15)}] 
				(-4.8, 1.15) ellipse (1cm and .1cm) 
				node[above right=0cm and 0cm, rotate=-64.77, xshift=.87cm, yshift=-.077cm]
				{\small $\mathbf{v}_5$};
				\filldraw[orange, xshift=-1cm, rotate around={-20.27:(4+1, 1.74)}] 
				(4, 1.74) ellipse (1cm and .1cm)
				node[above left, rotate=-20.27, xshift=-.47cm, yshift=-.077cm]{\small $\mathbf{w}_{1, 1}$};
				\filldraw[orange, xshift=-1cm, rotate around={4:(4+1, 1.74)}] (4, 1.74) ellipse (1cm and .1cm)
				node[above left, rotate=4, xshift=-.47cm, yshift=-.077cm]{\small $\mathbf{w}_{1, 2}$};
				\filldraw[orange, xshift=-1cm, rotate around={24.27:(4+1, 1.74)}] 
				(4, 1.74) ellipse (1cm and .1cm)
				node[above left, rotate=24.27, xshift=-.47cm, yshift=-.077cm]{\small $\mathbf{w}_{1, 3}$};
				\filldraw[blue, xshift=-1cm, opacity=0.27, rotate around={-39.5:(5.3+1, -1.78)}] 
				(5.3, -1.78) ellipse (1cm and .1cm)
				node[opacity=1, above left, rotate=-39.5, xshift=-.47cm, yshift=-.077cm]
				{\small $\mathbf{w}_{2, 1}$};
				\filldraw[blue, xshift=-1cm, rotate around={-16.5:(5.3+1, -1.78)}] 
				(5.3, -1.78) ellipse (1cm and .1cm)
				node[above left, rotate=-16.5, xshift=-.47cm, yshift=-.077cm]{\small $\mathbf{w}_{2, 2}$};
				\filldraw[blue, xshift=-1cm, opacity=0.27, rotate around={4.11:(5.3+1, -1.78)}] 
				(5.3, -1.78) ellipse (1cm and .1cm)
				node[opacity=1, above left, rotate=4.11, xshift=-.47cm, yshift=-.077cm]
				{\small $\mathbf{w}_{2, 3}$};
				\filldraw[blue, xshift=-1cm, rotate around={30.7:(5.3+1, -1.78)}] 
				(5.3, -1.78) ellipse (1cm and .1cm)
				node[above left, rotate =30.7, xshift=-.47cm, yshift=-.077cm]{\small $\mathbf{w}_{2, 4}$};
				\filldraw[black] (4,1.37) rectangle +(.3,.3);
				\filldraw[black] (4,1.37) rectangle +(.05,.37);
				\draw (4.21,1.16) node {\small{UE $1$}};
				\filldraw[black] (5.3,-2.15) rectangle +(.3,.3);
				\filldraw[black] (5.3,-2.15) rectangle +(.05,.37);
				\draw (5.47,-2.36) node {\small{UE $2$}};
				\draw[very thick] (-5,0.3) -- (-4.8,1.15);
				\draw[very thick] (-4.8,1.15) -- (-4.6,.3);
				\draw[thick] (-4.88,.78) -- (-4.73,.88);
				\draw[thick] (-4.93,.55) -- (-4.7,.7);
				\draw (-4.8, .07) node {BS};
				\filldraw[black] (-4.8,1.15) circle (.1cm); 
				\draw[orange] (-.27,3.1) circle (.53cm) 
				node[black, above right, xshift=.51cm, yshift=.17cm]{\footnotesize{Scatterers}};;
				\filldraw[orange] (-.27, 3.1) circle (.05cm);
				\filldraw[orange] (.03, 3.25) rectangle +(.1, .1);
				\filldraw[orange] (-.42, 3.32) rectangle +(.1, .1);
				\draw[blue] (-.521,3.2) circle (.53cm);
				\filldraw[blue] (-.521, 3.2) circle (.05cm);
				\filldraw[blue] (-.97, 3.17) rectangle +(.1, .1);
				\filldraw[blue] (-.6, 2.95) rectangle +(.1, .1);
				\draw[orange] (.28,-3.35) circle (.53cm);
				\filldraw[orange] (.28, -3.35) circle (.05cm);
				\filldraw[orange] (.11, -3.71) rectangle +(.1, .1);
				\filldraw[orange] (.35, -3.11) rectangle +(.1, .1);
				\draw[orange] (-3.77,-1.85) circle (.53cm);
				\filldraw[orange] (-3.77, -1.85) circle (.05cm);
				\filldraw[orange] (-3.55, -1.71) rectangle +(.1, .1);
				\filldraw[orange] (-4, -1.58) rectangle +(.1, .1);
				\draw[blue] (2.64,-3.11) circle (.53cm);
				\filldraw[blue] (2.64, -3.11) circle (.05cm);
				\filldraw[blue] (2.91, -3.21) rectangle +(.1, .1);
				\filldraw[blue] (2.88, -3.44) rectangle +(.1, .1);
				\draw[blue] (-3.05,-2.55) circle (.53cm) 
				node[black, above right, xshift=.51cm, yshift=-.57cm]{\footnotesize{Scatterers}};;
				\filldraw[blue] (-3.05, -2.55) circle (.05cm);
				\filldraw[blue] (-3.31, -2.71) rectangle +(.1, .1);
				\filldraw[blue] (-2.87, -2.41) rectangle +(.1, .1);
			\end{tikzpicture}
			}
			\caption{Intuitive toy example with $K = 2$ \acp{UE} highlighting the trade-off between 
			\emph{i) energy} (i.e. activating strong paths), \emph{ii) spatial separability}, and 
			\emph{iii) training overhead} (i.e. lighting up a smaller set of beams at the \ac{BS}). 
			The blue and orange circles represent the multi-path clusters (or scatterers), which might be shared 
			among some \acp{UE}. Stronger paths are marked in bold.}
			\label{fig:scen}
		\end{figure}
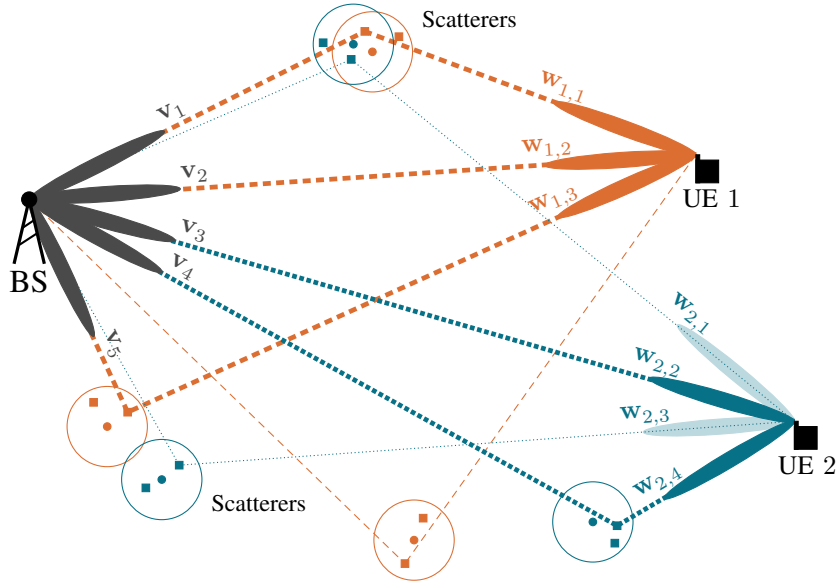
		
		A \ac{NR}-like \acs{OFDM}-based modulation scheme is assumed~\cite{Dahlman2018}. 
		We consider a resource grid consisting of $T$ resource elements.  Among those, $\tau M_{\text{BS}}$
		are allocated to \acp{RS}, and $T - \tau M_{\text{BS}}$ to data, where $M_{\text{BS}}$ denotes
		the number of beams that are trained among the ones in $\mathcal{B}_{\text{BS}}$ and $\tau$ is the
		duration measured in \emph{number of \ac{OFDM} symbols} of their associated \acp{RS} (one \ac{RS} 
		for each beam~\cite{Dahlman2018}, refer to Fig.~\ref{fig:ap_mapping}). 
		The received training signal $\mathbf{Y}_k \in \mathbb{C}^{M_{\text{UE}} \times \tau}$ at the 
		$k$-th UE, where $M_{\text{UE}}$ is the number of activated beams at the \ac{UE} side, can be
		expressed as
		\begin{equation} \label{eq:Y_k}
			\mathbf{Y}_k = 
			\rho \mathbf{W}^{\text{H}}_k \mathbf{H}_k \mathbf{V} \mathbf{S} 
			+ \mathbf{W}^{\text{H}}_k \mathbf{N}_k, 
			\qquad \forall k \in \{1, \dots, K \}
		\end{equation}
		where $\mathbf{S} \in \mathbb{C}^{M_{\text{BS}} \times \tau}$ contains the orthogonal (known)
		\acp{RS}, with $\mathbf{S} \mathbf{S}^{\text{H}} = \mathbf{I}_{M_{\text{BS}}}$, 
		$\mathbf{V} \triangleq \big[ \mathbf{v}_1 \dots \mathbf{v}_{M_{\text{BS}}} \big] \in 
		\mathbb{C}^{N_{\text{BS}} \times M_{\text{BS}}}$ is the normalized training (\ac{GoB}) precoder 
		common to all the \acp{UE}, $\mathbf{H}_k \in \mathbb{C}^{N_{\text{UE}} \times N_{\text{BS}}}$ is 
		the channel between the \ac{BS} and the $k$-th \ac{UE}, with $\text{vec}\left( \mathbf{H}_k \right)
		\sim \mathcal{CN} \left( \mathbf{0}, \mathbf{\Sigma}_k \right)$ and $\mathbf{\Sigma}_k \in
		\mathbb{C}^{N_{\text{BS}} N_{\text{UE}} \times N_{\text{BS}} N_{\text{UE}}}$ the respective channel
		covariance (assumed to be known), and $\mathbf{W}_k \triangleq 
		\big[ \mathbf{w}_{k, 1} \dots \mathbf{w}_{k, M_{\text{UE}}} \big] 
		\in \mathbb{C}^{N_\text{UE} \times M_\text{UE}}$ is the training combiner at the $k$-th \ac{UE}. 
		Note that both $\mathbf{V}$ and $\mathbf{W}_k ~\forall k$ contain beamformers belonging to the
		predefined \ac{GoB} codebooks $\mathcal{B}_{\text{BS}}$ and $\mathcal{B}_{\text{UE}}$. 
		The matrix $\mathbf{N}_k \in \mathbb{C}^{N_{\text{UE}} \times \tau}$, whose elements are i.i.d. 
		$\mathcal{CN}\left(0, \sigma_n^2\right)$, denotes the receiver noise at the $k$-th \ac{UE}, while 
		$\rho \triangleq \sqrt{\frac{P}{T}}$, where $P$ is the total transmit power
		available at the \ac{BS} in the considered coherent (over both time and sub-carriers) frame.
		
		Following the training stage, the \acp{UE} are able to estimate their instantaneous \ac{GoB} effective channels, defined as
		\begin{equation}
			\bar{\mathbf{H}}_k \triangleq \mathbf{W}^{\text{H}}_k \mathbf{H}_k \mathbf{V} \in \mathbb{C}^{M_{\text{UE}} 
			\times M_{\text{BS}}}, \qquad \forall k \in \{1, \dots, K \}
		\end{equation}
		and whose covariance is denoted with 
		$\bar{\mathbf{\Sigma}}_k \in \mathbb{C}^{M_{\text{BS}} M_{\text{UE}} \times M_{\text{BS}} M_{\text{UE}}}, 
		~\forall k \in \{1, \dots, K \}$.
		\begin{remark}
			With respect to the channel estimation, we assume that each \ac{UE} has (at least) $M_{\text{UE}}$ independent
			\acs{RF} chains available. In particular, such assumption implies that each \ac{UE} can process the incoming
			training signal at the receive beams $\big[ \mathbf{w}_{k, 1} \dots \mathbf{w}_{k, M_{\text{UE}}} \big] $ in parallel. \qed
		\end{remark}
		
		We introduce now the block diagonal matrix $\mathbf{W} \in \mathbb{C}^{K N_{\text{UE}} \times K M_{\text{UE}}}$ containing 
		all the \ac{GoB} combiners $\mathbf{W}_k ~\forall k \in \{ 1, \dots, K \}$, as follows:
		\begin{equation} \label{eq:overall_GoB_combiner}
			\mathbf{W} \triangleq \begin{bmatrix}
				\mathbf{W}_1 & & & \mathbf{0} \\
				& & \ddots & \\
				\mathbf{0} & & & \mathbf{W}_K
			\end{bmatrix}.
		\end{equation}
		The entire multi-user effective channel matrix $\bar{\mathbf{H}} \in \mathbb{C}^{K M_{\text{UE}} \times M_{\text{BS}}}$ can 
		then be expressed as
		\begin{equation}
			\bar{\mathbf{H}} \triangleq \mathbf{W}^{\text{H}} \mathbf{H} \mathbf{V},
		\end{equation}
		where $\mathbf{H} \triangleq \big[ \mathbf{H}_1^{\text{T}} \dots \mathbf{H}_K^{\text{T}} \big]^{\text{T}} 
		\in \mathbb{C}^{KN_{\text{UE}} \times N_{\text{BS}}}$ is the overall multi-user channel.
		
		To close the \ac{CSI} acquisition loop, each \ac{UE} feeds back its estimated effective channel to the \ac{BS}. 
		As a consequence, the \ac{BS} obtains an estimate $\hat{\bar{\mathbf{H}}} \in \mathbb{C}^{K M_{\textnormal{UE}} 
		\times M_{\textnormal{BS}}}$ of the multi-user effective channel $\bar{\mathbf{H}}$ which can be used to design 
		the \ac{mMIMO} data precoder. In this work, we assume that the UEs use the popular \ac{LMMSE} estimator, 
		for which the effective channel estimate $\hat{\bar{\mathbf{H}}}_k \in \mathbb{C}^{M_{\textnormal{UE}} \times M_{\textnormal{BS}}}$
		at the $k$-th \ac{UE} reads as follows~\cite{Kay1993}:
		\begin{equation} \label{eq:mmse_estimate}
			\textnormal{vec}\big(\hat{\bar{\mathbf{H}}}_k\big) = \rho \bar{\mathbf{\Sigma}}_k \mathbf{A}^{\textnormal{H}} 
			\left( \rho^2 \mathbf{A}\bar{\mathbf{\Sigma}}_k\mathbf{A}^{\textnormal{H}} + 
			\sigma_n^2\mathbf{\Gamma}\mathbf{\Gamma}^{\textnormal{H}} \right)^{-1} 
			\textnormal{vec}\left( \mathbf{Y}_k \right),
		\end{equation}
		where $\mathbf{A} \triangleq \left(\mathbf{S}^{\textnormal{T}} \otimes \mathbf{I}_{M_{\textnormal{UE}}} \right)
		\in \mathbb{C}^{\tau M_{\textnormal{UE}} \times M_{\textnormal{BS}} M_{\textnormal{UE}}}$ and
		$\mathbf{\Gamma} \triangleq \left( \mathbf{I}_{\tau} \otimes \mathbf{W}^{\textnormal{H}}_k \right) \in
		\mathbb{C}^{\tau M_{\textnormal{UE}} \times \tau N_{\textnormal{UE}}}$.

		The related channel estimation error vector $\mathbf{e}_k = 
		\textnormal{vec}\big(\bar{\mathbf{H}}_k\big) - \textnormal{vec}\big(\hat{\bar{\mathbf{H}}}_k\big)$ at the $k$-th \ac{UE} 
		has zero mean elements~\cite{Kay1993} and associated covariance matrix as given in the next lemma.
		\vspace{-0.21cm}
		\begin{lemma}
			The covariance $\mathbf{\Sigma}_{\mathbf{e}_k} \in \mathbb{C}^{M_{\textnormal{BS}} M_{\textnormal{UE}}\times 
			M_{\textnormal{BS}} M_{\textnormal{UE}}}$ of the \ac{LMMSE} channel estimation error at the $k$-th \ac{UE} can be expressed as follows:
			\begin{equation}  \label{eq:lem_lmmse}
				\mathbf{\Sigma}_{\mathbf{e}_k} = \left( 
				\bar{\mathbf{\Sigma}}_k^{-1} + \kappa
				\mathbf{A}^{\textnormal{H}} \left( \mathbf{\Gamma} \mathbf{\Gamma}^{\textnormal{H}} \right)^{-1} \mathbf{A}
				\right)^{-1},
			\end{equation}
			having defined the scalar $\kappa \triangleq \rho^2 / \sigma_n^2$.
		\end{lemma}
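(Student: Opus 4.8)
The plan is to recast the matrix observation~\eqref{eq:Y_k} as a standard linear Gaussian model by vectorization, and then read off the error covariance from the classical Bayesian (Gauss--Markov) \ac{LMMSE} formula, converting its observation form into the claimed information form~\eqref{eq:lem_lmmse} via the matrix inversion lemma.

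First I would apply the identity $\text{vec}(\mathbf{ABC}) = (\mathbf{C}^{\text{T}} \otimes \mathbf{A})\,\text{vec}(\mathbf{B})$ to the signal and noise terms of~\eqref{eq:Y_k} separately. Grouping the signal term as $(\mathbf{W}^{\text{H}}_k \mathbf{H}_k \mathbf{V})\mathbf{S} = \bar{\mathbf{H}}_k\mathbf{S}$ gives $\text{vec}(\rho\bar{\mathbf{H}}_k\mathbf{S}) = \rho(\mathbf{S}^{\text{T}} \otimes \mathbf{I}_{M_{\text{UE}}})\,\text{vec}(\bar{\mathbf{H}}_k) = \rho\mathbf{A}\,\text{vec}(\bar{\mathbf{H}}_k)$, while $\text{vec}(\mathbf{W}^{\text{H}}_k\mathbf{N}_k) = (\mathbf{I}_{\tau} \otimes \mathbf{W}^{\text{H}}_k)\,\text{vec}(\mathbf{N}_k) = \mathbf{\Gamma}\,\text{vec}(\mathbf{N}_k)$. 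This produces the linear model
\begin{equation}
\text{vec}(\mathbf{Y}_k) = \rho\mathbf{A}\,\text{vec}(\bar{\mathbf{H}}_k) + \mathbf{\Gamma}\,\text{vec}(\mathbf{N}_k),
\end{equation}
in which $\text{vec}(\bar{\mathbf{H}}_k) \sim \mathcal{CN}(\mathbf{0}, \bar{\mathbf{\Sigma}}_k)$ is the unknown and the effective noise $\mathbf{\Gamma}\,\text{vec}(\mathbf{N}_k)$ is zero-mean with covariance $\sigma_n^2\mathbf{\Gamma}\mathbf{\Gamma}^{\text{H}}$, since the entries of $\mathbf{N}_k$ are i.i.d. $\mathcal{CN}(0,\sigma_n^2)$. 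The parameter and the noise are independent and jointly Gaussian, so the posterior mean is linear and coincides with the estimator already displayed in~\eqref{eq:mmse_estimate}.

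Next I would invoke the standard error covariance of this model in observation form, $\mathbf{\Sigma}_{\mathbf{e}_k} = \bar{\mathbf{\Sigma}}_k - \rho^2\bar{\mathbf{\Sigma}}_k\mathbf{A}^{\text{H}}(\rho^2\mathbf{A}\bar{\mathbf{\Sigma}}_k\mathbf{A}^{\text{H}} + \sigma_n^2\mathbf{\Gamma}\mathbf{\Gamma}^{\text{H}})^{-1}\mathbf{A}\bar{\mathbf{\Sigma}}_k$, whose gain matrix $\rho\bar{\mathbf{\Sigma}}_k\mathbf{A}^{\text{H}}(\rho^2\mathbf{A}\bar{\mathbf{\Sigma}}_k\mathbf{A}^{\text{H}} + \sigma_n^2\mathbf{\Gamma}\mathbf{\Gamma}^{\text{H}})^{-1}$ coincides with the matrix multiplying $\text{vec}(\mathbf{Y}_k)$ in~\eqref{eq:mmse_estimate}, ensuring consistency with the stated estimator. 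The last step applies the matrix inversion lemma $(\mathbf{P}^{-1} + \mathbf{C}^{\text{H}}\mathbf{R}^{-1}\mathbf{C})^{-1} = \mathbf{P} - \mathbf{P}\mathbf{C}^{\text{H}}(\mathbf{C}\mathbf{P}\mathbf{C}^{\text{H}} + \mathbf{R})^{-1}\mathbf{C}\mathbf{P}$ with $\mathbf{P} = \bar{\mathbf{\Sigma}}_k$, $\mathbf{C} = \rho\mathbf{A}$ and $\mathbf{R} = \sigma_n^2\mathbf{\Gamma}\mathbf{\Gamma}^{\text{H}}$: its right-hand side is exactly the observation form above, while its left-hand side equals $(\bar{\mathbf{\Sigma}}_k^{-1} + \rho^2\sigma_n^{-2}\mathbf{A}^{\text{H}}(\mathbf{\Gamma}\mathbf{\Gamma}^{\text{H}})^{-1}\mathbf{A})^{-1}$; substituting $\kappa = \rho^2/\sigma_n^2$ recovers~\eqref{eq:lem_lmmse} verbatim.

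Since the algebra is routine, the only point demanding care is the set of invertibility conditions supporting the information form. The noise factor evaluates to $\mathbf{\Gamma}\mathbf{\Gamma}^{\text{H}} = \mathbf{I}_{\tau} \otimes (\mathbf{W}^{\text{H}}_k\mathbf{W}_k)$, which is nonsingular exactly when $\mathbf{W}_k$ has full column rank, i.e. when the $M_{\text{UE}}$ activated combining beams are linearly independent; this is naturally satisfied because distinct \ac{GoB} beamformers are selected. Likewise, writing $\bar{\mathbf{\Sigma}}_k^{-1}$ presumes $\bar{\mathbf{\Sigma}}_k$ is positive definite. I would therefore record these as standing assumptions (and, in a rank-deficient corner case, fall back on the observation-form covariance or on Moore--Penrose inverses restricted to the active subspace), after which all the identities above hold without modification.
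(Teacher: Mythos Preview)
Your proposal is correct and follows essentially the same route as the paper: both start from the standard observation-form \ac{LMMSE} error covariance $\bar{\mathbf{\Sigma}}_k - \rho^2\bar{\mathbf{\Sigma}}_k\mathbf{A}^{\text{H}}(\rho^2\mathbf{A}\bar{\mathbf{\Sigma}}_k\mathbf{A}^{\text{H}} + \sigma_n^2\mathbf{\Gamma}\mathbf{\Gamma}^{\text{H}})^{-1}\mathbf{A}\bar{\mathbf{\Sigma}}_k$ and reduce it to the information form via the Woodbury identity. Your version is in fact slightly cleaner---you apply Woodbury once to the full expression, whereas the paper first rewrites the inner inverse and then factors---and your explicit discussion of the invertibility of $\mathbf{\Gamma}\mathbf{\Gamma}^{\text{H}}=\mathbf{I}_\tau\otimes(\mathbf{W}_k^{\text{H}}\mathbf{W}_k)$ and of $\bar{\mathbf{\Sigma}}_k$ is a welcome addition that the paper leaves implicit.
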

		\begin{proof}
			From the definition of the error covariance 
			$\mathbf{\Sigma}_{\mathbf{e}_k} \triangleq \mathbb{E}_{\mathbf{H}_k} \left[ \mathbf{e}_k \mathbf{e}_k^{\textnormal{H}}\right]$, we have:
			\begin{align}
				\mathbf{\Sigma}_{\mathbf{e}_k} &=
				\mathbf{\bar{\Sigma}}_k - \rho \bar{\mathbf{\Sigma}}_k \mathbf{A}^{\textnormal{H}} 
				\left( \rho^2 \mathbf{A}\bar{\mathbf{\Sigma}}_k\mathbf{A}^{\textnormal{H}} + 
				\sigma_n^2\mathbf{\Gamma}\mathbf{\Gamma}^{\textnormal{H}} \right)^{-1}
				\rho \mathbf{A} \mathbf{\bar{\Sigma}}_k \nonumber \\ 
				&\stackrel{(a)}{=} \mathbf{\bar{\Sigma}}_k - \left( 
				\bar{\mathbf{\Sigma}}_k^{-1} + 
				\kappa \mathbf{A}^{\textnormal{H}} \left( \mathbf{\Gamma} \mathbf{\Gamma}^{\textnormal{H}} \right)^{-1} \mathbf{A} 
				\right)^{-1} \kappa \mathbf{A}^{\textnormal{H}} \left( \mathbf{\Gamma} \mathbf{\Gamma}^{\textnormal{H}} \right)^{-1} 
				\mathbf{A} \bar{\mathbf{\Sigma}}_k,
			\end{align}
			where $(a)$ is due to the \emph{Woodbury identity}.
			We can then rewrite the error covariance as
			\begin{align}
				\mathbf{\Sigma}_{\mathbf{e}_k} &= 
				\mathbf{\bar{\Sigma}}_k - \mathbf{D}^{-1} 
				\kappa \mathbf{A}^{\textnormal{H}} \left( \mathbf{\Gamma} \mathbf{\Gamma}^{\textnormal{H}} \right)^{-1} 
				\mathbf{A} \bar{\mathbf{\Sigma}}_k \nonumber \\
				&= \mathbf{D}^{-1} \left( \mathbf{D} \mathbf{\bar{\Sigma}}_k - \kappa \mathbf{A}^{\textnormal{H}} 
				\left( \mathbf{\Gamma} \mathbf{\Gamma}^{\textnormal{H}} \right)^{-1} \mathbf{A} \bar{\mathbf{\Sigma}}_k \right) \nonumber \\
				&= \mathbf{D}^{-1},
			\end{align}
			where $\mathbf{D} = \left( \bar{\mathbf{\Sigma}}_k^{-1} + \kappa \mathbf{A}^{\textnormal{H}} 
			\left( \mathbf{\Gamma} \mathbf{\Gamma}^{\textnormal{H}} \right)^{-1} \mathbf{A} \right)$, as given in \eqref{eq:lem_lmmse}.
		\end{proof}

	\subsection{Data Signal Model} \label{sec:data_signal_model}
		
		The data transmission phase (over the effective channels) follows the training and \ac{UE} feedback stages. 
		Let us consider a single resource element, and denote with 
		$\mathbf{x}_k \triangleq \big[ x_{1, 1} \dots  x_{1, L_k} \big] \in \mathbb{C}^{L_k \times 1}$ the data vector transmitted to the $k$-th \ac{UE}.
		Thus, $\mathbf{x} \triangleq \big[ \mathbf{x}_1 \dots  \mathbf{x}_K \big] \in \mathbb{C}^{L \times 1}$ is the overall data vector, where 
		$L \triangleq \sum_k L_k$ is the total number of transmitted data symbols and $\mathbb{E}[\mathbf{x} \mathbf{x}^{\text{H}}] = \mathbf{I}_L$.
		The received data signal $\hat{\mathbf{x}}_k$ at the $k$-th UE can be expressed as
		\begin{align} \label{eq:x_hat_k}
			\hat{\mathbf{x}}_k &= \rho \bar{\mathbf{W}}^{\text{H}}_k \bar{\mathbf{H}}_k \bar{\mathbf{V}} \mathbf{x} + 
			\bar{\mathbf{W}}^{\text{H}}_k \bar{\mathbf{n}}_k, \qquad \forall k \in \{1, \dots, K \} \nonumber \\
			&= \rho \bar{\mathbf{W}}^{\text{H}}_k \bar{\mathbf{H}}_k \bar{\mathbf{V}}_k \mathbf{x}_k +
			\sum_{j \neq k} \rho \bar{\mathbf{W}}^{\text{H}}_k \bar{\mathbf{H}}_k \bar{\mathbf{V}}_j \mathbf{x}_j +
			\bar{\mathbf{W}}^{\text{H}}_k \bar{\mathbf{n}}_k,
		\end{align}
		where $\bar{\mathbf{V}} \triangleq \big[ \bar{\mathbf{V}}_1 \dots \bar{\mathbf{V}}_K \big] \in \mathbb{C}^{M_{\text{BS}} \times L}$ is the
		normalized \ac{mMIMO} (digital) data precoder, with $\bar{\mathbf{V}}_k \triangleq \big[ \bar{\mathbf{v}}_{k, 1} \dots \bar{\mathbf{v}}_{k, L_k}
		\big]$, $\bar{\mathbf{H}}_k$ is the effective channel between the \ac{BS} and the $k$-th \ac{UE} after \ac{GoB} precoding and combining, 
		$\bar{\mathbf{W}}_k \in \mathbb{C}^{M_{\text{UE}} \times L_k}$ is the \ac{mMIMO} (digital) data combiner at the $k$-th \ac{UE},
		and $\bar{\mathbf{n}}_k \triangleq \mathbf{W}^{\text{H}}_k \mathbf{n}_k \in \mathbb{C}^{M_{\text{UE}} \times 1}$ denotes the filtered
		receiver noise at the $k$-th \ac{UE}.
		
		The instantaneous \ac{SE} $\mathcal{R}_k(\mathbf{V}, \bar{\mathbf{V}}, \mathbf{W}, \bar{\mathbf{W}})$ relative to the $k$-th \ac{UE} can then be expressed as follows:
		\begin{equation} \label{eq:SINR_gen}
			\mathcal{R}_k\left(\mathbf{V}, \bar{\mathbf{V}}, \mathbf{W}, \bar{\mathbf{W}}\right) \triangleq \log_2 \det \left(
			 \mathbf{I}_{L_k} + \rho^2 \bar{\mathbf{K}}_k^{-1} \bar{\mathbf{W}}_k^{\text{H}} \bar{\mathbf{H}}_k \bar{\mathbf{V}}_k
			 \bar{\mathbf{V}}_k^{\text{H}} \bar{\mathbf{H}}_k^{\text{H}} \bar{\mathbf{W}}_k \right),
		\end{equation}
		where	$\bar{\mathbf{K}}_k \triangleq \rho^2 \sum_{j \neq k} \bar{\mathbf{W}}_k^{\text{H}} \bar{\mathbf{H}}_k \bar{\mathbf{V}}_j 
		\bar{\mathbf{V}}_j^{\text{H}} \bar{\mathbf{H}}_k^{\text{H}} \bar{\mathbf{W}}_k + 
		\sigma_n^2  \bar{\mathbf{W}}_k^{\text{H}} \mathbf{W}_k^{\text{H}} \mathbf{W}_k \bar{\mathbf{W}}_k$ is the interference plus noise
		covariance relative to the $k$-th \ac{UE}, and where we recall that the dependence on $\mathbf{V}$ and $\mathbf{W}$ is because
		$\bar{\mathbf{H}} \triangleq \mathbf{W}^{\text{H}} \mathbf{H} \mathbf{V}$.
		
	\subsection{Optimal Precoders and Combiners} \label{sec:Coo_Beam_Rep}
		
		In order to design a processing scheme which achieves the optimal effective network throughput, the mutual optimization of the (constrained)
		\ac{GoB} and (unconstrained) \ac{mMIMO} data beamformers should be considered. Let us first define the overall training overhead as follows.
		\begin{definition} \label{def:train_ov_v}
			Let $\mathbf{V} \in \mathbb{C}^{N_{\textnormal{BS}} \times M_{\textnormal{BS}}}$ be the \ac{GoB} precoder at the \ac{BS}.
			The training overhead $\omega\left(\mathbf{V}\right) \in [0, 1]$ in terms of pilot resource elements is defined as follows:
			\begin{equation} \label{eq:training_overhead_v}
				\omega\left(\mathbf{V}\right) \triangleq \frac{\tau}{T} \textnormal{card}\left(\textnormal{col}\left(\mathbf{V}\right)\right).
			\end{equation}
		\end{definition}
		Note that the training overhead depends on how the \ac{GoB} precoder $\mathbf{V}$ is designed. 
		$\text{Col}\left(\mathbf{V}\right)$ consist indeed of the beams to train in the channel estimation phase 
		(refer to Eq. \eqref{eq:Y_k} and Fig.~\ref{fig:ap_mapping}).
		
		Therefore, the achievable effective network throughput $\mathcal{R}$ can be expressed as
		\begin{align} \label{eq:ent}
			\mathcal{R}\left(\mathbf{V}, \bar{\mathbf{V}}, \mathbf{W}, \bar{\mathbf{W}}\right) \triangleq \left(1-\omega\left(\mathbf{V}\right)\right)
			\sum_{k=1}^K \mathcal{R}_k\left(\mathbf{V}, \bar{\mathbf{V}}, \mathbf{W}, \bar{\mathbf{W}}\right).
		\end{align}
		
		The optimal beamformers $\left(\mathbf{V}^*, \bar{\mathbf{V}}^*, \mathbf{W}^*, \bar{\mathbf{W}}^*\right)$ are then found as follows:
		\begin{align} \label{eq:gen_opt} \tag{P$\star$}
			\left(\mathbf{V}^*, \bar{\mathbf{V}}^*, \mathbf{W}^*, \bar{\mathbf{W}}^*\right) = 
			&\argmax_{\mathbf{V}, \bar{\mathbf{V}}, \mathbf{W}, \bar{\mathbf{W}}}
			~\mathbb{E}_{\mathbf{H}} \Big[ \mathcal{R}\left(\mathbf{V}, \bar{\mathbf{V}}, \mathbf{W}, \bar{\mathbf{W}}\right) \Big], \\
			& \text{subject to} ~\text{col}\left(\mathbf{V}\right) \in \mathcal{B}_{\text{BS}} \nonumber \\
			& \phantom{\text{subject to}} ~\text{col}\left(\mathbf{W}_k\right) \in \mathcal{B}_{\text{UE}}, ~\forall k = \{ 1, \dots, K \}. \nonumber
		\end{align}		
		Finding the global optimum for the optimization problem \eqref{eq:gen_opt} is not trivial and often found to be intractable, even without
		considering the pre-log factor relative to the training overhead~\cite{Ayach2014, Alkhateeb2015}. A common and viable approach consists in
		decoupling the design, as the \ac{GoB} beamformers can be optimized through long-term statistical information, whereas the \ac{mMIMO} data
		beamformers can depend on the instantaneous \ac{CSI}~\cite{Alkhateeb2015}. The same approach is followed in this work.
		In particular, we consider two different timescales:
		\begin{itemize}
			\item{\textbf{Small timescale} \emph{(channel coherence time)}: within which the instantaneous channel realization $\mathbf{H}_k, ~\forall k$
			is assumed to be constant and a single training phase is carried out;}
			\item{\textbf{Large timescale} \emph{(beam coherence time)}: within which the covariance matrices $\mathbf{\Sigma}_k, ~\forall k$ 
			are assumed to be constant and the \ac{GoB} beamformers are designed \emph{(beam selection)}.}
		\end{itemize}
		In the following section, we will focus on the design of the \ac{mMIMO} data precoder and combiners with given multi-user effective channel 
		$\bar{\mathbf{H}}$. Later, the design of the long-term \ac{GoB} beamformers will be considered assuming fixed \ac{mMIMO} data beamformers.
		
		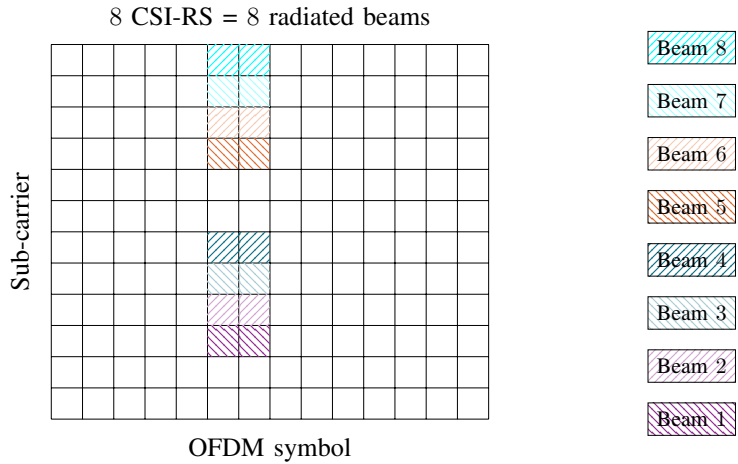
\begin{figure}[h]
			\centering
			 \resizebox{10cm}{!}{
			 \begin{tikzpicture}[every node/.style={minimum size=.5cm-\pgflinewidth, outer sep=0pt}]
    				\draw[step=0.5cm,color=black] (-10,-3) grid (-3,3);
    				\node[pattern=north west lines, pattern color=violet!85] at (-7.25,-1.75) {};
    				\node[pattern=north west lines, pattern color=violet!85] at (-6.75,-1.75) {};
    				\node[pattern=north east lines, pattern color=violet!35] at (-7.25,-1.25) {};
    				\node[pattern=north east lines, pattern color=violet!35] at (-6.75,-1.25) {};
    				\node[pattern=north west lines, pattern color=blue!35] at (-7.25,-0.75) {};
    				\node[pattern=north west lines, pattern color=blue!35] at (-6.75,-0.75) {};
    				\node[pattern=north east lines, pattern color=blue] at (-7.25,-0.25) {};
    				\node[pattern=north east lines, pattern color=blue] at (-6.75,-0.25) {};
    				\node[pattern=north west lines, pattern color=orange] at (-7.25,1.25) {};
    				\node[pattern=north west lines, pattern color=orange] at (-6.75,1.25) {};
    				\node[pattern=north east lines, pattern color=orange!35] at (-7.25,1.75) {};
    				\node[pattern=north east lines, pattern color=orange!35] at (-6.75,1.75) {};
    				\node[pattern=north west lines, pattern color=cyan!35] at (-7.25,2.25) {};
    				\node[pattern=north west lines, pattern color=cyan!35] at (-6.75,2.25) {};
    				\node[pattern=north east lines, pattern color=cyan] at (-7.25,2.75) {};
    				\node[pattern=north east lines, pattern color=cyan] at (-6.75,2.75) {};
    				\node[pattern=north west lines, pattern color=violet!85, draw=black] at (.25,-3) {\small Beam $1$};
    				\node[pattern=north east lines, pattern color=violet!35, draw=black] at (.25,-2.15) {\small Beam $2$};
    				\node[pattern=north west lines, pattern color=blue!35, draw=black] at (.25,-1.3) {\small Beam $3$};
    				\node[pattern=north east lines, pattern color=blue, draw=black] at (.25,-0.45) {\small Beam $4$};
    				\node[pattern=north west lines, pattern color=orange, draw=black] at (.25,0.4) {\small Beam $5$};
    				\node[pattern=north east lines, pattern color=orange!35, draw=black] at (.25,1.25) {\small Beam $6$};
    				\node[pattern=north west lines, pattern color=cyan!35, draw=black] at (.25,2.1) {\small Beam $7$};
    				\node[pattern=north east lines, pattern color=cyan, draw=black] at (.25,2.95) {\small Beam $8$};
    				\node at (-6.5, -3.5){OFDM symbol};
    				\node[rotate=90] at (-10.5, 0){Sub-carrier};
    				\node at (-6.5, 3.45){$8$ \ac{CSI}-{RS} = $8$ radiated beams};
			 \end{tikzpicture}
			 }
			\caption{CSI-RS locations in a \ac{DL} \ac{NR} resource block for $M_{\text{BS}} = 8$. 
			When \ac{GoB} precoding is used, the radiated beams are mapped to one precoded \ac{CSI}-\ac{RS} each, sent over 
			$\tau M_{\text{BS}}$ non-overlapping resource elements on \emph{distinct} antenna ports~\cite{3GPP2018c}.
			Therefore, less resource elements are available for transmitting data to the \acp{UE}, leading to throughput degradation.}
			\label{fig:ap_mapping}
		\end{figure}

\section{Data Beamformers Design} \label{sec:Data_BF_Design}
		
		Since we consider multi-beam processing at the \ac{UE} side, i.e. $M_{\text{UE}} > 1$, 
		the complete diagonalization of the effective channel $\bar{\mathbf{H}}$ at the \ac{BS} side 
		is suboptimal~\cite{Spencer2004}. The \ac{BD} approach is a popular method to design near-optimal
		beamformers that eliminate the multi-user interference in such scenarios. In particular, the 
		\ac{mMIMO} data precoder $\bar{\mathbf{V}}$ at the \ac{BS} side aims to produce a block-diagonal 
		$\bar{\mathbf{H}} \bar{\mathbf{V}}$ where no multi-user interference is experienced. 
		The eventual remaining inter-stream interference can then be suppressed at the \ac{UE} side through 
		a proper combining operation. 
		In this section, we review the complete procedure to perform the \ac{BD}~\cite{Spencer2004}, 
		which will allow for a simplified \ac{SE} expression depending on the long-term \ac{GoB} beamformers
		\emph{only}.
		
		To ensure a block-diagonal $\bar{\mathbf{H}} \bar{\mathbf{V}}$, the precoding matrix $\bar{\mathbf{V}}_k$ has to be designed such that
		\begin{equation} \label{eq:BD_constraint}
			\bar{\mathbf{H}}_j \bar{\mathbf{V}}_k = \mathbf{0}, ~\forall j \neq k.
		\end{equation}
		Introducing the matrix $\bar{\mathbf{H}}_{/k} \in \mathbb{C}^{(K-1)M_{\text{UE}} \times M_{\text{BS}}}$ as
		\begin{equation}
			\bar{\mathbf{H}}_{/k} \triangleq \big[ 
			\bar{\mathbf{H}}_1^{\text{T}} \dots \bar{\mathbf{H}}_{k-1}^{\text{T}} \bar{\mathbf{H}}_{k+1}^{\text{T}} \dots \bar{\mathbf{H}}_K^{\text{T}} 
			\big]^{\text{T}},
		\end{equation}
		the condition in \eqref{eq:BD_constraint} is enforced by letting $\bar{\mathbf{V}}_k$ lie in $\text{null}\left(\bar{\mathbf{H}}_{/k}\right)$.
		Whenever $\text{card}\left(\text{null}\left(\bar{\mathbf{H}}_{/k}\right)\right) \neq 0$, which holds when 
		$\text{rank}\left( \bar{\mathbf{H}}_{/k} \right) < M_{\text{BS}}$, the \ac{BS} can send (multi-user) interference-free data to the $k$-th \ac{UE}.
		
		As a first step, the \ac{SVD} is performed on $\bar{\mathbf{H}}_{/k}$. We can write
		\begin{equation} \label{eq:svd_h_no_k}
			\bar{\mathbf{H}}_{/k} = 
			\bar{\mathbf{U}}_{/k} \bar{\mathbf{S}}_{/k} \left[ \bar{\mathbf{M}}_{/k}^{(1)} \bar{\mathbf{M}}_{/k}^{(0)} \right]^{\text{H}},
		\end{equation}
		where $\bar{\mathbf{M}}_{/k}^{(1)}$ contains the first $\bar{M}_{/k} \triangleq \text{rank}\left(\bar{\mathbf{H}}_{/k}\right)$ right 
		singular vectors of $\bar{\mathbf{H}}_{/k}$, while $\bar{\mathbf{M}}_{/k}^{(0)}$ contains the last $(M_{\text{BS}} - \bar{M}_{/k})$ ones.
		Thus, we know that
		\begin{equation}
			\bar{\mathbf{H}}_j \bar{\mathbf{M}}_{/k}^{(0)} = \mathbf{0}, ~\forall j \neq k.
		\end{equation}
		The \ac{BD} of the overall multi-user effective channel $\bar{\mathbf{H}}$ can then be expressed as
		\begin{equation}
			\bar{\mathbf{H}}_{\text{BD}} = \begin{bmatrix}
				\bar{\mathbf{H}}_1 \bar{\mathbf{M}}_{/1}^{(0)} & & & \mathbf{0} \\
				& & \ddots & \\
				\mathbf{0} & & & \bar{\mathbf{H}}_K \bar{\mathbf{M}}_{/K}^{(0)}
			\end{bmatrix}.
		\end{equation}
		To achieve optimal \ac{SE}, further \ac{SVD}-based processing is carried out~\cite{Spencer2004}. 
		Since $\bar{\mathbf{H}}_{\text{BD}}$ is block diagonal, we can perform an individual \ac{SVD} for 
		each \ac{UE} rather than decomposing the overall large matrix $\bar{\mathbf{H}}_{\text{BD}}$. 
		In particular, we can write
		\begin{equation} \label{eq:BD_dep_on_VW}
			\bar{\mathbf{H}}_k \bar{\mathbf{M}}_{/k}^{(0)} =
			\left[ \bar{\mathbf{U}}_{k}^{(1)} \bar{\mathbf{U}}_{k}^{(0)} \right] 
			\begin{bmatrix} \bar{\mathbf{S}}_{k} & \mathbf{0} \\ \mathbf{0} & \mathbf{0} \end{bmatrix} 
			\left[ \bar{\mathbf{M}}_{k}^{(1)} \bar{\mathbf{M}}_{k}^{(0)} \right]^{\text{H}}.
		\end{equation}
		The product $\bar{\mathbf{M}}_{/k}^{(0)} \bar{\mathbf{M}}_{k}^{(1)}$ produces an orthogonal basis 
		with dimension $L_k \triangleq \text{rank}\left(\bar{\mathbf{H}}_k \bar{\mathbf{M}}_{/k}^{(0)}\right)$
		and can be used as the multi-user interference-nulling data precoder for the $k$-th \ac{UE}, i.e. 
		$\bar{\mathbf{V}}_k = \bar{\mathbf{M}}_{/k}^{(0)} \bar{\mathbf{M}}_{k}^{(1)}$. 
		In order to send interference-free data to the $k$-th \ac{UE}, 
		$\text{rank}\left(\bar{\mathbf{H}}_k \bar{\mathbf{M}}_{/k}^{(0)}\right) \ge 1$ is needed. 
		The receive data combiner $\bar{\mathbf{W}}_k$ relative to the $k$-th \ac{UE} is then designed as 
		$\bar{\mathbf{W}}_k = \bar{\mathbf{U}}_{k}^{(1)}$.
		\begin{lemma} \label{lemma:interference_cancel}
			The condition $\text{rank}\left(\bar{\mathbf{H}}_k \bar{\mathbf{M}}_{/k}^{(0)}\right) \ge 1$ is respected when there exists at least one 
			vector in $\text{row}\left(\bar{\mathbf{H}}_k\right)$ that is \ac{LI} of $\text{row}\left(\bar{\mathbf{H}}_{/k}\right)$. 
		\end{lemma}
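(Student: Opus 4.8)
The plan is to recast the rank condition as a non-vanishing condition and then to translate it into a statement about subspaces. Since any matrix has rank at least one exactly when it is not the zero matrix, the target $\text{rank}\big(\bar{\mathbf{H}}_k \bar{\mathbf{M}}_{/k}^{(0)}\big) \ge 1$ is equivalent to $\bar{\mathbf{H}}_k \bar{\mathbf{M}}_{/k}^{(0)} \ne \mathbf{0}$. I would therefore aim to prove the clean equivalence $\bar{\mathbf{H}}_k \bar{\mathbf{M}}_{/k}^{(0)} = \mathbf{0} \iff \text{row}(\bar{\mathbf{H}}_k) \subseteq \text{row}(\bar{\mathbf{H}}_{/k})$ and read the lemma off as the negation of its right-hand side.

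First I would exploit the \ac{SVD} in \eqref{eq:svd_h_no_k}: the columns of $\bar{\mathbf{M}}_{/k}^{(0)}$ are precisely the right singular vectors of $\bar{\mathbf{H}}_{/k}$ attached to the zero singular values, so they form an orthonormal basis of $\text{null}(\bar{\mathbf{H}}_{/k})$, while the columns of $\bar{\mathbf{M}}_{/k}^{(1)}$ form an orthonormal basis of its Hermitian orthogonal complement, i.e. of $\text{row}(\bar{\mathbf{H}}_{/k})$. With this basis in hand, the product $\bar{\mathbf{H}}_k \bar{\mathbf{M}}_{/k}^{(0)}$ vanishes if and only if every column of $\bar{\mathbf{M}}_{/k}^{(0)}$ lies in $\text{null}(\bar{\mathbf{H}}_k)$, which is exactly the subspace inclusion $\text{null}(\bar{\mathbf{H}}_{/k}) \subseteq \text{null}(\bar{\mathbf{H}}_k)$.

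Next I would pass to orthogonal complements. Using the identity $\text{null}(\mathbf{A})^{\perp} = \text{row}(\mathbf{A})$ under the Hermitian inner product and the fact that inclusion is reversed by complementation, $\text{null}(\bar{\mathbf{H}}_{/k}) \subseteq \text{null}(\bar{\mathbf{H}}_k)$ is equivalent to $\text{row}(\bar{\mathbf{H}}_k) \subseteq \text{row}(\bar{\mathbf{H}}_{/k})$. Chaining this with the previous step establishes the stated equivalence. The hypothesis of the lemma asserts that some vector of $\text{row}(\bar{\mathbf{H}}_k)$ is \ac{LI} of $\text{row}(\bar{\mathbf{H}}_{/k})$, i.e. the inclusion $\text{row}(\bar{\mathbf{H}}_k) \subseteq \text{row}(\bar{\mathbf{H}}_{/k})$ fails; hence $\bar{\mathbf{H}}_k \bar{\mathbf{M}}_{/k}^{(0)} \ne \mathbf{0}$ and its rank is at least one, which is what we want.

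I expect the only delicate point to be the complex-conjugation bookkeeping in the orthogonal-complement identities. Because we operate over $\mathbb{C}$ with the Hermitian inner product, the orthogonal complement of $\text{null}(\bar{\mathbf{H}}_{/k})$ is strictly the range of $\bar{\mathbf{H}}_{/k}^{\text{H}}$ (the span of the conjugated rows) rather than the literal span of the rows, and I would check that this coincides with the paper's informal use of $\text{row}(\cdot)$ so that the hypothesis and the conclusion refer to the same subspace. Once this convention is fixed, every equivalence above is an elementary consequence of working with an orthonormal singular-vector basis, and no further computation is needed.
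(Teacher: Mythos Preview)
Your argument is correct and rests on the same core idea as the paper: the columns of $\bar{\mathbf{M}}_{/k}^{(0)}$ form a basis of $\text{null}(\bar{\mathbf{H}}_{/k})$, so a row of $\bar{\mathbf{H}}_k$ that is not in $\text{row}(\bar{\mathbf{H}}_{/k})$ cannot annihilate all of them. The paper's proof is simply a two-line compression of this, picking one such row $\mathbf{k}$ and asserting $\mathbf{k}\,\bar{\mathbf{M}}_{/k}^{(0)}\neq\mathbf{0}$ directly; your version is more explicit, routes the reasoning through the subspace inclusion $\text{null}(\bar{\mathbf{H}}_{/k})\subseteq\text{null}(\bar{\mathbf{H}}_k)$ and its orthogonal-complement dual, and in fact yields the full equivalence rather than just the one direction the lemma states. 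Your caution about the Hermitian versus transpose convention for $\text{row}(\cdot)$ is well placed, though the paper does not address it.
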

		\begin{proof}
			Let us assume that $\exists~\mathbf{k} \in \text{row}\left(\bar{\mathbf{H}}_k \right)$ that is \ac{LI} of 
			$\text{row}\left(\bar{\mathbf{H}}_{/k}\right)$. Then, since $\bar{\mathbf{M}}_{/k}^{(0)}$ is a basis for 
			$\text{null}\left(\bar{\mathbf{H}}_{/k}\right)$, we have $\mathbf{k} \bar{\mathbf{M}}_{/k}^{(0)} \neq \mathbf{0}$. 
			Therefore, $\text{rank}\left(\bar{\mathbf{H}}_k \bar{\mathbf{M}}_{/k}^{(0)}\right) \ge 1$.
		\end{proof}
		Note that inverting the entire $\bar{\mathbf{H}}$ at the \ac{BS} side through e.g. \ac{ZF} precoding requires that each vector in 
		$\text{row}\left(\bar{\mathbf{H}}_k\right)$ is \ac{LI} of $\text{row}\left(\bar{\mathbf{H}}_{/k}\right)$. 
		The \ac{BD} approach offers thus more freedom for designing the \ac{GoB} beamformers $\mathbf{V}$ and $\mathbf{W}$, although a 
		higher value for $\text{rank}\left(\bar{\mathbf{H}}_k \bar{\mathbf{M}}_{/k}^{(0)}\right)$ would still be beneficial for increasing the \ac{SE} 
		(more available streams for the $k$-th \ac{UE}).
		\begin{proposition}[\cite{Spencer2004}]
			When all the interference cancellation conditions are met, the instantaneous \ac{SE} after \ac{BD} precoding
			$\mathcal{R}^{\textnormal{BD}}_k(\mathbf{V}, \mathbf{W})$ relative to the $k$-th \ac{UE} can be written as follows:
			\begin{equation} \label{eq:SINR_gen_BD}
				\mathcal{R}_k^{\textnormal{BD}} \left( \mathbf{V}, \mathbf{W} \right) \triangleq \log_2 \det \left(
				 \mathbf{I}_{L_k} + \kappa \bar{\mathbf{S}}_k^{\textnormal{H}} \bar{\mathbf{S}}_k \right),
			\end{equation}
			where the dependence on $\mathbf{V}$ and $\mathbf{W}$ is hidden in the linear transformation \eqref{eq:BD_dep_on_VW}.
		\end{proposition}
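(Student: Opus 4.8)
The plan is to substitute the \ac{BD} precoders $\bar{\mathbf{V}}_k = \bar{\mathbf{M}}_{/k}^{(0)} \bar{\mathbf{M}}_k^{(1)}$ and combiners $\bar{\mathbf{W}}_k = \bar{\mathbf{U}}_k^{(1)}$ directly into the general \ac{SE} expression \eqref{eq:SINR_gen}, and to show that the interference-plus-noise covariance $\bar{\mathbf{K}}_k$ collapses to a scaled identity while the signal term reduces to the nonzero singular values $\bar{\mathbf{S}}_k$. The whole argument is a chain of orthonormality identities once the pieces are in place.

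First I would establish that the multi-user interference term in $\bar{\mathbf{K}}_k$ vanishes. For $j \neq k$, the block $\bar{\mathbf{H}}_k$ is one of the row blocks of $\bar{\mathbf{H}}_{/j}$; since $\bar{\mathbf{M}}_{/j}^{(0)}$ spans $\text{null}(\bar{\mathbf{H}}_{/j})$, this gives $\bar{\mathbf{H}}_k \bar{\mathbf{M}}_{/j}^{(0)} = \mathbf{0}$ and hence $\bar{\mathbf{H}}_k \bar{\mathbf{V}}_j = \bar{\mathbf{H}}_k \bar{\mathbf{M}}_{/j}^{(0)} \bar{\mathbf{M}}_j^{(1)} = \mathbf{0}$ for every $j \neq k$. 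Therefore the sum in $\bar{\mathbf{K}}_k$ is zero and only the noise term survives, $\bar{\mathbf{K}}_k = \sigma_n^2 \bar{\mathbf{W}}_k^{\text{H}} \mathbf{W}_k^{\text{H}} \mathbf{W}_k \bar{\mathbf{W}}_k$. Using orthonormality of the \ac{GoB} receive beams, $\mathbf{W}_k^{\text{H}} \mathbf{W}_k = \mathbf{I}_{M_{\text{UE}}}$, together with $\bar{\mathbf{W}}_k^{\text{H}} \bar{\mathbf{W}}_k = \bar{\mathbf{U}}_k^{(1)\text{H}} \bar{\mathbf{U}}_k^{(1)} = \mathbf{I}_{L_k}$, yields $\bar{\mathbf{K}}_k = \sigma_n^2 \mathbf{I}_{L_k}$.

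Next I would simplify the signal term by reading off from the per-user \ac{SVD} \eqref{eq:BD_dep_on_VW} that, on its nonzero block, $\bar{\mathbf{H}}_k \bar{\mathbf{M}}_{/k}^{(0)} = \bar{\mathbf{U}}_k^{(1)} \bar{\mathbf{S}}_k \bar{\mathbf{M}}_k^{(1)\text{H}}$. Right-multiplying by $\bar{\mathbf{M}}_k^{(1)}$ and using $\bar{\mathbf{M}}_k^{(1)\text{H}} \bar{\mathbf{M}}_k^{(1)} = \mathbf{I}_{L_k}$ gives $\bar{\mathbf{H}}_k \bar{\mathbf{V}}_k = \bar{\mathbf{U}}_k^{(1)} \bar{\mathbf{S}}_k$, and left-multiplying by $\bar{\mathbf{W}}_k^{\text{H}} = \bar{\mathbf{U}}_k^{(1)\text{H}}$ yields $\bar{\mathbf{W}}_k^{\text{H}} \bar{\mathbf{H}}_k \bar{\mathbf{V}}_k = \bar{\mathbf{S}}_k$. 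Plugging both results into \eqref{eq:SINR_gen}, the argument of the determinant becomes $\mathbf{I}_{L_k} + \rho^2 (\sigma_n^2 \mathbf{I}_{L_k})^{-1} \bar{\mathbf{S}}_k \bar{\mathbf{S}}_k^{\text{H}} = \mathbf{I}_{L_k} + \kappa \bar{\mathbf{S}}_k \bar{\mathbf{S}}_k^{\text{H}}$, and since $\bar{\mathbf{S}}_k$ is a square diagonal matrix of singular values we have $\bar{\mathbf{S}}_k \bar{\mathbf{S}}_k^{\text{H}} = \bar{\mathbf{S}}_k^{\text{H}} \bar{\mathbf{S}}_k$, giving exactly \eqref{eq:SINR_gen_BD}.

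The computation itself is mechanical; the one point deserving care—and the only genuine assumption beyond linear algebra—is the whiteness of the combined noise. The clean $\kappa \bar{\mathbf{S}}_k^{\text{H}} \bar{\mathbf{S}}_k$ form hinges on $\mathbf{W}_k^{\text{H}} \mathbf{W}_k = \mathbf{I}_{M_{\text{UE}}}$ so that $\bar{\mathbf{K}}_k$ is proportional to the identity and cancels against $\rho^2$ to produce $\kappa$. For a non-orthonormal \ac{GoB} codebook one would instead first whiten the filtered noise $\bar{\mathbf{n}}_k$ and fold the whitening transform into the per-user \ac{SVD} of \eqref{eq:BD_dep_on_VW}; the same sequence of orthonormality identities then goes through verbatim, so the obstacle is conceptual (the modeling assumption on the receive beams) rather than computational.
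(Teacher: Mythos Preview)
Your argument is correct and is essentially the standard derivation from \cite{Spencer2004}: kill the cross terms via the null-space construction, whiten the noise using orthonormality of the receive beams, and read off $\bar{\mathbf{S}}_k$ from the per-user \ac{SVD}. Note, however, that the paper does not actually supply its own proof of this proposition; it simply attributes the result to \cite{Spencer2004} and moves on, so there is no paper-side argument to compare against beyond the surrounding construction in Section~\ref{sec:Data_BF_Design} that you have already reproduced.

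Your final paragraph correctly isolates the one substantive assumption: the identity $\mathbf{W}_k^{\text{H}}\mathbf{W}_k = \mathbf{I}_{M_{\text{UE}}}$ is not stated in the general model of Section~\ref{sec:Channel_Estimation}, but it is consistent with the DFT-based orthogonal codebooks used throughout the paper (and made explicit in the simulation setup). Without it, $\bar{\mathbf{K}}_k$ is not a scalar multiple of the identity and the clean $\kappa\bar{\mathbf{S}}_k^{\text{H}}\bar{\mathbf{S}}_k$ form would require the noise-whitening step you describe. This is a fair caveat rather than a gap in your proof.
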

		Therefore, fixing \ac{BD} as the \ac{mMIMO} data precoder allows to reformulate \eqref{eq:gen_opt} as a long-term joint transmit-receive 
		beam selection problem, where the optimum \ac{GoB} beamformers $\left(\mathbf{V}^{\text{(P0)}}, \mathbf{W}^{\text{(P0)}}\right)$ are found 
		as follows:
		\begin{align} \label{eq:gen_opt_after_BD} \tag{P0}
			\left(\mathbf{V}^{\text{(P0)}}, \mathbf{W}^{\text{(P0)}}\right) = 
			&\argmax_{\mathbf{V}, \mathbf{W}}
			~\mathbb{E}_{\mathbf{H}} \Big[ 
			\left(1-\omega\left(\mathbf{V}\right)\right) \sum_{k=1}^K \mathcal{R}^{\text{BD}}_k \left(\mathbf{V}, \mathbf{W} \right) \Big], \\
			& \text{subject to} ~\text{col}\left(\mathbf{V}\right) \in \mathcal{B}_{\text{BS}} \nonumber \\
			& \phantom{\text{subject to}} ~\text{col}\left(\mathbf{W}_k\right) \in \mathcal{B}_{\text{UE}}, ~\forall k = \{ 1, \dots, K \}. \nonumber
		\end{align}		
		The problem \eqref{eq:gen_opt_after_BD} is a discrete optimization problem with a non-convex objective function.  
		The solution for this class of problems is often hard to find and requires combinatorial search, alternating minimization algorithms or 
		relaxation techniques, which are however demanding to put into practice. In this work, we aim instead to design heuristic beam selection 
		algorithms. In the next section, we will thus deal with the design of the long-term \ac{GoB} beamformers $\mathbf{V}$ and $\mathbf{W}$.
				
\section{Grid-of-Beams Beamformers Design} \label{sec:Coo_GoB_Design}
	
	In general, it can be seen through inspecting the objective function in \eqref{eq:gen_opt_after_BD} that designing proper \ac{GoB} beamformers 
	$\mathbf{V}$ and $\mathbf{W}$ implies \emph{i)} harvesting large effective channel gain, \emph{ii)} avoiding catastrophic multi-user interference,
	and \emph{iii)} minimizing the training overhead. In this section, we investigate such conditions in detail so as to set the requirements for 
	an effective \ac{GoB} beamformers design. In particular, for each condition, we introduce a related beam selection optimization problem which
	approximates \eqref{eq:gen_opt_after_BD} and whose practical implementation will be discussed in Section~\ref{sec:Dec_Coo_BS_Algos}.
	To this end, we define the notion of relevant channel components and take a closer look at the beam reporting procedure defined in the current
	$5$G \ac{NR} specifications. Furthermore, we highlight the role of coordinating \acp{UE} in reducing the multi-user interference and the training
	overhead in the considered \ac{FDD} \ac{mMIMO} scenario.

	\subsection{Harvesting Large Effective Channel Gain} \label{sec:RelBeams}
		
		In the classical \ac{GoB} implementation all the beams in the grid are trained regardless of their actual relevance, i.e. 
		$M_{\text{BS}} = B_\text{BS}$. As pointed out in Section~\ref{sec:Intro}, such an operating mode is feasible for small \acp{GoB} only (refer to
		Fig.~\ref{fig:ap_mapping}), although employing a small \ac{GoB}, in turn, leads to a high performance loss~\cite{Zirwas2016}. 
		In order to avoid exchanging performance for overhead, the intuition is to use a large \ac{GoB} and leverage the knowledge of the long-term 
		statistical information to train a few \emph{(accurately)} selected beams to train, so as to keep $\omega = \left(\tau/T\right) M_{\text{BS}}$ 
		small. In particular, in order to gather as much beamforming gain as possible, the idea is to capitalize on the so-called 
		\emph{relevant channel components}, whose number depends on the propagation environment.
		
		\begin{remark}
			This intuition has been exploited, to a large extent, to optimize single-user mmWave communications. Owing to the sparse mmWave radio
			environment, few beams are enough to obtain an accurate and profitable low-dimensional representation of the actual channel~\cite{Heath2016}.
			\qed
		\end{remark}
		
		\begin{definition}
			We define the set $\mathcal{M}_k$ containing the relevant channel components (or relevant beam pairs) of the $k$-th \ac{UE} as follows:
			\begin{equation} \label{eq:rel_channel_comp}
				\mathcal{M}_k \triangleq \left\{ (v, w) :  
				\mathbb{E}_{\mathbf{H}_k} \Big[ \big|\mathbf{w}_w^{\textnormal{H}} \mathbf{H}_k \mathbf{v}_v \big|^2\Big] \ge \xi \right\},
			\end{equation}
			where $\xi$ is a predefined power threshold.
		\end{definition}
		\begin{remark}
			The set $\mathcal{M}_k$ is solely dependent on the second order statistics of the channel $\mathbf{H}_k$, for fixed 
			$\mathcal{B}_{\text{BS}}$ and $\mathcal{B}_{\text{UE}}$. In particular, we refer to the notion of \emph{beam coherence time} to denote 
			the coherence time of such statistics. The beam coherence time $T_{\text{beam}}$ -- which depends on the beam width, \ac{UE} speed
			and other factors -- is much longer than the channel coherence time $T_{\text{coh}}$~\cite{Va2017}. \qed 
		\end{remark}
		The following lemma establishes the mathematical relation between the relevant channel components and the second order statistics of the 
		channel (channel covariance matrix). 
		\begin{lemma} \label{lemma:rel_ch_comp}
			Let $\mathbf{\Sigma}_k \triangleq \mathbb{E}_{\mathbf{H}_k} 
			\left[ \textnormal{vec}\left( \mathbf{H}_k \right)\textnormal{vec}\left( \mathbf{H}_k \right)^{\textnormal{H}}\right] \in
			\mathbb{C}^{N_{\textnormal{BS}} N_{\textnormal{UE}} \times N_{\textnormal{BS}} N_{\textnormal{UE}}}$ be the channel covariance matrix
			relative to the $k$-th UE. The set $\mathcal{M}_k$ containing the relevant channel components can be equivalently expressed as
			\begin{equation} \label{eq:lemma_rel_ch_comp}
				\mathcal{M}_k = \left\{ (v, w) :  \mathbf{b}_{v, w}^{\textnormal{H}} \mathbf{\Sigma}_k \mathbf{b}_{v, w} \ge \xi \right\},
			\end{equation}
			where $\mathbf{b}_{v, w} \triangleq \left( \textnormal{conj}\left(\mathbf{v}_v\right) \otimes \mathbf{w}_w \right)
			\in \mathbb{C}^{N_{\textnormal{BS}} N_{\textnormal{UE}} \times 1}$.
		\end{lemma}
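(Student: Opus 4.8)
The plan is to establish the equivalence of the two descriptions of $\mathcal{M}_k$ by proving the pointwise identity
$\mathbb{E}_{\mathbf{H}_k}\big[\,|\mathbf{w}_w^{\text{H}} \mathbf{H}_k \mathbf{v}_v|^2\,\big] = \mathbf{b}_{v,w}^{\text{H}} \mathbf{\Sigma}_k \mathbf{b}_{v,w}$
for every candidate beam pair $(v,w)$. Once this holds, the thresholding condition $\ge \xi$ transfers verbatim from \eqref{eq:rel_channel_comp} to \eqref{eq:lemma_rel_ch_comp}, and the two sets are therefore identical by definition. So the entire content of the lemma reduces to this one algebraic identity.

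First I would rewrite the scalar $\mathbf{w}_w^{\text{H}} \mathbf{H}_k \mathbf{v}_v$ as a linear functional of $\textnormal{vec}(\mathbf{H}_k)$. Reading it as a triple product $\mathbf{A}\mathbf{X}\mathbf{B}$ with $\mathbf{A} = \mathbf{w}_w^{\text{H}}$, $\mathbf{X} = \mathbf{H}_k$ and $\mathbf{B} = \mathbf{v}_v$, and invoking the standard identity $\textnormal{vec}(\mathbf{A}\mathbf{X}\mathbf{B}) = (\mathbf{B}^{\text{T}} \otimes \mathbf{A})\,\textnormal{vec}(\mathbf{X})$ — while noting that the left-hand side is a scalar and hence equals its own vectorization — I obtain $\mathbf{w}_w^{\text{H}} \mathbf{H}_k \mathbf{v}_v = (\mathbf{v}_v^{\text{T}} \otimes \mathbf{w}_w^{\text{H}})\,\textnormal{vec}(\mathbf{H}_k)$.

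The one place requiring care is matching this coefficient row vector to the definition of $\mathbf{b}_{v,w}$. Using $(\mathbf{a}\otimes\mathbf{b})^{\text{H}} = \mathbf{a}^{\text{H}}\otimes\mathbf{b}^{\text{H}}$ together with $(\textnormal{conj}(\mathbf{v}_v))^{\text{H}} = \mathbf{v}_v^{\text{T}}$, I verify that $\mathbf{b}_{v,w}^{\text{H}} = (\textnormal{conj}(\mathbf{v}_v)\otimes\mathbf{w}_w)^{\text{H}} = \mathbf{v}_v^{\text{T}}\otimes\mathbf{w}_w^{\text{H}}$, so that $\mathbf{w}_w^{\text{H}} \mathbf{H}_k \mathbf{v}_v = \mathbf{b}_{v,w}^{\text{H}}\,\textnormal{vec}(\mathbf{H}_k)$. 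This is exactly why the definition of $\mathbf{b}_{v,w}$ places a conjugate on $\mathbf{v}_v$ but none on $\mathbf{w}_w$; getting this placement right is the crux of the bookkeeping.

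Finally I would take squared magnitudes and then expectations. Writing $|\mathbf{b}_{v,w}^{\text{H}}\,\textnormal{vec}(\mathbf{H}_k)|^2 = \mathbf{b}_{v,w}^{\text{H}}\,\textnormal{vec}(\mathbf{H}_k)\,\textnormal{vec}(\mathbf{H}_k)^{\text{H}}\,\mathbf{b}_{v,w}$ and using that $\mathbf{b}_{v,w}$ is deterministic, linearity of expectation pulls it outside the expectation to yield $\mathbf{b}_{v,w}^{\text{H}}\,\mathbb{E}_{\mathbf{H}_k}\!\big[\textnormal{vec}(\mathbf{H}_k)\,\textnormal{vec}(\mathbf{H}_k)^{\text{H}}\big]\,\mathbf{b}_{v,w}$, and the inner expectation is precisely $\mathbf{\Sigma}_k$ by its definition in the lemma statement. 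This gives the claimed identity and closes the argument. I do not expect a genuine obstacle: the result follows directly from the Kronecker/vectorization identity, and the only subtlety is the correct handling of the conjugation so that the quadratic form emerges with exactly the stated $\mathbf{b}_{v,w}$.
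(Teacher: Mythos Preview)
Your proof is correct and follows essentially the same approach as the paper: both invoke the identity $\textnormal{vec}(\mathbf{w}^{\text{H}}\mathbf{H}\mathbf{v}) = (\mathbf{v}^{\text{T}}\otimes\mathbf{w}^{\text{H}})\,\textnormal{vec}(\mathbf{H})$, form the squared magnitude, and pull the deterministic vector outside the expectation to expose $\mathbf{\Sigma}_k$. Your version is slightly more direct (you write the quadratic form immediately rather than passing through a trace), and you make explicit the conjugation bookkeeping that yields $\mathbf{b}_{v,w}^{\text{H}} = \mathbf{v}_v^{\text{T}}\otimes\mathbf{w}_w^{\text{H}}$, which the paper leaves implicit.
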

		\begin{proof}
			From the properties of the $\text{vec}\left(\cdot\right)$ operator, we can write: 
			$\text{vec}\left( \mathbf{w}^{\text{H}} \mathbf{H} \mathbf{v} \right)
			= \left( \mathbf{v}^{\text{T}} \otimes \mathbf{w}^{\text{H}} \right) \text{vec}\left(\mathbf{H}\right)$. 
			Now, from the definition in \eqref{eq:rel_channel_comp}, we have:
			\begin{align}
				\mathbb{E}_{\mathbf{H}_k} \Big[ \big|\mathbf{w}_w^{\text{H}} \mathbf{H}_k \mathbf{v}_v \big|^2\Big]
				&= \mathbb{E}_{\mathbf{H}_k} \Big[ \text{vec}\left(\mathbf{H}\right)^\text{H} \mathbf{b}_{v, w}
				\mathbf{b}_{v, w}^{\text{H}} \text{vec}\left(\mathbf{H}\right) \Big] \nonumber \\
				&= \mathbb{E}_{\mathbf{H}_k} \Big[ \text{Tr} \left( \mathbf{b}_{v, w} \mathbf{b}_{v, w}^\text{H} 
				\text{vec}\left(\mathbf{H}\right) \text{vec}\left(\mathbf{H}\right)^{\text{H}} \right) \Big] \nonumber \\
				&= \text{Tr} \left(\mathbf{b}_{v, w}^\text{H} 
				\mathbb{E}_{\mathbf{H}_k} \left[ \text{vec}\left(\mathbf{H}\right) \text{vec}\left(\mathbf{H}\right)^{\text{H}} \right] 
				 \mathbf{b}_{v, w} \right) \nonumber \\
				 &= \mathbf{b}_{v, w}^{\textnormal{H}} \mathbf{\Sigma}_k \mathbf{b}_{v, w},
			\end{align}
			as given in \eqref{eq:lemma_rel_ch_comp}.
		\end{proof}
		The relevant channel components relative to the $k$-th \ac{UE} can thus be found through linear search over $B_{\text{BS}} B_{\text{UE}}$
		elements, provided that the second order statistics of $\mathbf{H}_k$ are known.
		
		Note that when the \acp{UE} exploit multi-beam covariance shaping, the set of relevant channel components can be altered\footnote{With the
		exception of \emph{spatially-uncorrelated} channels, where the same gain is expected from all spatial directions.}.
		Indeed, applying some receive beams means focusing on specific relevant beam pairs and neglecting some others. 
		To this end, we define the subset $\mathcal{M}_k^{\text{BS}} \subseteq \mathcal{M}_k$ as follows. 
		\begin{definition}
			We define the set $\mathcal{M}_k^{\textnormal{BS}} \subseteq \mathcal{M}_k$ containing the relevant channel components (or, equivalently, 
			beam pairs) of the $k$-th \ac{UE}, when the $k$-th \ac{UE} adopts $\mathbf{W}_k$ as its receive \ac{GoB} combiner, as follows:
			\begin{equation} \label{eq:rel_channel_comp_w}
				\mathcal{M}_k^{\textnormal{BS}}(\mathbf{W}_k) \triangleq 
				\left\{(v, w) \in \mathcal{M}_k : \mathbf{w}_w \in \mathbf{W}_k \right\},
			\end{equation}
			where we have introduced the notation $\mathcal{M}_k^{\text{BS}}(\cdot)$ to highlight that the set $\mathcal{M}_k^{\text{BS}}$ 
			depends on the selected \ac{GoB} combiner $\mathbf{W}_k$.
		\end{definition}
		
		In more detail, for given \ac{GoB} beamformers $\mathbf{V}_k \in \mathbb{C}^{N_{\text{BS}} \times M_{\text{BS}}}$ and 
		$\mathbf{W}_k \in \mathbb{C}^{N_{\text{UE}} \times M_{\text{UE}}}$, an effective channel covariance $\bar{\mathbf{\Sigma}}_k$ can be defined. 
		Furthermore, $\bar{\mathbf{\Sigma}}_k$ can be expressed in closed form as a function of the channel covariance 
		$\mathbf{\Sigma}_k$, as highlighted in the following lemma.
		\begin{lemma} \label{lemma:eff_ch_cov}
			Let $\bar{\mathbf{\Sigma}}_k \triangleq \mathbb{E}_{\mathbf{H}_k} 
			\left[ \textnormal{vec}\left( \mathbf{W}_k^{\textnormal{H}} \mathbf{H}_k \mathbf{V}_k \right) 
			\textnormal{vec}\left( \mathbf{W}_k^{\textnormal{H}} \mathbf{H}_k \mathbf{V}_k \right)^{\textnormal{H}}\right]
			\in \mathbb{C}^{M_{\textnormal{BS}} M_{\textnormal{UE}} \times M_{\textnormal{BS}} M_{\textnormal{UE}}}$ be the 
			effective channel covariance relative to the $k$-th \ac{UE}. $\bar{\mathbf{\Sigma}}_k$ can be equivalently expressed as
			\begin{equation} \label{eq:lemma_eff_ch_cov}
				\bar{\mathbf{\Sigma}}_k = \mathbf{B}_k^{\textnormal{H}} \mathbf{\Sigma}_k \mathbf{B}_k,
			\end{equation}
			where $\mathbf{B}_k \triangleq \left( \textnormal{conj}\left(\mathbf{V}_k\right) \otimes \mathbf{W}_k \right)
			\in \mathbb{C}^{N_{\textnormal{BS}} N_{\textnormal{UE}} \times M_{\textnormal{BS}} M_{\textnormal{UE}}}$.
		\end{lemma}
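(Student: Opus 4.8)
The plan is to mirror the argument already used for Lemma~\ref{lemma:rel_ch_comp}, replacing the single beam pair $(\mathbf{v}_v,\mathbf{w}_w)$ by the full beamforming matrices $(\mathbf{V}_k,\mathbf{W}_k)$, so that the scalar rank-one quadratic form there becomes here a genuine congruence transformation of the covariance. The only tool needed is the standard vectorization identity $\textnormal{vec}(ABC)=(C^{\textnormal{T}}\otimes A)\,\textnormal{vec}(B)$, already invoked in the proof of Lemma~\ref{lemma:rel_ch_comp}.

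First I would apply this identity with $A=\mathbf{W}_k^{\textnormal{H}}$, $B=\mathbf{H}_k$ and $C=\mathbf{V}_k$ to obtain
\begin{equation}
	\textnormal{vec}\big(\mathbf{W}_k^{\textnormal{H}}\mathbf{H}_k\mathbf{V}_k\big)
	=\big(\mathbf{V}_k^{\textnormal{T}}\otimes\mathbf{W}_k^{\textnormal{H}}\big)\,\textnormal{vec}(\mathbf{H}_k).
\end{equation}
The one bookkeeping step that must be handled with care is identifying this prefactor with $\mathbf{B}_k^{\textnormal{H}}$. Using $(\mathbf{P}\otimes\mathbf{Q})^{\textnormal{H}}=\mathbf{P}^{\textnormal{H}}\otimes\mathbf{Q}^{\textnormal{H}}$ together with $\textnormal{conj}(\mathbf{V}_k)^{\textnormal{H}}=\mathbf{V}_k^{\textnormal{T}}$, I have
\begin{equation}
	\mathbf{B}_k^{\textnormal{H}}
	=\big(\textnormal{conj}(\mathbf{V}_k)\otimes\mathbf{W}_k\big)^{\textnormal{H}}
	=\mathbf{V}_k^{\textnormal{T}}\otimes\mathbf{W}_k^{\textnormal{H}},
\end{equation}
so that $\textnormal{vec}(\mathbf{W}_k^{\textnormal{H}}\mathbf{H}_k\mathbf{V}_k)=\mathbf{B}_k^{\textnormal{H}}\,\textnormal{vec}(\mathbf{H}_k)$. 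It is precisely the $\textnormal{conj}(\cdot)$ placed on $\mathbf{V}_k$ in the definition of $\mathbf{B}_k$ that cancels the transpose produced by the vec identity.

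Finally I would substitute this into the definition of $\bar{\mathbf{\Sigma}}_k$ and pull the deterministic factors $\mathbf{B}_k^{\textnormal{H}}$ and $\mathbf{B}_k$ outside the expectation by linearity:
\begin{align}
	\bar{\mathbf{\Sigma}}_k
	&=\mathbb{E}_{\mathbf{H}_k}\Big[\mathbf{B}_k^{\textnormal{H}}\,\textnormal{vec}(\mathbf{H}_k)\,\textnormal{vec}(\mathbf{H}_k)^{\textnormal{H}}\,\mathbf{B}_k\Big]\nonumber\\
	&=\mathbf{B}_k^{\textnormal{H}}\,\mathbb{E}_{\mathbf{H}_k}\big[\textnormal{vec}(\mathbf{H}_k)\,\textnormal{vec}(\mathbf{H}_k)^{\textnormal{H}}\big]\,\mathbf{B}_k
	=\mathbf{B}_k^{\textnormal{H}}\mathbf{\Sigma}_k\mathbf{B}_k,
\end{align}
recognizing the inner expectation as $\mathbf{\Sigma}_k$ by the definition given in Lemma~\ref{lemma:rel_ch_comp}. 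No step presents a real obstacle; the proof is essentially a one-line consequence of the vec identity plus linearity of expectation, and the only subtlety is the conjugation convention verified above.
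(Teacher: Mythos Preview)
Your proof is correct and follows essentially the same route as the paper's own proof: apply the vectorization identity $\textnormal{vec}(\mathbf{W}_k^{\textnormal{H}}\mathbf{H}_k\mathbf{V}_k)=(\mathbf{V}_k^{\textnormal{T}}\otimes\mathbf{W}_k^{\textnormal{H}})\,\textnormal{vec}(\mathbf{H}_k)$, recognize the prefactor as $\mathbf{B}_k^{\textnormal{H}}$, and pull the deterministic factors outside the expectation. If anything, your version is slightly more careful, since you explicitly verify the identification $\mathbf{B}_k^{\textnormal{H}}=\mathbf{V}_k^{\textnormal{T}}\otimes\mathbf{W}_k^{\textnormal{H}}$ via the conjugation convention, a step the paper leaves implicit.
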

		\begin{proof}
			Based on $\text{vec}\left( \mathbf{W}^{\text{H}} \mathbf{H} \mathbf{V} \right)
			= \left( \mathbf{V}^{\text{T}} \otimes \mathbf{W}^{\text{H}} \right) \text{vec}\left(\mathbf{H}\right)$, we have:
			\begin{align}
				\bar{\mathbf{\Sigma}}_k &\triangleq \mathbb{E}_{\mathbf{H}_k} 
				\left[ \textnormal{vec}\left( \mathbf{W}_k^{\textnormal{H}} \mathbf{H}_k \mathbf{V}_k \right) 
				\textnormal{vec}\left( \mathbf{W}_k^{\textnormal{H}} \mathbf{H}_k \mathbf{V}_k \right)^{\textnormal{H}}\right] \nonumber \\
				&= \mathbf{B}_k^{\textnormal{H}} \mathbb{E}_{\mathbf{H}_k} 
				\Big[ \text{vec}\left(\mathbf{H}_k\right)^\text{H} \text{vec}\left( \mathbf{H}_k \right) \Big] \mathbf{B}_k \nonumber \\
				 &= \mathbf{B}_k^{\textnormal{H}} \mathbf{\Sigma}_k \mathbf{B}_k,
			\end{align}
			as given in \eqref{eq:lemma_eff_ch_cov}.
		\end{proof}
		
		Let us now consider the single-user optimal \ac{SVD} precoding~\cite{Tse2005} over the effective channels. 
		We can express the achievable \ac{SE} at the $k$-th \ac{UE} as follows:
		\begin{equation}
			\mathcal{R}^{\text{SVD}}_k \left( \mathbf{V}_k, \mathbf{W}_k \right) \triangleq
			\log_2 \det  \left( \mathbf{I}_{M_{\text{UE}}} + \kappa \mathbf{\Lambda}^{\text{H}}_k \mathbf{\Lambda}_k \right),
		\end{equation}
		where we recall that $\kappa \triangleq \rho^2 / \sigma_n^2$ and where
		$\mathbf{\Lambda} \triangleq \text{diag}\left(\lambda_1, \dots, \lambda_{M_{\text{UE}}} \right)$, with
		$\lambda_1, \dots, \lambda_{M_{\text{UE}}}$ being the singular values of the effective channel 
		$\bar{\mathbf{H}}_k \triangleq \mathbf{W}^{\text{H}}_k \mathbf{H}_k \mathbf{V}_k$.
		
		\begin{proposition} \label{prop:se_svd_approx}
			The average \ac{SE} achievable at the $k$-th \ac{UE} in a single-user scenario with \ac{SVD} precoding can be upper bounded as follows:
			\begin{equation} \label{eq:se_svd_approx}
				\mathbb{E}_{\mathbf{H}_k} \Big[ \mathcal{R}^{\textnormal{SVD}}_k \left( \mathbf{V}_k, \mathbf{W}_k \right) \Big] \le
				M_{\textnormal{UE}} \log_2 \left( 1 + \kappa M_{\textnormal{UE}}^{-1} \textnormal{Tr}\left(\bar{\mathbf{\Sigma}}_k\right)  \right),	
			\end{equation}
			where $\bar{\mathbf{\Sigma}}_k$ is the effective channel covariance relative to the $k$-th \ac{UE}.
		\end{proposition}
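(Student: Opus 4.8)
The plan is to express the instantaneous SVD rate as a sum of per-eigenmode terms and then to invoke the concavity of the logarithm through two successive applications of Jensen's inequality: first over the $M_{\textnormal{UE}}$ singular values at a fixed channel realization, and then over the channel randomness. To begin, I would observe that $\mathbf{\Lambda}_k^{\textnormal{H}}\mathbf{\Lambda}_k = \textnormal{diag}(\lambda_1^2, \dots, \lambda_{M_{\textnormal{UE}}}^2)$ is diagonal, so the determinant in the definition of $\mathcal{R}^{\textnormal{SVD}}_k$ factorizes and
\begin{equation}
	\mathcal{R}^{\textnormal{SVD}}_k(\mathbf{V}_k, \mathbf{W}_k) = \sum_{i=1}^{M_{\textnormal{UE}}} \log_2\left(1 + \kappa \lambda_i^2\right).
\end{equation}
In parallel, I would record the trace identity that connects the right-hand side to the effective covariance: since $\bar{\mathbf{\Sigma}}_k = \mathbb{E}_{\mathbf{H}_k}\big[\textnormal{vec}(\bar{\mathbf{H}}_k)\textnormal{vec}(\bar{\mathbf{H}}_k)^{\textnormal{H}}\big]$, cyclicity of the trace gives $\textnormal{Tr}(\bar{\mathbf{\Sigma}}_k) = \mathbb{E}_{\mathbf{H}_k}\big[\textnormal{vec}(\bar{\mathbf{H}}_k)^{\textnormal{H}}\textnormal{vec}(\bar{\mathbf{H}}_k)\big] = \mathbb{E}_{\mathbf{H}_k}\big[\|\bar{\mathbf{H}}_k\|_{\textnormal{F}}^2\big]$, where $\|\bar{\mathbf{H}}_k\|_{\textnormal{F}}^2 = \sum_{i=1}^{M_{\textnormal{UE}}} \lambda_i^2$ is just the sum of the squared singular values.

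Next, treating the channel realization as fixed, I would apply Jensen's inequality to the concave map $x \mapsto \log_2(1 + \kappa x)$ across the arithmetic mean of the $M_{\textnormal{UE}}$ eigenmode terms, obtaining the pointwise bound
\begin{equation}
	\mathcal{R}^{\textnormal{SVD}}_k(\mathbf{V}_k, \mathbf{W}_k) \le M_{\textnormal{UE}} \log_2\left(1 + \frac{\kappa}{M_{\textnormal{UE}}} \sum_{i=1}^{M_{\textnormal{UE}}} \lambda_i^2\right) = M_{\textnormal{UE}} \log_2\left(1 + \frac{\kappa}{M_{\textnormal{UE}}} \|\bar{\mathbf{H}}_k\|_{\textnormal{F}}^2\right).
\end{equation}
I would then take the expectation over $\mathbf{H}_k$ and apply Jensen a second time, again exploiting the concavity of $\log_2$ to pull the expectation inside the logarithm, which yields $\mathbb{E}_{\mathbf{H}_k}\big[\mathcal{R}^{\textnormal{SVD}}_k\big] \le M_{\textnormal{UE}} \log_2\big(1 + \kappa M_{\textnormal{UE}}^{-1} \mathbb{E}_{\mathbf{H}_k}[\|\bar{\mathbf{H}}_k\|_{\textnormal{F}}^2]\big)$. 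Substituting the trace identity established above closes the argument and gives exactly \eqref{eq:se_svd_approx}.

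This argument is essentially routine, so I do not expect a genuine obstacle; the only point demanding care is that the two averaging operations are logically distinct. The first is a deterministic arithmetic mean over eigenmodes at a frozen channel, while the second is a probabilistic expectation over the channel law, and the squared singular values $\lambda_i^2$ are themselves functions of $\mathbf{H}_k$. I would therefore keep the pointwise bound strictly separate from the subsequent expectation and invoke concavity at each stage independently, rather than conflating the two into a single step, so that the direction of each Jensen inequality is unambiguous.
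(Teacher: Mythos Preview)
Your proposal is correct and follows essentially the same route as the paper: decompose the log-det into a sum over eigenmodes, apply Jensen's inequality once across the $M_{\textnormal{UE}}$ singular values at a fixed realization, then a second time across the channel expectation, and finally identify $\mathbb{E}_{\mathbf{H}_k}\big[\sum_i \lambda_i^2\big]$ with $\textnormal{Tr}(\bar{\mathbf{\Sigma}}_k)$. The only cosmetic difference is that the paper writes the intermediate quantity as $\textnormal{Tr}(\bar{\mathbf{H}}_k\bar{\mathbf{H}}_k^{\textnormal{H}})$ rather than $\|\bar{\mathbf{H}}_k\|_{\textnormal{F}}^2$, and establishes the trace identity at the end rather than up front.
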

		\begin{proof}
			We first rewrite $\mathcal{R}^{\text{SVD}}_k \left( \mathbf{V}_k, \mathbf{W}_k \right)$ as follows, using the properties of the 
			$\det$ operator:
			\begin{equation}
				\mathcal{R}^{\text{SVD}}_k \left( \mathbf{V}_k, \mathbf{W}_k \right) = 
				\sum_{m=1}^{M_{\text{UE}}} \log_2 \left( 1 + \kappa \lambda_m^2 \right).
			\end{equation}
			According to \emph{Jensen's inequality}, we have:
			\begin{align}
				\sum_{m=1}^{M_{\text{UE}}} \log_2 \left( 1 + \kappa \lambda_m^2 \right) &\le
				M_{\text{UE}} \log_2 \left( 1 + \kappa M_{\text{UE}}^{-1} \sum_{m=1}^{M_{\text{UE}}} \lambda_m^2 \right) \nonumber \\
				&= M_{\text{UE}} \log_2 \left( 1 + \kappa M_{\text{UE}}^{-1} \text{Tr}\left( \bar{\mathbf{H}}_k \bar{\mathbf{H}}^{\text{H}}_k 
				\right)\right).
			\end{align}
			Now, considering the expectation and again exploiting \emph{Jensen's inequality}, we can write:
			\begin{align}
				\mathbb{E}_{\mathbf{H}_k} \Big[ 
				M_{\text{UE}} \log_2 \left( 1 + \kappa M_{\text{UE}}^{-1} \text{Tr}\left( \bar{\mathbf{H}}_k \bar{\mathbf{H}}^{\text{H}}_k \right)\right) 
				\Big] &\le M_{\text{UE}} \log_2 \left( 1 + \kappa M_{\text{UE}}^{-1} \mathbb{E}_{\mathbf{H}_k} \Big[ 
				\text{Tr}\left( \bar{\mathbf{H}}_k \bar{\mathbf{H}}^{\text{H}}_k \right) \Big] \right) \nonumber \\
				&= M_{\text{UE}} \log_2 \left( 1 + \kappa M_{\text{UE}}^{-1} \mathbb{E}_{\mathbf{H}_k} \Big[ 
				\text{Tr}\left( \text{vec}\left(\bar{\mathbf{H}}_k\right) \text{vec}\left( \bar{\mathbf{H}}_k \right)^{\text{H}} \right) \Big] 
				\right) \nonumber \\
				&=M_{\text{UE}} \log_2 \left( 1 + \kappa M_{\text{UE}}^{-1} \text{Tr}\left(\bar{\mathbf{\Sigma}}_k\right)  \right),
			\end{align}
			as given in \eqref{eq:se_svd_approx}.
		\end{proof}
		\begin{corollary}
			The upper bound of the average \ac{SE} in \eqref{eq:se_svd_approx} is maximized when the effective channel covariance 
			$\bar{\mathbf{\Sigma}}_k$ is shaped through a beam selection based on the relevant beams. \qed
		\end{corollary}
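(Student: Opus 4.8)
The plan is to reduce the maximization of the bound in \eqref{eq:se_svd_approx} to the maximization of $\text{Tr}(\bar{\mathbf{\Sigma}}_k)$, and then to identify this trace with the total average power harvested by the activated beam pairs. First I would note that on the right-hand side of \eqref{eq:se_svd_approx} the prefactor $M_{\text{UE}}$ and the constant $\kappa = \rho^2/\sigma_n^2$ are fixed (for a given number of activated receive beams), and that the map $x \mapsto M_{\text{UE}} \log_2(1 + \kappa M_{\text{UE}}^{-1} x)$ is strictly increasing for $x \ge 0$. Since $\bar{\mathbf{\Sigma}}_k$ is a covariance and hence positive semidefinite, $\text{Tr}(\bar{\mathbf{\Sigma}}_k) \ge 0$, so maximizing the bound over the admissible \ac{GoB} beamformers is equivalent to maximizing $\text{Tr}(\bar{\mathbf{\Sigma}}_k)$.

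Next I would substitute the closed form $\bar{\mathbf{\Sigma}}_k = \mathbf{B}_k^{\text{H}} \mathbf{\Sigma}_k \mathbf{B}_k$ from Lemma~\ref{lemma:eff_ch_cov}, with $\mathbf{B}_k = \text{conj}(\mathbf{V}_k) \otimes \mathbf{W}_k$. By the block-column structure of the Kronecker product, the columns of $\mathbf{B}_k$ are exactly the vectors $\mathbf{b}_{v,w} = \text{conj}(\mathbf{v}_v) \otimes \mathbf{w}_w$ as $(v,w)$ ranges over the pairs with $\mathbf{v}_v \in \text{col}(\mathbf{V}_k)$ and $\mathbf{w}_w \in \text{col}(\mathbf{W}_k)$. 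Since the $i$-th diagonal entry of $\mathbf{B}_k^{\text{H}} \mathbf{\Sigma}_k \mathbf{B}_k$ is the quadratic form in the $i$-th column of $\mathbf{B}_k$, this yields $\text{Tr}(\bar{\mathbf{\Sigma}}_k) = \sum_{(v,w)} \mathbf{b}_{v,w}^{\text{H}} \mathbf{\Sigma}_k \mathbf{b}_{v,w}$, the sum being taken over the selected beam pairs. By Lemma~\ref{lemma:rel_ch_comp} each summand equals the nonnegative average power $\mathbb{E}_{\mathbf{H}_k}[|\mathbf{w}_w^{\text{H}} \mathbf{H}_k \mathbf{v}_v|^2]$, so $\text{Tr}(\bar{\mathbf{\Sigma}}_k)$ is precisely the total average power collected by the activated beam pairs.

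Finally, with the number of activated pairs fixed at $M_{\text{BS}} M_{\text{UE}}$ and every summand nonnegative, the total captured power -- and with it the bound -- is made largest by retaining the beam pairs of highest average power, i.e. exactly those whose power exceeds the threshold $\xi$ in the definition \eqref{eq:rel_channel_comp} of the relevant channel components $\mathcal{M}_k$; this gives the stated claim that shaping $\bar{\mathbf{\Sigma}}_k$ toward the relevant beams maximizes the bound. The step I expect to be delicate is precisely this last one: the activated pairs cannot be chosen freely but are constrained to the Cartesian-product form $\text{col}(\mathbf{V}_k) \times \text{col}(\mathbf{W}_k)$, so a strict greedy ``top-power'' selection need not be realizable. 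For this reason I would present the result as a qualitative alignment statement -- steering energy into the dominant (relevant) components increases $\text{Tr}(\bar{\mathbf{\Sigma}}_k)$ and hence the upper bound -- rather than as an exact combinatorial optimum, which matches the intended reading of the corollary.
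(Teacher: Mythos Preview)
Your proposal is correct and follows essentially the same route as the paper: reduce to maximizing $\text{Tr}(\bar{\mathbf{\Sigma}}_k)$, expand via Lemma~\ref{lemma:eff_ch_cov} as $\sum_m \mathbf{b}_m^{\text{H}}\mathbf{\Sigma}_k\mathbf{b}_m$, and conclude that the strongest $M_{\text{BS}}M_{\text{UE}}$ components should be selected. You are in fact more careful than the paper, which omits the monotonicity step and, notably, does not acknowledge the Cartesian-product constraint $\text{col}(\mathbf{V}_k)\times\text{col}(\mathbf{W}_k)$ that you correctly flag as the delicate point.
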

		\begin{proof}
			From Lemma~\ref{lemma:eff_ch_cov} we have:
			\begin{align} \label{eq:sum_trace_rel_ch_comp}
				\text{Tr} \left( \bar{\mathbf{\Sigma}}_k \right) 
				&= \text{Tr} \left( \mathbf{B}_k^{\textnormal{H}} \mathbf{\Sigma}_k \mathbf{B}_k \right) \nonumber \\
				&= \sum_{m=1}^{M_{\text{BS}} M_{\text{UE}}} \mathbf{b}_m^{\text{H}} \mathbf{\Sigma}_k \mathbf{b}_m.
			\end{align}			
			The sum in \eqref{eq:sum_trace_rel_ch_comp} is maximized when the first $M_{\text{BS}} M_{\text{UE}}$ strongest channel components 
			are selected for transmission, among the ones in $\mathcal{M}_k$.
		\end{proof}
		Fig.~\ref{fig:avg_se_jensen} shows that \eqref{eq:se_svd_approx} is a good upper-bound of the actual average \ac{SE}, that can be used to 
		perform beam selection. Thus, as a first approximation towards the maximization of the overall effective network throughput as 
		in \eqref{eq:gen_opt_after_BD}, we formulate the uncoordinated beam selection problem \eqref{eq:un_beam_sel} which aims to maximize 
		instead the sum \ac{SE} defined as $\sum_{k=1}^K \mathcal{R}^{\text{SVD}}_k \left( \mathbf{V}_k, \mathbf{W}_k \right)$:
		\begin{align} \label{eq:un_beam_sel} \tag{P1}
			\left(\mathbf{V}^{\text{(P1)}}, \mathbf{W}^{\text{(P1)}}\right) = 
			& \argmax_{\mathbf{V}, \mathbf{W}} ~\sum_{k=1}^K 
			M_{\textnormal{UE}} \log_2 \left( 1 + \kappa M_{\textnormal{UE}}^{-1} \textnormal{Tr}\left(\bar{\mathbf{\Sigma}}_k\right)  \right), \\
			& \text{subject to} ~\text{col}\left(\mathbf{V}\right) \in \mathcal{B}_{\text{BS}} \nonumber \\
			& \phantom{\text{subject to}} ~\text{col}\left(\mathbf{W}_k\right) \in \mathcal{B}_{\text{UE}}, ~\forall k = \{ 1, \dots, K \}. \nonumber
		\end{align}
		Since the objective function in \eqref{eq:un_beam_sel} is \emph{disjoint} with the \acp{UE}, 
		\eqref{eq:un_beam_sel} can be solved through letting each \ac{UE} maximizing its own related term in the sum. 
		In particular, the $k$-th \ac{UE} shapes its channel covariance using the beams in $\mathcal{M}_k$. 
		Such a task requires a linear search over the $B_{\text{BS}} B_{\text{UE}}$ elements in the \ac{GoB} codebooks.
		The precoder matrix at the \ac{BS} side (common to all the \acp{UE}) is then constructed as 
		$\text{col}\left(\mathbf{V}\right) = \cup_{k=1}^K \text{col}\left(\mathbf{V}_k\right)$.
		The relevant channel components (or beams) offers thus a straightforward method to design the \ac{GoB} beamformers.
		Note that \eqref{eq:un_beam_sel} represents the baseline performance for the algorithms that we will introduce in the following.
		Uncoordinated \ac{SNR}-based approaches like \eqref{eq:un_beam_sel} can be found in previous works related to \emph{hybrid beamforming}, 
		which apply to \ac{FDD} \ac{mMIMO} as well. For example, the authors in~\cite{Kim2013} focus on single-user beam selection and 
		propose strategies based on the received signal strength. Likewise, in~\cite{Alkhateeb2015}, a multi-user beam selection method is proposed,
		where the analog beams are chosen according to the strongest paths at each \ac{UE}. Although our derivation of \eqref{eq:un_beam_sel} falls 
		within covariance shaping and differs from previous works, the resulting beam selection can be considered de facto identical.
		
		\begin{figure}[h]
			\centering
			\includegraphics[trim=1.57in 3.37in 1.57in 3.57in, clip, scale=0.687]{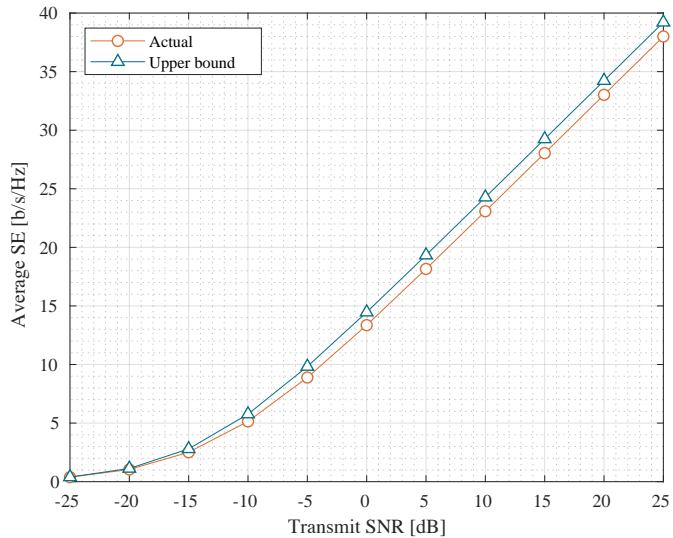}
			\caption{Average \ac{SE} vs SNR for a single-user case where beam selection is based on the relevant beams. 
			The simulation parameters are as follows: $N_{\text{BS}} = 64$, $N_{\text{UE}} = 4$, $M_{\text{BS}} = 5$, $M_{\text{UE}} = 3$. 
			The upper bound obtained in \eqref{eq:se_svd_approx} can be used as a tight approximation for the actual \ac{SE}.}
			\label{fig:avg_se_jensen}
		\end{figure}
				
	\subsection{Minimizing Multi-User Interference} \label{sec:min_mu_int}
		
		As well-captured in Fig.~\ref{fig:scen}, the uncoordinated selection of the \ac{GoB} beamformers as in \eqref{eq:un_beam_sel} can lead to
		overall inefficient strategies in terms of training overhead and multi-user interference reduction. As opposed to uncoordinated approaches, 
		clever coordinated beam selection strategies can help shaping the effective channel subspaces so as to optimize the multi-user transmission. 
		In this section, we will show that a proper beam selection can be made so as to take multi-user interference into account within the covariance
		shaping process. 

		As seen in Section~\ref{sec:Data_BF_Design}, the \ac{BD} approach imposes two crucial conditions on the overall effective channel 
		$\bar{\mathbf{H}} \triangleq \mathbf{W}^{\text{H}} \mathbf{H} \mathbf{V}$ for transmitting data without multi-user interference:
		\begin{itemize}
			\item{No inter-user interference $\iff \left|\text{null}\left(\bar{\mathbf{H}}_{/k}\right)\right| \neq 0$;}
			\item{No inter-stream interference $\iff \text{rank}\left(\bar{\mathbf{H}}_k \bar{\mathbf{M}}_{/k}^{(0)}\right) \ge 1$.}
		\end{itemize}
		From Lemma~\ref{lemma:interference_cancel}, we know that the second condition requires at least one vector in 
		$\text{row}\left(\bar{\mathbf{H}}_k\right)$ that is \ac{LI} of $\text{row}\left(\bar{\mathbf{H}}_{/k}\right)$.
		Opposite to \ac{TDD} \ac{mMIMO}, where the \ac{LMMSE} can return \ac{LI} channel estimates depending on the propagation
		environment~\cite{Bjornson2018}, the estimates in \eqref{eq:mmse_estimate} are \ac{LI} \emph{almost surely}, due to independent channel
		realizations and estimation processes at the \ac{UE} side. Therefore, the second condition for multi-user interference cancellation is 
		\emph{always} respected in the case of downlink training with \ac{LMMSE} estimation at the \ac{UE} side.
		
		On the other hand, whenever $(K-1) M_{\text{UE}} < M_{\text{BS}}$, then 
		$\left|\text{null}\left(\bar{\mathbf{H}}_{/k}\right)\right| \ge M_{\text{BS}} - (K-1) M_{\text{UE}} > 0$. 
		Therefore, in such a case, it is \emph{always} possible (for whatever $\mathbf{V}$ and $\mathbf{W}$) to find a matrix 
		$\bar{\mathbf{M}}_{/k}^{(0)}$ in \eqref{eq:svd_h_no_k} different from the null matrix $\mathbf{0}$ and as such, to remove 
		multi-user interference. In this case, the minimum training overhead becomes proportional to $K$ and comparable to the one needed 
		in \ac{TDD} operation (although in \ac{TDD} such overhead is generated in the uplink channel). The bottom line is that 
		$\left(K-1\right)M_{\text{UE}} < M_{\text{BS}}$ is the \emph{only} condition that the \ac{BD} precoding imposes on the \ac{GoB} 
		beamformers design in order to suppress multi-user interference.
		
		Nevertheless, the \ac{BD} precoding affects the received gain at the generic $k$-th \ac{UE}. In particular, depending on how much the effective
		channels in $\bar{\mathbf{H}}$ are \emph{spatially-separated}, the application of the precoding matrix $\bar{\mathbf{M}}_{/k}^{(0)}$ on 
		$\bar{\mathbf{H}}_k$ can lead to a drastic gain loss compared to the single-user case (refer to Proposition~\ref{prop:se_svd_approx}). 
		In order to infer such loss, the so-called \ac{CMD} can be used. 
		The \ac{CMD} has been exploited in~\cite{Mursia2018} to increase the \emph{spatial separability} among 
		the \acp{UE} through covariance shaping at the \ac{UE} side. The authors in~\cite{Mursia2018} consider a $2$-\ac{UE} case. 
		For multiple \acp{UE}, we introduce the \acf{GCMD} as follows.
		\begin{definition}
			We define the \ac{GCMD} $\delta_k\left(\mathbf{\Sigma}_1, \dots, \mathbf{\Sigma}_K\right) \in [0, 1]$ between the channel covariance 
			$\mathbf{\Sigma}_k$ of the $k$-th \ac{UE} and the channel covariance
			$\mathbf{\Sigma}_j$ of the $j$-th \ac{UE}, where $j \in \{1, \dots, K\} \backslash \{k\}$ as
			\begin{equation}
				\delta_k\left(\mathbf{\Sigma}_1, \dots, \mathbf{\Sigma}_K\right) \triangleq 
				1 - \frac{1}{K-1} \sum_{\substack{j=1 \\ j \neq k}}^K \frac{\textnormal{Tr}\left( \mathbf{\Sigma}_k \mathbf{\Sigma}_j \right)}
				{\norm{\mathbf{\Sigma}_k}_{\textnormal{F}} \norm{\mathbf{\Sigma}_j}_{\textnormal{F}}}.
			\end{equation}
		\end{definition}
		Note that the \emph{spatial orthogonality} condition, i.e. $\textnormal{Tr}\left( \mathbf{\Sigma}_k \mathbf{\Sigma}_j \right) = 0,
		~\forall j \neq k$, which was exploited in several other studies related to \ac{FDD} \ac{mMIMO} optimization~\cite{Yin2013, Adhikary2013} is
		equivalent to $\delta_k\left(\mathbf{\Sigma}_1, \dots, \mathbf{\Sigma}_K\right) = 1, ~\forall k$. This is a desirable spatial condition for 
		which the \ac{BD} incurs no reduction of the channel gain. On the other hand, the \ac{GCMD} becomes zero when the covariance matrices of 
		the \acp{UE} are equal up to a scaling factor. Both these extreme conditions are seldom experienced in practical 
		scenarios~\cite{Gao2015, Bjornson2018}. Nevertheless, when the channel covariances are shaped through statistical beamforming, 
		resulting in some \emph{effective} channel covariances, the \ac{GCMD} can be used as a metric to evaluate how the covariance shaping 
		affects the \emph{spatial separability} of the \acp{UE}. 
		
		In particular, we use the \ac{GCMD} evaluated on the effective covariances 
		$\bar{\mathbf{\Sigma}}_k \triangleq \mathbf{B}_k^{\textnormal{H}} \mathbf{\Sigma}_k \mathbf{B}_k, ~\forall k$, 
		where $\mathbf{B}_k \triangleq \left( \textnormal{conj}\left(\mathbf{V}\right) 
		\otimes \mathbf{W}_k \right)$, to introduce a \emph{penalty factor} in $\mathcal{R}^{\text{SVD}}_k \left( \mathbf{V}_k, \mathbf{W}_k \right)$,
		so as to approximate the \ac{SE} in \eqref{eq:SINR_gen_BD} achieved after \ac{BD} precoding. 
		In this case, the \ac{GoB} beamformers $\left(\mathbf{V}^{\text{(P2)}}, \mathbf{W}^{\text{(P2)}}\right)$ are
		obtained through solving the coordinated beam selection problem \eqref{eq:cov_beam_sel}, as follows:
		\begin{align} \label{eq:cov_beam_sel} \tag{P2}
			\left(\mathbf{V}^{\text{(P2)}}, \mathbf{W}^{\text{(P2)}}\right) = 
			&\argmax_{\mathbf{V}, \mathbf{W}} ~\sum_{k=1}^K 
			M_{\textnormal{UE}} \log_2 \left( 1 + \kappa M_{\textnormal{UE}}^{-1} \textnormal{Tr}\left(\bar{\mathbf{\Sigma}}_k\right) 
			\delta_k\left(\bar{\mathbf{\Sigma}}_1, \dots, \bar{\mathbf{\Sigma}}_K\right) \right), \\
			& \text{subject to} ~\left(K-1\right)M_{\text{UE}} < M_{\text{BS}} \nonumber \\
			& \phantom{\text{subject to}} ~\text{col}\left(\mathbf{V}\right) \in \mathcal{B}_{\text{BS}} \nonumber \\
			& \phantom{\text{subject to}} ~\text{col}\left(\mathbf{W}_k\right) \in \mathcal{B}_{\text{UE}}, ~\forall k = \{ 1, \dots, K \}. \nonumber
		\end{align}
		In \eqref{eq:cov_beam_sel}, the beam decision at the $k$-th \ac{UE} influences the beam decisions at all the other
		\acp{UE}. Therefore, a central coordinator knowing all the large-dimensional channel covariances 
		$\mathbf{\Sigma}_k, ~\forall k$ and which dictates the beam strategies to each \ac{UE} is needed to solve this
		problem. In Section~\ref{sec:Dec_Coo_BS_Algos}, we will propose a hierarchical approach to circumvent this issue.
		Note that in both \eqref{eq:un_beam_sel} and \eqref{eq:cov_beam_sel} we have proposed approximations 
		of the \ac{SE} which neglect the pre-log factor relative to the training overhead. We will now look into the third condition required for an 
		effective \ac{GoB} beamformers design in the \ac{FDD} \ac{mMIMO} regime, which is the minimization of the training overhead.

	\subsection{Minimizing Training Overhead} \label{sec:ue_rep_train_ov}
		
		As seen in Section~\ref{sec:Coo_Beam_Rep}, there is a direct relation between the training overhead and the design of the \ac{GoB} precoder 
		$\mathbf{V}$ (refer to Definition~\ref{def:train_ov_v}). In particular, under the \ac{GoB} assumption, the training overhead 
		$\omega\left(\mathbf{V}\right)$ ranges in $\left[1, (\tau/T) B_{\text{BS}} \right]$, where the right extreme is experienced when all the beams 
		in the codebook $\mathcal{B}_{\text{BS}}$ are trained. The more beams are trained, the more spatial degrees of freedom are obtained (higher
		spatial multiplexing and beamforming gain). However, when considering the training overhead, adding more and more beams is \emph{likely} 
		to result in diminishing returns~\cite{Zirwas2016}. 
		
		The alternative is to train the relevant channel components~\cite{Zirwas2016, Xiong2017}, as those relate to the spatial subspaces which give 
		the strongest gain. In this case, the design of the precoder $\mathbf{V}$ is not separated from the design of the combiners $\mathbf{W}$, as well
		captured in \eqref{eq:rel_channel_comp_w}. In the current $3$GPP specifications, a beam reporting procedure is designed to assist the \ac{BS} in the
		precoder selection~\cite{3GPP2018c}. Such procedure has a direct impact on the performance of the \ac{DL} \ac{SE}. In particular, the $k$-th
		\ac{UE} reports to the \ac{BS} the set $\mathcal{M}_k^{\text{BS}}(\mathbf{W}_k)$ -- also known as \ac{PMI}~\cite{3GPP2018c} -- following an
		appropriate \ac{GoB} combiner (beam) selection. To this end, we reformulate the definition of the training overhead, depending on the beam
		decisions carried out at the \ac{UE} side.
		\begin{definition}
			Let $\mathbf{W} \in \mathbb{C}^{K N_{\textnormal{UE}} \times K M_{\textnormal{UE}}}$ be the overall \ac{GoB} combiner as in
			\eqref{eq:overall_GoB_combiner}. The training overhead $\omega\left(\mathbf{W}\right)$ is defined as follows:
			\begin{equation} \label{eq:training_overhead_w}
				\omega\left(\mathbf{W}\right) \triangleq \frac{\tau}{T} 
				\textnormal{card}\left(\bigcup_{k=1}^K \mathcal{M}_k^{\textnormal{BS}}(\mathbf{W}_k)\right).
			\end{equation}
		\end{definition}
		In the $3$GPP implementation, the beam decisions carried out at each \ac{UE} have thus a central role in affecting the training overhead 
		under the \ac{GoB} approach. Note that $\omega\left(\mathbf{W}\right)$ can increase and approach the extreme value $(\tau/T) B_{\text{BS}}$ 
		in heterogeneous propagation environments with rich scattering, due to the growing number of relevant beams to activate at the \ac{BS} 
		side~\cite{Zirwas2016}. In this respect, adopting approaches such as \eqref{eq:un_beam_sel} or \eqref{eq:cov_beam_sel} 
		for selecting the beams can undermine the application of the \ac{GoB} approach in multi-user scenarios. 
		
		On the other hand, the largest training overhead reduction is achieved when the \acp{UE} coordinate in the beam domain so that
		to activate the smallest possible number of beams at the BS.
		In general, a balance between achievable beamforming gain and required training overhead, as well as multi-user interference, has to be 
		considered in the beam decision process and combiner selection at the \acp{UE}. In the following, we formulate two optimization problems which 
		take the pre-log factor relative to the training overhead into account. In the first one, the pre-log term is added in the objective function of the
		optimization problem \eqref{eq:un_beam_sel}. Thus, we introduce the coordinated beam selection problem \eqref{eq:coo_beam_sel}, where 
		both the achieved channel gain and the training overhead are taken into account, as follows:
		\begin{align} \label{eq:coo_beam_sel} \tag{P3}
			\left(\mathbf{V}^{\text{(P3)}}, \mathbf{W}^{\text{(P3)}}\right) = 
			&\argmax_{\mathbf{V}, \mathbf{W}} ~(1-\omega\left(\mathbf{W}\right)) \sum_{k=1}^K 
			M_{\textnormal{UE}} \log_2 \left( 1 + \kappa M_{\textnormal{UE}}^{-1} \textnormal{Tr}\left(\bar{\mathbf{\Sigma}}_k\right) \right), \\
			& \text{subject to} ~\left(K-1\right)M_{\text{UE}} < M_{\text{BS}} \nonumber \\
			& \phantom{\text{subject to}} ~\text{col}\left(\mathbf{V}\right) \in \mathcal{B}_{\text{BS}} \nonumber \\
			& \phantom{\text{subject to}} ~\text{col}\left(\mathbf{W}_k\right) \in \mathcal{B}_{\text{UE}}, ~\forall k = \{ 1, \dots, K \}, \nonumber
		\end{align}
		where we recall that $\left(K-1\right)M_{\text{UE}} < M_{\text{BS}}$ relates to the crucial condition that the \ac{BD} 
		precoding imposes on the \ac{GoB} beamformers design in order to suppress multi-user interference.
		
		The last optimization problem that we introduce aims at balancing the three conditions for an effective \ac{GoB} beamformers design that we
		have considered in this section. As such, the long-term beam selection problem \eqref{eq:csh_beam_sel} includes the pre-log factor relative
		to the training overhead in the objective function of the problem \eqref{eq:cov_beam_sel}. The optimum \ac{GoB} beamformers
		$\left(\mathbf{V}^{\text{(P4)}}, \mathbf{W}^{\text{(P4)}}\right)$ are thus obtained through solving the following optimization problem:
		\begin{align} \label{eq:csh_beam_sel} \tag{P4}
			\left(\mathbf{V}^{\text{(P4)}}, \mathbf{W}^{\text{(P4)}}\right) = 
			&\argmax_{\mathbf{V}, \mathbf{W}} ~(1-\omega\left(\mathbf{W}\right)) \sum_{k=1}^K 
			M_{\textnormal{UE}} \log_2 \left( 1 + \kappa M_{\textnormal{UE}}^{-1} \textnormal{Tr}\left(\bar{\mathbf{\Sigma}}_k\right) 
			\delta_k\left(\bar{\mathbf{\Sigma}}_1, \dots, \bar{\mathbf{\Sigma}}_K\right) \right), \\
			& \text{subject to} ~\left(K-1\right)M_{\text{UE}} < M_{\text{BS}} \nonumber \\
			& \phantom{\text{subject to}} ~\text{col}\left(\mathbf{V}\right) \in \mathcal{B}_{\text{BS}} \nonumber \\
			& \phantom{\text{subject to}} ~\text{col}\left(\mathbf{W}_k\right) \in \mathcal{B}_{\text{UE}}, ~\forall k = \{ 1, \dots, K \}. \nonumber
		\end{align}
		The same conclusions drawn for the optimization problem \eqref{eq:cov_beam_sel} are valid for \eqref{eq:coo_beam_sel} and \eqref{eq:csh_beam_sel}.
		In particular, to solve \eqref{eq:csh_beam_sel}, the central coordinator needs to know the \acp{PMI} 
		$\mathcal{M}_k^{\text{BS}}\left( \mathbf{W}_k \right), ~\forall k$ in addition to the channel covariances $\mathbf{\Sigma}_k, ~\forall k$.
		
		Fig.~\ref{fig:opt_probs_approxs} compares the effective network throughput $\mathcal{R}$ as in \eqref{eq:gen_opt_after_BD} with its 
		approximations in \eqref{eq:un_beam_sel}-\eqref{eq:csh_beam_sel}. 
		The approximated objective function of the optimization problem \eqref{eq:csh_beam_sel} gives the tightest 
		upper bound to the actual effective network throughput as expected. We summarize the proposed
		optimization problems and their considered sub-problems as introduced above in Table~\ref{tab:problems}. 
		
		In the next section, we will propose a framework exploiting \ac{D2D} communications which will allow for a decentralized implementation 
		of a series of beam selection algorithms based on the problems \eqref{eq:cov_beam_sel}-\eqref{eq:csh_beam_sel} described above. 
		The nature of such problems is such that \eqref{eq:un_beam_sel}-\eqref{eq:csh_beam_sel} offer and explore various 
		\emph{complexity-performance} trade-offs interesting from the implementation perspective.
		
		\setlength{\belowcaptionskip}{-1em}
		\begin{table}[h]
			\centering
		  	\caption{The proposed optimization problems \eqref{eq:un_beam_sel}-\eqref{eq:csh_beam_sel} with their considered sub-problems.}
	    		\label{tab:problems}
	    		\begin{tabular}{|c|c|c|c|} \hline
	      			\textbf{Problem} & \emph{$\max$ channel gain} & \emph{$\min$ multi-user interference} & \emph{$\min$ training overhead} \\
	      			\hline
	      			Uncoordinated (P1) & \Checkedbox & \HollowBox & \HollowBox \\
	      			Coordinated (P2) & \Checkedbox & \Checkedbox & \HollowBox \\ 
	      			Coordinated (P3) & \Checkedbox & \HollowBox & \Checkedbox \\ 
	      			Coordinated (P4) & \Checkedbox & \Checkedbox & \Checkedbox \\
	      			\hline
	    		\end{tabular}
		\end{table}
		\begin{figure}[h]
			\centering
			\includegraphics[trim=1.57in 3.37in 1.57in 3.57in, clip, scale=0.78]{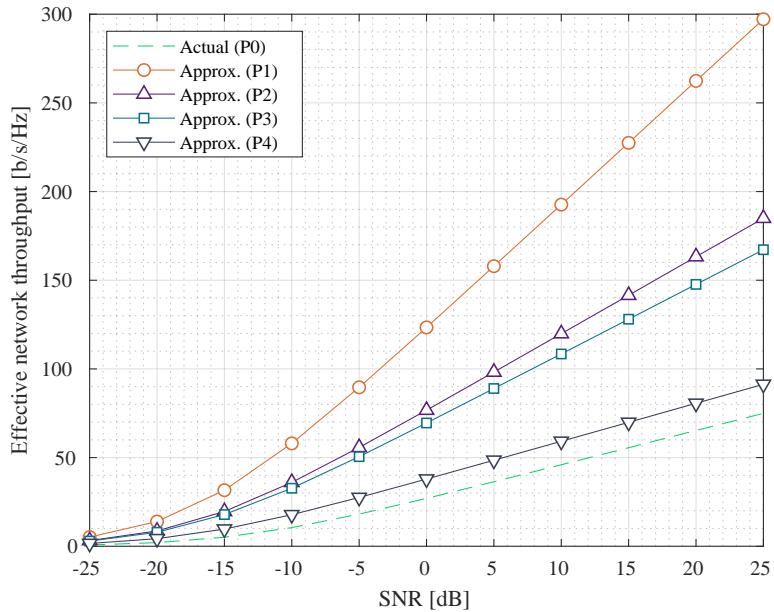}
			\caption{Comparison of the actual effective network throughput $\mathcal{R}$ as in \eqref{eq:ent} and its approximations defined in
			\eqref{eq:gen_opt_after_BD}-\eqref{eq:csh_beam_sel}. $K = 7$ \acp{UE}. 
			The approximation \eqref{eq:csh_beam_sel} is the closest to the actual throughput.}
			\label{fig:opt_probs_approxs}
		\end{figure}
						
\section{Decentralized Coordinated Beam Selection Algorithms} \label{sec:Dec_Coo_BS_Algos}
	
	Although no instantaneous information is needed to solve \eqref{eq:cov_beam_sel}-\eqref{eq:csh_beam_sel},
	such problems still require a central coordinator that knows the channel covariances 
	$\mathbf{\Sigma}_k ~\forall k$ and dictates the beam strategies to each \ac{UE}. 
	This is because the beam decisions at the generic $k$-th \ac{UE} affect the beam decisions at all the other \acp{UE}. 
	In this respect, collecting such large-dimensional statistical information at a central coordinator 
	as e.g. the \ac{BS} involves additional resource overhead. 
	
	In order to achieve decentralized coordination, we propose to use a hierarchical information structure requiring small overhead. 
	In particular, an \emph{(arbitrary)} order among the UEs is established\footnote{The hierarchical order of the \acp{UE} has 
	a clear impact on the performance of the proposed scheme. In this paper, we will consider a random order and leave aside 
	further analysis on how such \emph{hierarchy} is defined and maintained.}, for which the $k$-th UE has access to some long-term 
	statistical information of the (lower-ranked) \acp{UE} $1, \dots, k-1$.
	This configuration is obtainable through e.g. \ac{D2D} communications. In this respect, the $3$GPP Release $16$ is expected to support point-to-point
	side-links which facilitate cooperative communications among the \acp{UE} with low resource consumption~\cite{Fodor2016, 3GPP2018d}. 
	The \emph{recently-specified} \ac{NR} side-link is thus a cornerstone for the proposed scheme. We further assume that such exchanged information
	is \emph{perfectly} decoded at the intended UEs.
	
	The full signaling sequence of the proposed hierarchical beam selection is given in Fig.~\ref{fig:signaling}. The core part of the procedure resides 
	in the long-term beam decision made at each UE on a beam coherence time basis so that the respective objective function is maximized.
	Based on the objective functions in \eqref{eq:cov_beam_sel}-\eqref{eq:csh_beam_sel}, we consider $3$ different beam decision policies, as in
	\eqref{eq:obj_func_dec}. Such policies have different requirements concerning the statistical information to exchange through \ac{D2D} side-links.
	In Table~\ref{tab:algorithms}, we summarize the differences between the proposed beam selection policies based on 
	\eqref{eq:un_beam_sel}-\eqref{eq:csh_beam_sel} with respect to the required information at the $k$-th \ac{UE}.
	
	\begin{table}[h]
		\centering
		\caption{The proposed algorithms and their required information at the $k$-th \ac{UE}.
  		The information relative to the (lower-ranked) \acp{UE} $1, \dots, k-1$ is exchanged through \ac{D2D} side-links.}
    	\label{tab:algorithms}
    	\begin{tabular}{|c|c|c|} \hline
      		\textbf{Algorithm} & \emph{Required local information} & \emph{Required information to be exchanged through \ac{D2D}} \\
      		\hline
      		Uncoordinated (P1) & $\mathbf{\Sigma}_k$ & \emph{Nothing} \\
      		Coordinated (P2) & $\mathbf{\Sigma}_k$ & $\bar{\mathbf{\Sigma}}_j, ~j \in \{ 1, \dots, k-1 \}$ \\ 
      		Coordinated (P3) & $\mathbf{\Sigma}_k$ & $\mathcal{M}_j^{\text{BS}}(\mathbf{W}_j), ~j \in \{ 1, \dots, k-1 \}$ \\ 
      		Coordinated (P4) & $\mathbf{\Sigma}_k$ & $\bar{\mathbf{\Sigma}}_j, 
      		~\mathcal{M}_j^{\text{BS}}(\mathbf{W}_j), ~j \in \{ 1, \dots, k-1 \}$\\ \hline
    	\end{tabular}
	\end{table}
	\begin{figure}[h]
		\centering
		\resizebox{11.07cm}{!}{
			\begin{tikzpicture}
				\filldraw[fill=EurecomBlue, opacity=0.27] (-5,0) rectangle (7.17,4.87);
				\draw (-3.5,5.5) -- (-3.5,-6);
				\draw (-.5,5.5) -- (-.5,-6);
				\draw (2.5,5.5) -- (2.5,-6);
				\draw (5.5,5.5) -- (5.5,-6);
				\draw[thick, loosely dotted] (-3.5,-6) -- (-3.5,-6.5);
				\draw[thick, loosely dotted] (-.5,-6) -- (-.5,-6.5);
				\draw[thick, loosely dotted] (2.5,-6) -- (2.5,-6.5);
				\draw[thick, loosely dotted] (5.5,-6) -- (5.5,-6.5);
				\filldraw[fill=white] (-4,5) rectangle (-3, 5.5) node[pos=.5] {\small{BS}};
				\filldraw[fill=white] (-1,5) rectangle (0, 5.5) node[pos=.5] {\small{UE $3$}};
				\filldraw[fill=white] (2,5) rectangle (3, 5.5) node[pos=.5] {\small{UE $2$}};
				\filldraw[fill=white] (5,5) rectangle (6, 5.5) node[pos=.5] {\small{UE $1$}};
				\fill[pattern=north west lines, pattern color=EurecomBlueDark] 
				(-3.5,-2) rectangle (5.5,-1.5) node[pos=.5, fill=white]{\footnotesize Training phase};
				\fill[pattern=north west lines, pattern color=EurecomBlueDark] 
				(-3.5,-3.5) rectangle (5.5,-3) node[pos=.5, fill=white]{\footnotesize Training phase};
				\fill[pattern=north west lines, pattern color=EurecomBlueDark] 
				(-3.5,-5) rectangle (5.5,-4.5) node[pos=.5, fill=white]{\footnotesize Training phase};
				\filldraw[fill=white] (4.3,4) rectangle (6.7, 4.5) node[pos=.5] {\small Beam decision};
				\draw[<-, thick, orange] (2.5, 3.5) -- (5.5, 3.5) node[pos=.5, above]{\small D2D};
				\draw[<-, thick, orange] (-.5, 3.37) -- (5.5, 3.37)
				node[black, right]{\footnotesize $\mathcal{M}_1^{\text{BS}}\left(\mathbf{W}_1^* \right)$};
				\draw[<-, dashed] (-3.5, 3.24) -- (5.5, 3.24);
				\filldraw[fill=white] (1.3,2.5) rectangle (3.7, 3) node[pos=.5] {\small Beam decision};
				\draw[<-, thick, orange] (-.5, 2) -- (2.5, 2) node[pos=.5, above]{\small D2D}
				node[black, right]{\footnotesize $\mathcal{M}_2^{\text{BS}}\left(\mathbf{W}_2^* \right)$};
				\draw[<-, dashed] (-3.5, 1.87) -- (2.5, 1.87);
				\filldraw[fill=white] (-1.7,1) rectangle (.7, 1.5) node[pos=.5] {\small Beam decision};
				\draw[<-, dashed] (-3.5, .5) -- (-.5, .5)
				node[black, right]{\footnotesize $\mathcal{M}_3^{\text{BS}}\left(\mathbf{W}_3^* \right)$};
				\filldraw[white] (2, .5) rectangle (7,1) 
				node[pos=.5, black]{\textbf{\emph{\footnotesize Hierarchical beam selection block}}};
				\filldraw[fill=white] (-5.5, -1) rectangle (-1.5, -.5) node[pos=.5]{\footnotesize Select the GoB precoder};
				\draw[->, dashed] (-3.5, -1.5) node[left]{$\mathbf{VS}$} -- (5.5, -1.5) node[above, pos=.17]{\footnotesize Precoded CSI-RS};
				\draw[->, dashed] (-3.5, -1.5) -- (2.5, -1.5);
				\draw[->, dashed] (-3.5, -1.5) -- (-.5, -1.5);
				\draw[<-, dashed] (-3.5, -2) -- (5.5, -2) node[right]{$\mathbf{\hat{H}}$};
				\filldraw[black] (2.5, -2) circle (.07cm);
				\filldraw[black] (-.5, -2) circle (.07cm);
				\fill[pattern=north east lines, pattern color=EurecomBlue] (-3.5,-3) rectangle (5.5,-2) 
				node[pos=.5, fill=white]{\footnotesize Data communication phase};
				\filldraw[fill=white] (-5.5, -2.75) rectangle (-1.5, -2.25) node[pos=.5]{\footnotesize Set the mMIMO precoder};
				\draw[->, dashed] (-3.5, -3) node[left]{$\mathbf{VS}$} -- (5.5, -3);
				\draw[->, dashed] (-3.5, -3) -- (2.5, -3);
				\draw[->, dashed] (-3.5, -3) -- (-.5, -3);
				\draw[<-, dashed] (-3.5, -3.5) -- (5.5, -3.5) node[right]{$\mathbf{\hat{H}}$};
				\filldraw[black] (2.5, -3.5) circle (.07cm);
				\filldraw[black] (-.5, -3.5) circle (.07cm);
				\fill[pattern=north east lines, pattern color=EurecomBlue] (-3.5,-3.5) rectangle (5.5,-4.5) 
				node[pos=.5, fill=white]{\footnotesize Data communication phase};
				\filldraw[fill=white] (-5.5, -4.25) rectangle (-1.5, -3.75) node[pos=.5]{\footnotesize Set the mMIMO precoder};
				\draw[decorate,decoration={brace,amplitude=11pt}](6,-3) -- (6,-4.5) 
				node [black,midway,xshift=0.7cm, rotate=-90]{\footnotesize Channel Coherence Time};
				
				\draw[->, dashed] (-3.5, -4.5) node[left]{$\mathbf{VS}$} -- (5.5, -4.5);
				\draw[->, dashed] (-3.5, -4.5) -- (2.5, -4.5);
				\draw[->, dashed] (-3.5, -4.5) -- (-.5, -4.5);
				\draw[<-, dashed] (-3.5, -5) -- (5.5, -5) node[right]{$\mathbf{\hat{H}}$};
				\filldraw[black] (2.5, -5) circle (.07cm);
				\filldraw[black] (-.5, -5) circle (.07cm);
				\fill[pattern=north east lines, pattern color=EurecomBlue] (-3.5,-6) rectangle (5.5,-5) 
				node[pos=.5, fill=white]{\footnotesize Data communication phase};
				\filldraw[fill=white] (-5.5, -5.25) rectangle (-1.5, -5.75) node[pos=.5]{\footnotesize Set the mMIMO precoder};
				\draw[decorate,decoration={brace,amplitude=11pt, mirror}] (-5.77,5) -- (-5.77,-6.5) 
				node [black,midway,xshift=-0.7cm, rotate=90]{\footnotesize Beam Coherence Time};
			\end{tikzpicture}
		}
		\caption{Signaling sequence of the proposed coordinated beam selection \eqref{eq:coo_beam_sel} for $K = 3$. 
		The beam decision made at each UE leverages the \ac{D2D}-enabled long-term statistical information 
		(the \ac{PMI} $\mathcal{M}_k^{\text{BS}}\left(\mathbf{W}_k^* \right)$, for \eqref{eq:coo_beam_sel}) coming from the 
		higher-ranked UEs in a hierarchical fashion.}
		\label{fig:signaling}
	\end{figure}
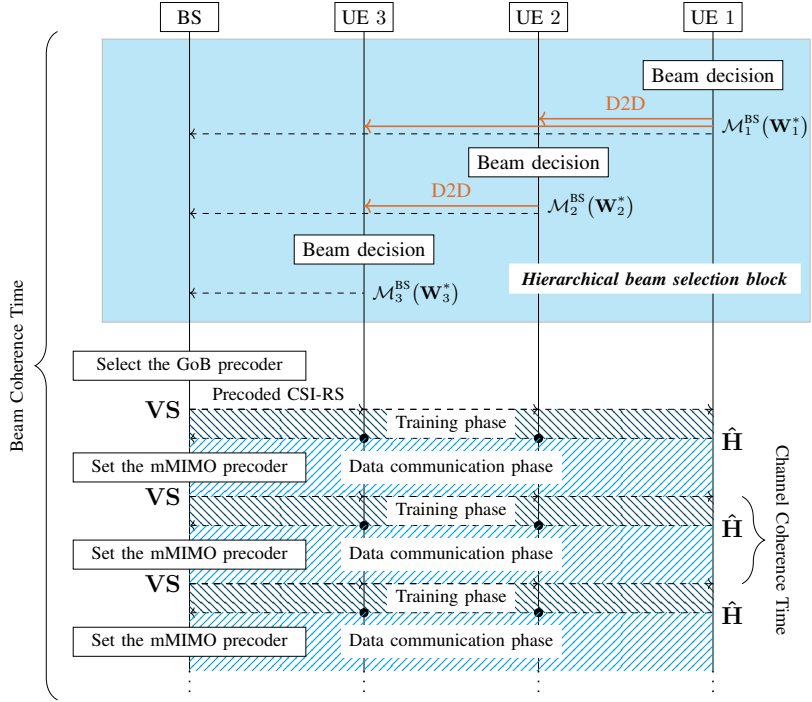
			
	Let us consider w.l.o.g. the beam selection at the $k$-th \ac{UE}, i.e. at the $k$-th step of the algorithm, for the algorithm \eqref{eq:csh_beam_sel}.
	The algorithms \eqref{eq:cov_beam_sel}-\eqref{eq:csh_beam_sel} can be regarded as a sub-case of \eqref{eq:csh_beam_sel}.
	We define the set $\mathcal{W}_{k-1} \triangleq \{ \mathbf{W}_1^*, \dots, \mathbf{W}_{k-1}^* \}$ containing the beam decisions which
	have been fixed prior to the $k$-th step. According to the hierarchical structure, the $k$-th \ac{UE} knows the set  $\mathcal{B}^{\text{fix}}
	(\mathcal{W}_{k-1}) \triangleq \cup_{j=1}^{k-1} \mathcal{M}_j^{\text{BS}}(\mathbf{W}_j^*)$ and the effective channel covariances
	$\bar{\mathbf{\Sigma}}_j, ~\forall j \in \{ 1, \dots, k-1 \}$. Therefore, the $k$-th \ac{UE} can \emph{i)} evaluate a \emph{partial} \ac{GCMD} 
	$\delta_k\left(\bar{\mathbf{\Sigma}}_1, \dots, \bar{\mathbf{\Sigma}}_k\right)$ and \emph{ii)} construct a \emph{partial} \ac{GoB} precoder 
	$\mathbf{V}_{k-1}$ containing the precoding vectors relative to the indexes in $\mathcal{B}^{\text{fix}}(\mathcal{W}_{k-1})$,
	i.e. $\text{col}\left(\mathbf{V}_{k-1}\right) = 
	\left\{ \mathbf{v}_v \in \mathcal{B}_{\text{BS}} : (v,w) \in \mathcal{B}^{\text{fix}}(\mathcal{W}_{k-1}) \right\}$.
	Likewise, the $k$-th \ac{UE} can compute a \emph{partial} $\omega(\mathcal{W}_{k-1})$.
					
	The proposed decentralized beam selection $\mathbf{W}_k^*$ at the $k$-th UE can be then expressed in a recursive manner as follows:
	\begin{equation} \label{eq:hier_coo_bs_ZF}
		\mathbf{W}_k^* = 
		\argmax_{\mathbf{W}_k} 
		~f_k \left( \big[ \mathbf{V}_k \mathbf{V}_{k-1} \big], \{ \mathbf{W}_k, \mathcal{W}_{k-1} \} \right),
	\end{equation}
	where $\text{col}(\mathbf{V}_k) = 
	\left\{ \mathbf{v}_v \in \mathcal{B}_{\text{BS}} : (v,w)  \in \mathcal{M}_k^{\text{BS}}(\mathbf{W}_k) \right\}$, and
	\begin{equation} \label{eq:obj_func_dec}
		f_k(\mathbf{V}, \mathcal{W}) \triangleq 
		\begin{cases}
			M_{\textnormal{UE}} \log_2 \left( 1 + \kappa M_{\textnormal{UE}}^{-1} \textnormal{Tr}\left(\bar{\mathbf{\Sigma}}_k\right) 
			\delta_k\left(\bar{\mathbf{\Sigma}}_1, \dots, \bar{\mathbf{\Sigma}}_k\right) \right) 
			& \eqref{eq:cov_beam_sel} \\
			\big(1-\omega(\mathcal{W}) \big) M_{\textnormal{UE}} \log_2 \left( 1 + \kappa M_{\textnormal{UE}}^{-1} 
			\textnormal{Tr}\left(\bar{\mathbf{\Sigma}}_k\right) \right) 
			& \eqref{eq:coo_beam_sel} \\
			\big(1-\omega(\mathcal{W}) \big) 
			M_{\textnormal{UE}} \log_2 \left( 1 + \kappa M_{\textnormal{UE}}^{-1} \textnormal{Tr}\left(\bar{\mathbf{\Sigma}}_k\right) 
			\delta_k\left(\bar{\mathbf{\Sigma}}_1, \dots, \bar{\mathbf{\Sigma}}_k\right) \right)  
			& \eqref{eq:csh_beam_sel}
		\end{cases}
	\end{equation}
	The intuition behind the proposed scheme is to let the $k$-th \ac{UE} select the $\mathbf{W}_k \in \mathcal{B}_{\text{UE}}$ 
	maximizing the $k$-th term of the sum in the respective objective function, in a \emph{greedy} manner. 
	The remaining constraint $(K-1)M_{\text{UE}} < M_{\text{BS}}$ can be enforced at the \ac{BS} through e.g. activating predefined beams. 
	
	\subsection{On Algorithm and Information Exchange Complexity}
		
		The proposed decentralized problem can be addressed using linear (exhaustive) search in the codebook $\mathcal{B}_{\text{UE}}$ at each \ac{UE}. 
		In particular, the linear search does not involve a large omputational burden as the $k$-th \ac{UE} has to evaluate the respective objective
		function in \eqref{eq:obj_func_dec} in $\binom{B_{\text{UE}}}{M_{\text{UE}}}$ points, which is \emph{generally small} in practical 
		implementations~\cite{Dahlman2018}.
		On the other hand, the direct solving of the optimization problems \eqref{eq:cov_beam_sel}-\eqref{eq:csh_beam_sel} requires 
		combinatorial (exhaustive) search. Note that the complexity of the proposed solution is independent from the number of antennas at the BS 
		and at the UE side. However, the complexity increases with the number of beams $M_{\text{UE}}$ at the UE side, which is somewhat dependent
		on $N_{\text{UE}}$ (to avoid bad spatial resolution)~\cite{Zirwas2016}.
		
		Table~\ref{tab:algorithms} summarizes the required information at each UE for the proposed beam selection policies based on 
		\eqref{eq:un_beam_sel}-\eqref{eq:csh_beam_sel}. The proposed scheme relies on acquiring and tracking channel second order statistics at each UE, 
		on a beam coherence time basis, as described in Fig.~\ref{fig:signaling}. In this respect, the beam coherence time must be long enough to avoid
		large side-link overhead, and to ensure the feasibility of the proposed scheme. The authors in~\cite{Va2017} perform a thorough investigation
		on the behavior of the beam coherence time for practical use cases. For example, the beam coherence time is found to be higher than $1$ second 
		under a vehicular scenario with \acp{UE} speed around 100 km/h communicating in the $60$ GHz band. Thus, it seems reasonable to assume that 
		the beam coherence time is long enough for the purposes of the proposed scheme.
		
\section{Simulation Results}
	
	\setlength{\belowcaptionskip}{-0.57em}
	We evaluate here the performance of the proposed decentralized beam selection algorithms. 
	We assume $N_{\text{BS}} = 64$ and $N_{\text{UE}} = 4$.
	The beamforming vectors in $\mathcal{B}_{\text{BS}}$ and $\mathcal{B}_{\text{UE}}$ are \ac{DFT}-based
	orthogonal beams, according to the codebook-based transmission in $3$GPP \ac{NR}~\cite{Dahlman2018}.
	Furthermore, we assume that the \acp{UE} are allowed to indicate at most $4$ relevant beam pairs each to
	the \ac{BS}, i.e. the \ac{PMI} $\mathcal{M}^{\text{BS}}_k(\mathbf{W}_k^*)$ is truncated to its $4$
	strongest elements $\forall k$. This is equivalent to the \emph{Type II} \ac{CSI} reporting in 
	\ac{NR}~\cite{Dahlman2018}. We assume that the \acp{UE} use the popular \ac{MMSE} method to estimate 
	their instantaneous effective channels (refer to \eqref{eq:mmse_estimate} and \eqref{eq:lem_lmmse}), 
	which are then fed back to the \ac{BS} for \ac{BD}-based precoder design (refer to 
	Fig.~\ref{fig:signaling}). The \emph{Zadoff-Chu} sequences are used for training the effective
	channel~\cite{Dahlman2018}. In our simulation, we consider transmission over $25$ adjacent resource blocks,
	each one consisting in $12$ sub-carriers and $14$ OFDM symbols~\cite{Dahlman2018}. 
	The total considered bandwidth is $5$ MHz,
	and the total number of available resource elements\footnote{In actual networks, control information
	occupies a portion of the available resource elements. We assume that such portion is negligible and
	that all resource elements are used for either \ac{CSI}-\ac{RS} or data transmission.} in the channel coherence time is 
	$T \triangleq \left(14 \cdot 12 \cdot 25\right) T_{\text{coh}}$, where $T_{\text{coh}}$ is expressed in ms.
	All the metrics in the next plots are averaged over $10000$ Monte-Carlo iterations with varying network scenarios.
	
	\subsection{Winner II Channel Model}
		
		The channel model used for the simulations is the cluster-based Winner II model, which extends 
		the $3$GPP spatial channel model. 
		In particular, we consider the urban micro-cell scenario operating at $2.1$ GHz. 
		In urban micro-cell scenarios, both the \ac{BS} and the \acp{UE} are assumed to be located outdoors 
		in an area where the streets are laid out in a Manhattan-like grid. 
		The \ac{UE} speed is set according to a uniform distribution in the interval $[5, 50]$ km/h.
		This scenario considers both line-of-sight and non-line-of-sight links. 
		Like in all cluster-based models, the channel realizations are generated through summing the
		contributions of the multiple paths within each cluster. Those paths come with their own small- and large-scale
		parameters such as amplitude, \ac{AoD} and \acs{AoA}. 
				
	\subsection{Results and Discussion}
				
		In what follows, we show and discuss the performance achieved via the proposed algorithms. 
		For benchmarking, we consider -- where appropriate -- the relative \ac{TDD} configuration with both
		\emph{perfect} and \emph{imperfect} \ac{CSI}\footnote{The curves relative to the TDD setting are obtained 
		after \ac{BD} precoding at the \ac{BS}, under the assumption of \emph{perfect channel reciprocity}. 
		The precoder is applied over the full-dimensional $K N_{\text{UE}} \times N_{\text{BS}}$ channel.
		In this respect, a clear advantage is experienced over the proposed \ac{FDD} solutions, which consider 
		the transmission over an \emph{effective} channel, as a result of the \ac{GoB} precoding.}. Moreover, 
		we will consider two different network scenarios: the \emph{i)} with \emph{randomly-located} \acp{UE}, 
		and the \emph{ii)} with \emph{closely-located} \acp{UE} \emph{(highly spatially-correlated} channels\emph{)}.
		
		We start with configuration \emph{i)}. 
		In Fig.~\ref{fig:sum_rate_vs_snr_K7}, we show the average effective network throughput as a function
		of the transmit \ac{SNR} for $K=7$ \acp{UE} and a channel coherence time $T_{\text{coh}} = 15$ ms. 
		As expected, higher throughput is achieved under TDD settings compared to the proposed FDD solutions~\cite{Flordelis2018}.
		In particular, the gain over FDD increases with the SNR.
		With respect to the proposed algorithms, both \eqref{eq:coo_beam_sel} and \eqref{eq:csh_beam_sel} 
		outperform the uncoordinated benchmark \eqref{eq:un_beam_sel}, with equal average effective throughput values 
		obtained with up to $10$ dBs less. Since $T_{\text{coh}}$ is small, the pre-log factor dominates the 
		log factor in \eqref{eq:ent}. Therefore, just a small performance gap divides \eqref{eq:coo_beam_sel} 
		from \eqref{eq:csh_beam_sel} and, as such, according to Table~\ref{tab:algorithms}, \eqref{eq:coo_beam_sel} 
		is preferable in this case (since much less information needs to be shared among the \acp{UE}). 
		On the other hand, the coordinated algorithm \eqref{eq:cov_beam_sel} performs even worse than the 
		uncoordinated benchmark \eqref{eq:un_beam_sel}. In particular, when the channel coherence time 
		$T_{\text{coh}}$ is small, shaping the covariances so as to maximize the \emph{spatial separability} 
		of the \acp{UE} is counter-effective. Some more insights on this are given in the next paragraph.
		Since the training overhead increases with $K$, the performance gain achieved by the coordinated
		algorithms \eqref{eq:coo_beam_sel} and \eqref{eq:csh_beam_sel} surges in 
		Fig.~\ref{fig:sum_rate_vs_snr_K11} for $K = 11$. For the same reason, the gap between
	 	\eqref{eq:cov_beam_sel} and all other solutions increases.
		
		\begin{figure}[h]
			\centering
			\begin{subfigure}[h]{0.485\textwidth}
				\centering
				\includegraphics[trim=0.17in 0.17in 0.17in 0.17in, clip, width=0.927\linewidth]{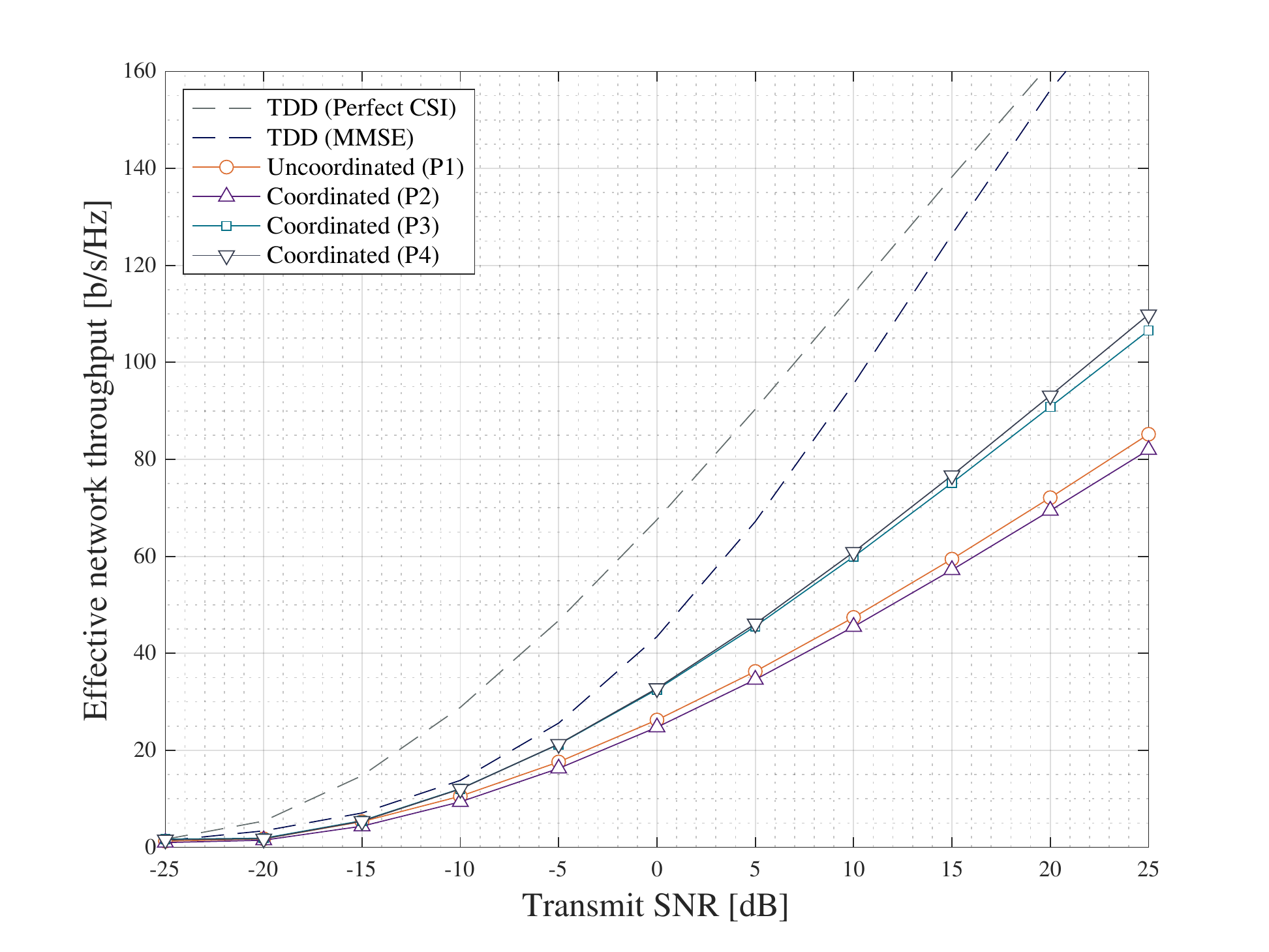}
				\caption{$K = 7$ \acp{UE}}
				\label{fig:sum_rate_vs_snr_K7}
			\end{subfigure}
			\begin{subfigure}[h]{0.485\textwidth}
				\centering
				\includegraphics[trim=0.17in 0.17in 0.17in 0.17in, clip, width=0.927\linewidth]{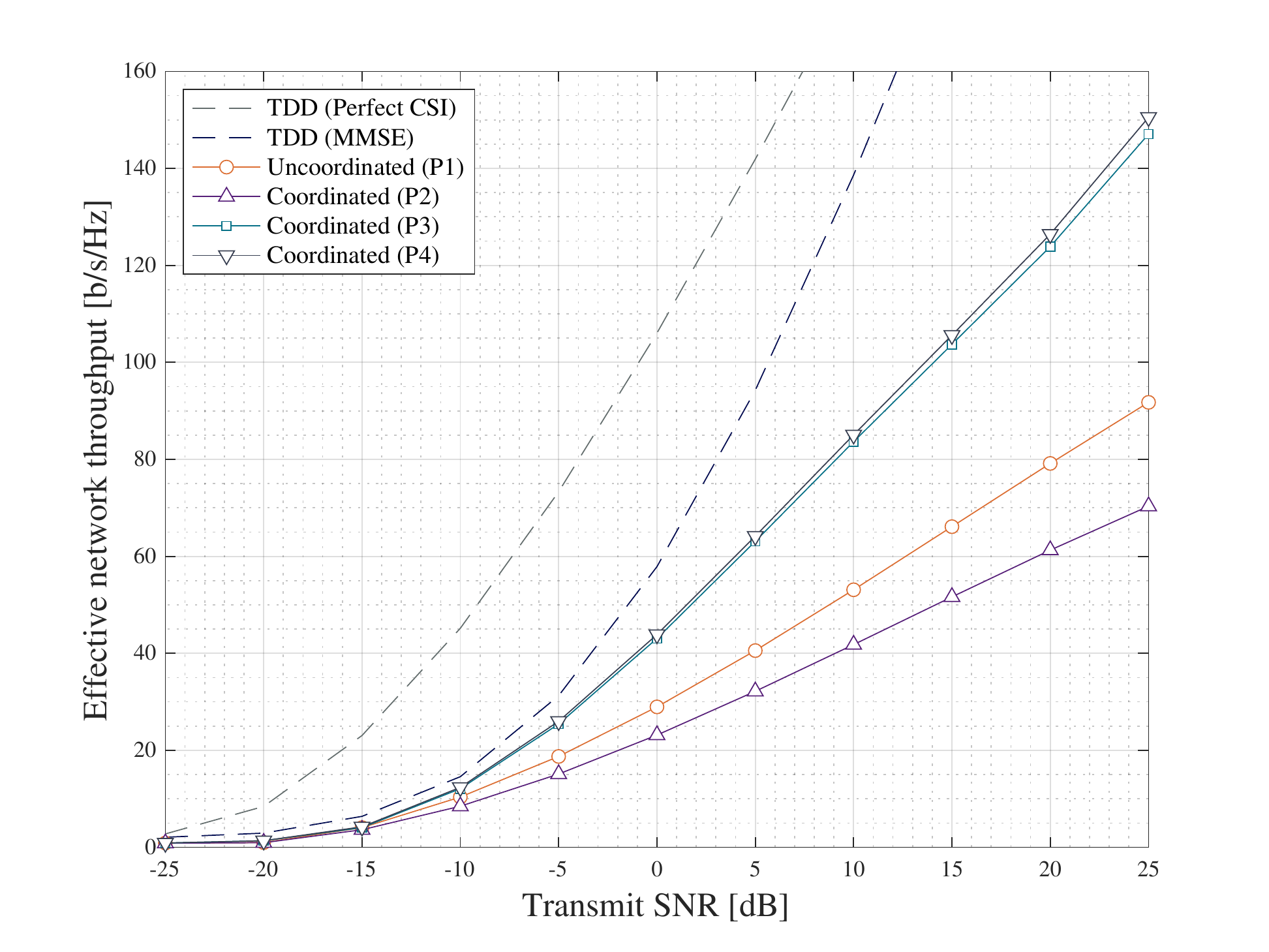}
				\caption{$K = 11$ \acp{UE}}
				\label{fig:sum_rate_vs_snr_K11}
			\end{subfigure}
			\caption{Average effective network throughput vs \ac{SNR} for (a) $K=7$ and (b) $K=11$ 
			\emph{randomly-located} \acp{UE}. 
			$M_{\text{UE}} = 3$ beams activated at each \ac{UE}. $T_{\text{coh}} = 15$ ms.
			The coordinated algorithms \eqref{eq:coo_beam_sel} and \eqref{eq:csh_beam_sel} outperform 
			the uncoordinated \eqref{eq:un_beam_sel}, as opposed to \eqref{eq:cov_beam_sel}.}
			\label{fig:sum_rate_vs_snr}
		\end{figure}
		
		Fig.~\ref{fig:gain_over_un_vs_cct_K7} shows the average throughput gain over the uncoordinated
		benchmark \eqref{eq:un_beam_sel} as a function of $T_{\text{coh}}$ for $K=7$ \acp{UE}. In particular,
		two areas can be identified: 
		\begin{subsubsection}{$T_{\textnormal{coh}} < 20$ \text{ms, i.e. vehicular or fast pedestrian
		channels}} where \eqref{eq:coo_beam_sel} and \eqref{eq:csh_beam_sel} have high gains compared to the
		other solutions (up to $45\%$) and where the coordinated algorithm \eqref{eq:cov_beam_sel} performs
		even worse than the uncoordinated \eqref{eq:un_beam_sel}.
		Indeed, as we can see in Fig.~\ref{fig:activ_beams_bs_vs_cct}, in order to achieve greater spatial
		separation across the \acp{UE}, the algorithm \eqref{eq:cov_beam_sel} activates a much greater number
		of beams at the \ac{BS} side, which is detrimental under \emph{fast-varying} channels.
		\end{subsubsection}
		\begin{subsubsection}{$T_{\textnormal{coh}} \ge 20$ ms, \text{i.e. pedestrian channels}}
			where the gap between \eqref{eq:coo_beam_sel}-\eqref{eq:csh_beam_sel} and \eqref{eq:un_beam_sel} reduces (up to $15\%$). 
			In particular, \eqref{eq:coo_beam_sel} converges to the uncoordinated benchmark \eqref{eq:un_beam_sel}. This is because the training
			overhead becomes negligible for long channel coherence times, and it is more important to focus on the log factor in \eqref{eq:ent}.
			For the same reason, \eqref{eq:cov_beam_sel} experiences gains over the uncoordinated solution \eqref{eq:un_beam_sel} for 
			$T_{\textnormal{coh}} \ge 20$ ms. The coordinated algorithm \eqref{eq:csh_beam_sel} converges to \eqref{eq:cov_beam_sel}. 
			Therefore, for long channel coherence times, \eqref{eq:cov_beam_sel} allows to avoid some additional coordination overhead, according to 
			Table~\ref{tab:algorithms} and, as such, is preferable.
		\end{subsubsection}
		
		The same reasoning holds for Fig.~\ref{fig:gain_over_un_vs_cct_K11} with $K=11$ \acp{UE}, where the positive and negative behaviors described
		above are intensified. In particular, for $T_{\text{coh}} < 20$ ms, \eqref{eq:coo_beam_sel} and \eqref{eq:csh_beam_sel} achieve up to $120 \%$ gain
		over \eqref{eq:un_beam_sel}, while $20\%$ loss is achieved with \eqref{eq:cov_beam_sel}.
		
		\begin{figure}[h]
			\centering
			\begin{subfigure}[h]{0.485\textwidth}
				\centering
				\includegraphics[trim=0.17in 0.17in 0.17in 0.17in, clip, width=0.927\linewidth]{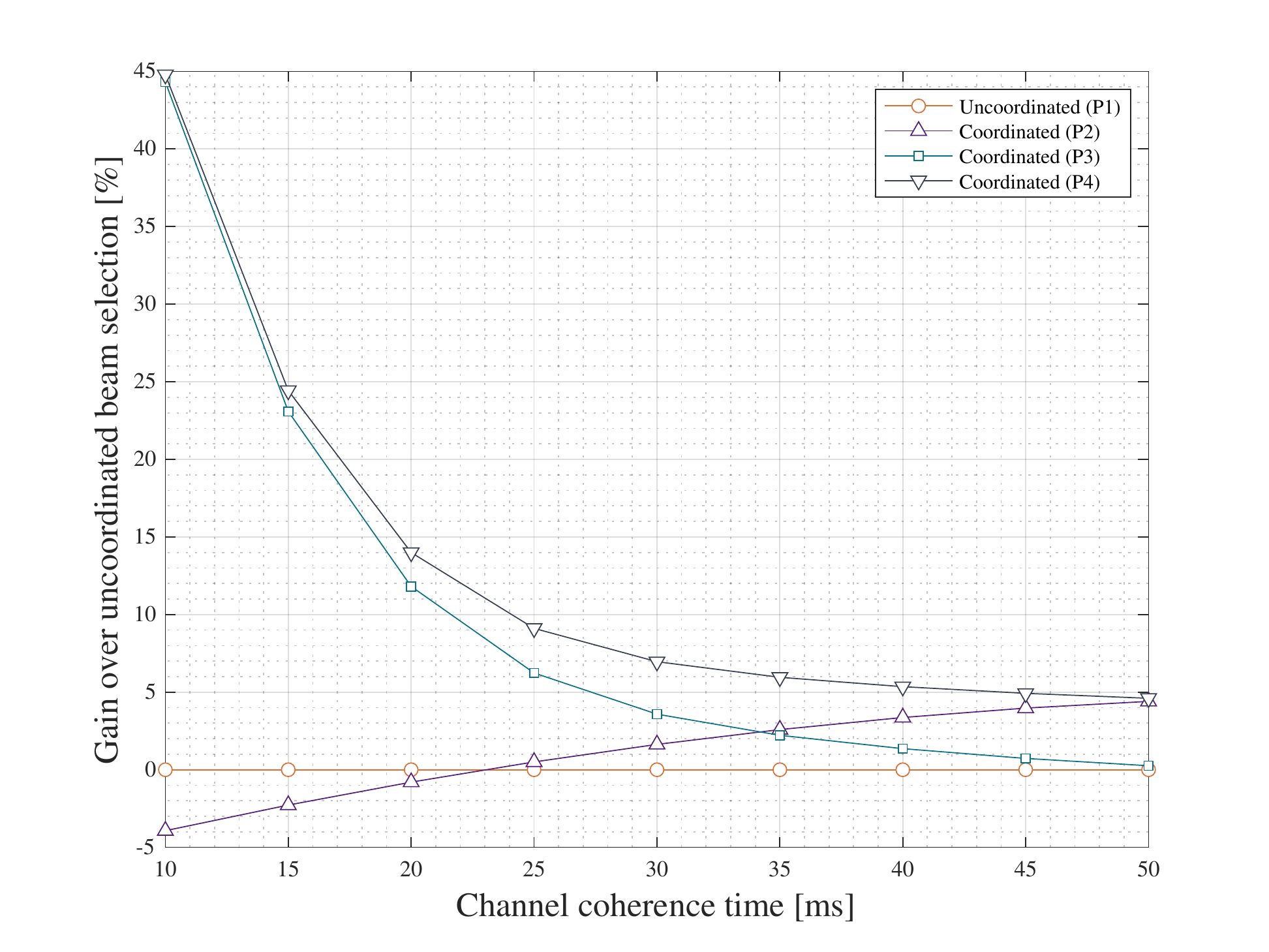}
				\caption{$K = 7$ \acp{UE}}
				\label{fig:gain_over_un_vs_cct_K7}
			\end{subfigure}
			\begin{subfigure}[h]{0.485\textwidth}
				\centering
				\includegraphics[trim=0.17in 0.17in 0.17in 0.17in, clip, width=0.927\linewidth]{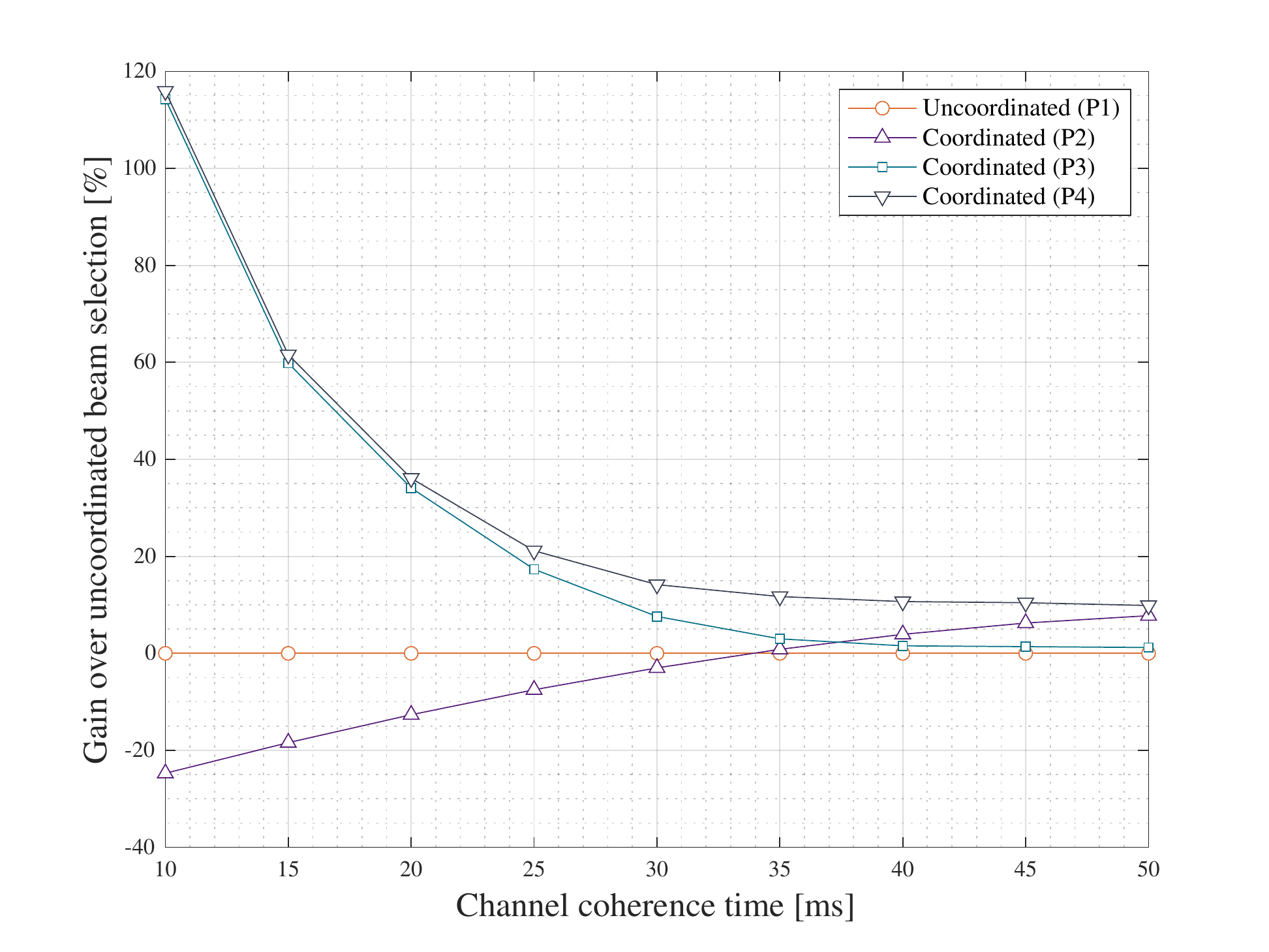}
				\caption{$K = 11$ \acp{UE}}
				\label{fig:gain_over_un_vs_cct_K11}
			\end{subfigure}
			\caption{Average effective throughput gain over uncoordinated beam selection \eqref{eq:un_beam_sel} vs $T_{\text{coh}}$ for $K = 7$ \acp{UE}. 
			The transmit \ac{SNR} is $11$ dB. Taking the pre-log factor into account is essential for an effective coordinated beam selection under 
			\emph{fast-varying} channels where $T_{\text{coh}} < 20$ ms.}
			\label{fig:gain_over_un_vs_cct}
		\end{figure}
		\setlength{\belowcaptionskip}{-1.57em}
		\begin{figure}[h]
			\centering
			\begin{subfigure}[h]{0.485\textwidth}
				\centering
				\includegraphics[trim=0.17in 0.17in 0.17in 0.17in, clip, width=0.927\linewidth]{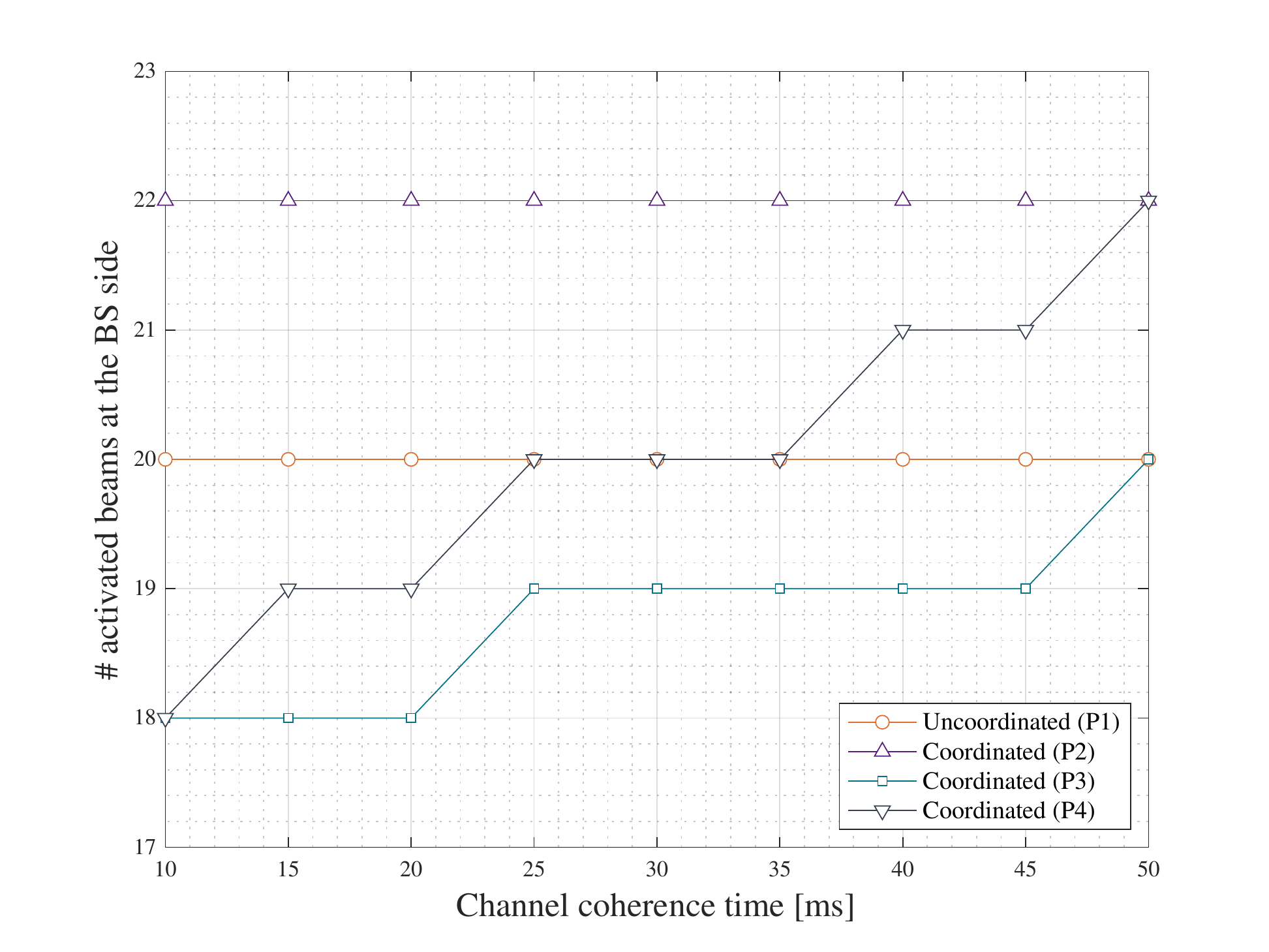}
				\caption{$K = 7$ \acp{UE}}
			\end{subfigure}
			\begin{subfigure}[h]{0.485\textwidth}
				\centering
				\includegraphics[trim=0.17in 0.17in 0.17in 0.17in, clip, width=0.927\linewidth]{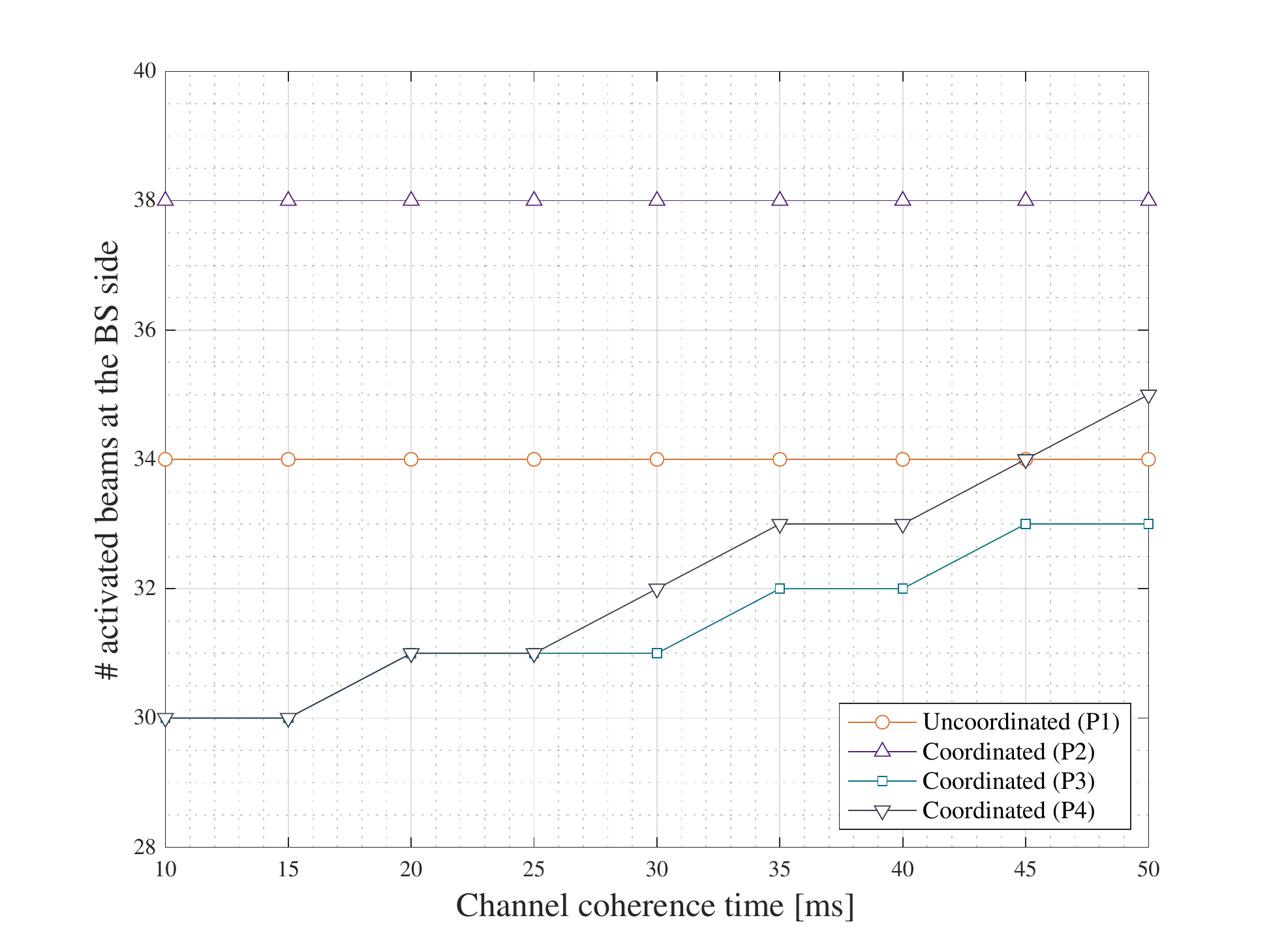}
				\caption{$K = 11$ \acp{UE}}
			\end{subfigure}
			\vspace{1em}
			\caption{Average $M_{\text{BS}}$ for the proposed algorithms vs $T_{\text{coh}}$ for (a) $K=7$ 
			and (b) $K=11$ \acp{UE}. The transmit \ac{SNR} is $11$ dB. The coordinated algorithm
			\eqref{eq:cov_beam_sel} activates more beams at the \ac{BS} side in order 
			to achieve greater spatial separation among the \acp{UE}.}
			\label{fig:activ_beams_bs_vs_cct}
		\end{figure}
		
		Let us now focus on the configuration \emph{ii)}, where higher spatial correlation is found among the \acp{UE}. 
		Fig.~\ref{fig:gain_over_un_vs_cct_K7_corr} shows the average throughput gain over the uncoordinated benchmark 
		as a function of the channel coherence time $T_{\text{coh}}$. We can see that now \eqref{eq:cov_beam_sel}
		outperforms \eqref{eq:un_beam_sel} for all the considered values of $T_{\text{coh}}$. 
		Indeed, due to the increasing spatial correlation among the \acp{UE}, the 
		multi-user interference becomes non-negligible even for small channel coherence times below $20$ ms.
		Moreover, in this case, the performance gain obtained through \eqref{eq:csh_beam_sel} justifies more
		the need to exchange some additional long-term information compared to the other solutions 
		(refer to Table~\ref{tab:algorithms}).
		
		\begin{figure}[h]
			\centering
			\includegraphics[trim=0.17in 0.17in 0.17in 0.17in, clip, width=0.457\columnwidth]
			{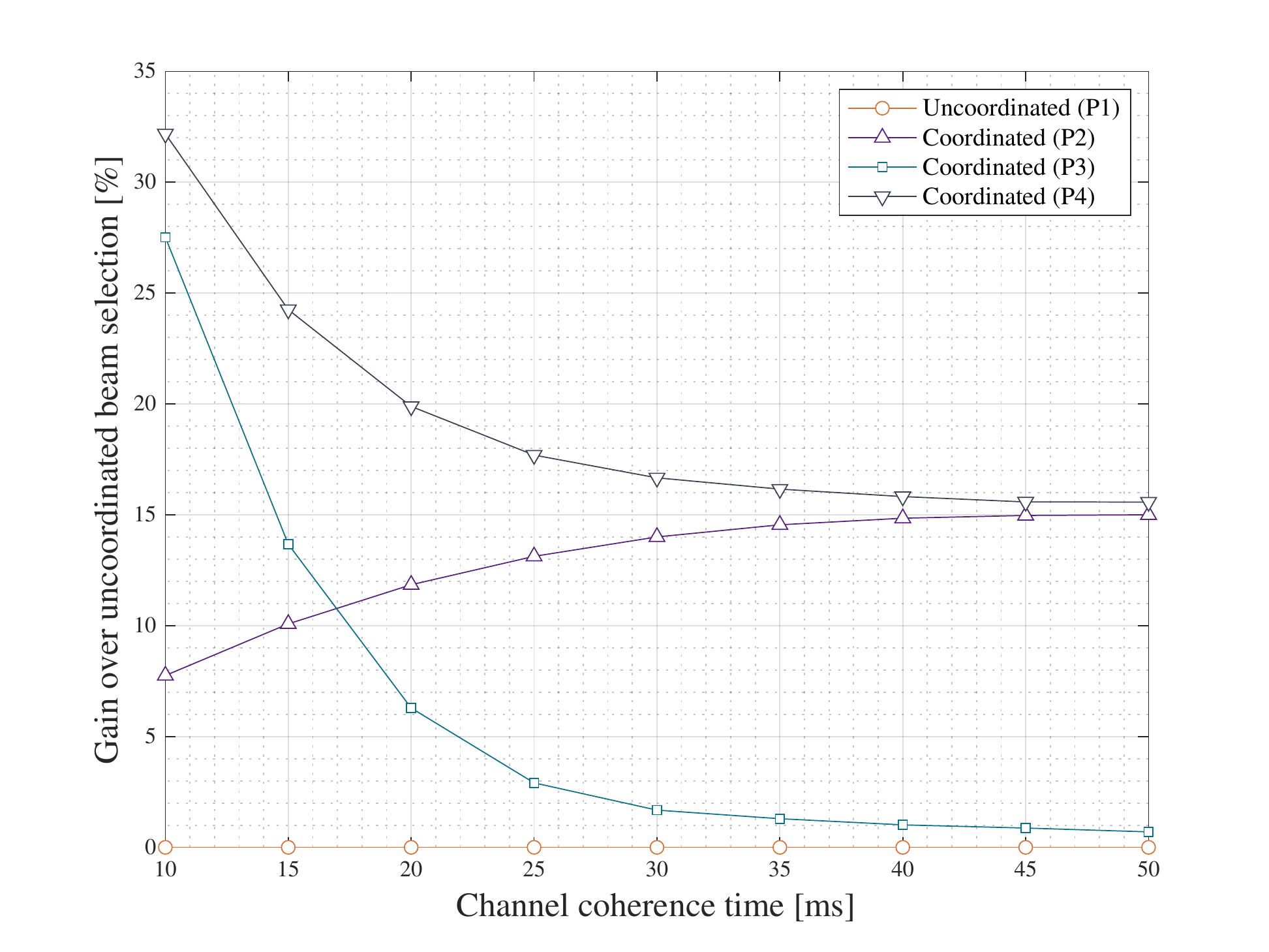}
			\caption{Average effective throughput gain over uncoordinated beam selection
			\eqref{eq:un_beam_sel} vs $T_{\text{coh}}$ for 
			$K = 7$ \emph{closely-located} \acp{UE}. The transmit \ac{SNR} is $11$ dB. 
			Owing to high spatial correlation among the \acp{UE}, \eqref{eq:cov_beam_sel}
			achieves high gains compared to the solutions which neglect the multi-user interference.}
			\label{fig:gain_over_un_vs_cct_K7_corr}
		\end{figure}
		
	\subsection{Effect of Feedback Quantization}
	
		In this section, we take into account the impact of limited feedback from the \acp{UE} to the \ac{BS} 
		on the overall \ac{DL} performance. In particular, we assume that the estimated channels are
		\emph{uniformly-quantized} with $q_\text{B}$ bits (element-wise quantization). 
		For a fairer comparison between \ac{FDD} and \ac{TDD} operation,
		we assume that, under \ac{TDD}, the quantization is applied as well on the (effective) channels which are fed 
		back to the \ac{UE} for coherent detection. Fig.~\ref{fig:ent_vs_qB} shows the effective network throughput as
		a function of the quantization bits $q_\text{B}$ for an \ac{SNR} equal to $11$ dB,
		and a $T_{\text{coh}} = 18$ ms. As expected, reducing $q_\text{B}$ entails a sharp throughput loss.
		However, it should be noted that such a loss starts when $q_{\text{B}} \le 7$, except under \ac{TDD} settings.
		With respect to the proposed algorithms, adopting a small $q_\text{B}$ reduces the performance gaps as well.
		At the same time, quantizing with a smaller $q_\text{B}$ reduces the feedback overhead in the \ac{UL} band.
		In this respect, feedback overhead and coordination performance must be balanced.
		\vspace{-0.37cm}
		\begin{figure}[h]
			\centering
			\begin{subfigure}[h]{0.485\textwidth}
				\centering
				\includegraphics[trim=0.17in 0.17in 0.17in 0.17in, clip, width=0.927\linewidth]{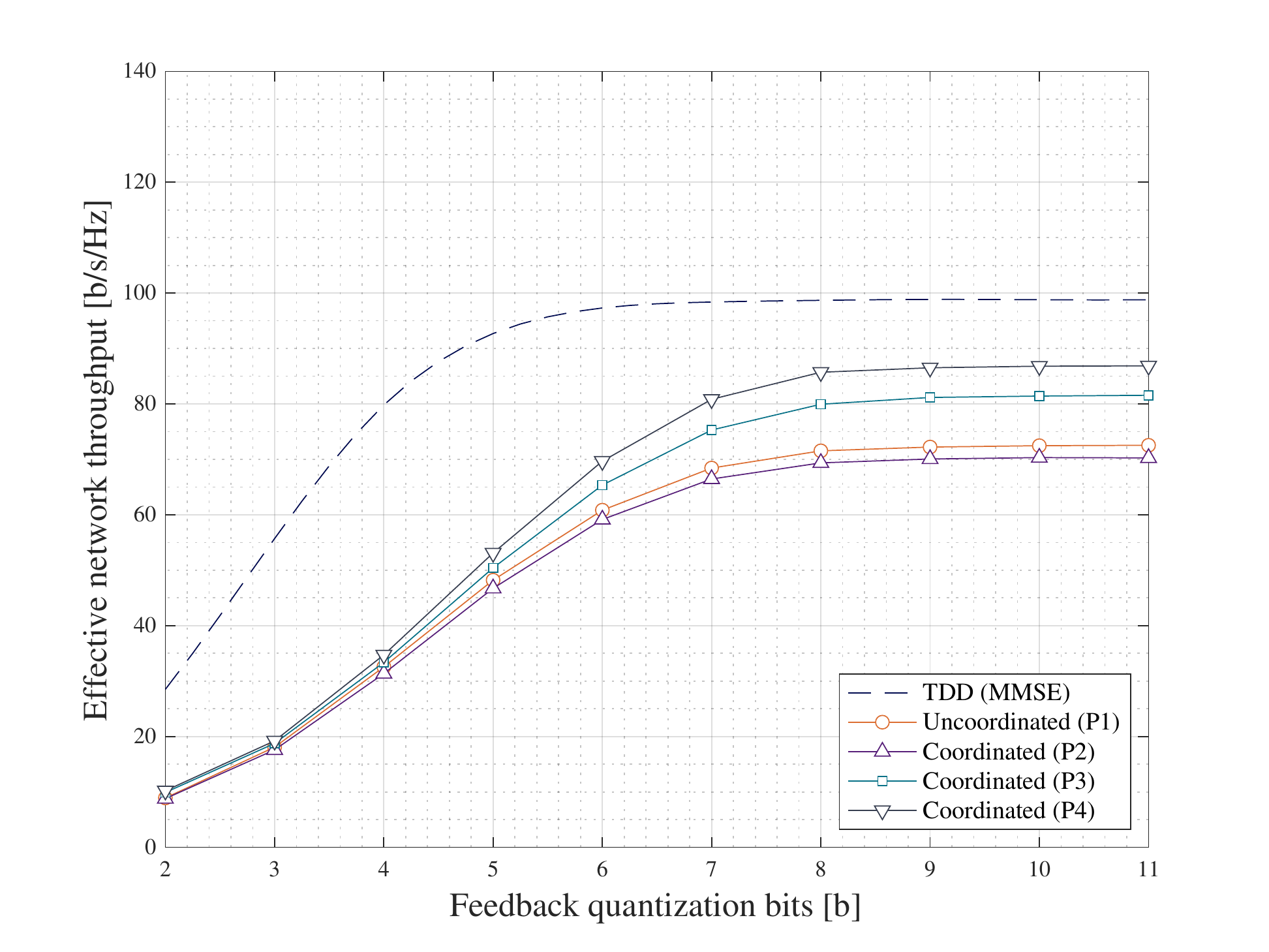}
				\caption{$K = 7$ \acp{UE}}
				\label{fig:ent_vs_qB_K7}
			\end{subfigure}
			\begin{subfigure}[h]{0.485\textwidth}
				\centering
				\includegraphics[trim=0.17in 0.17in 0.17in 0.17in, clip, width=0.927\linewidth]{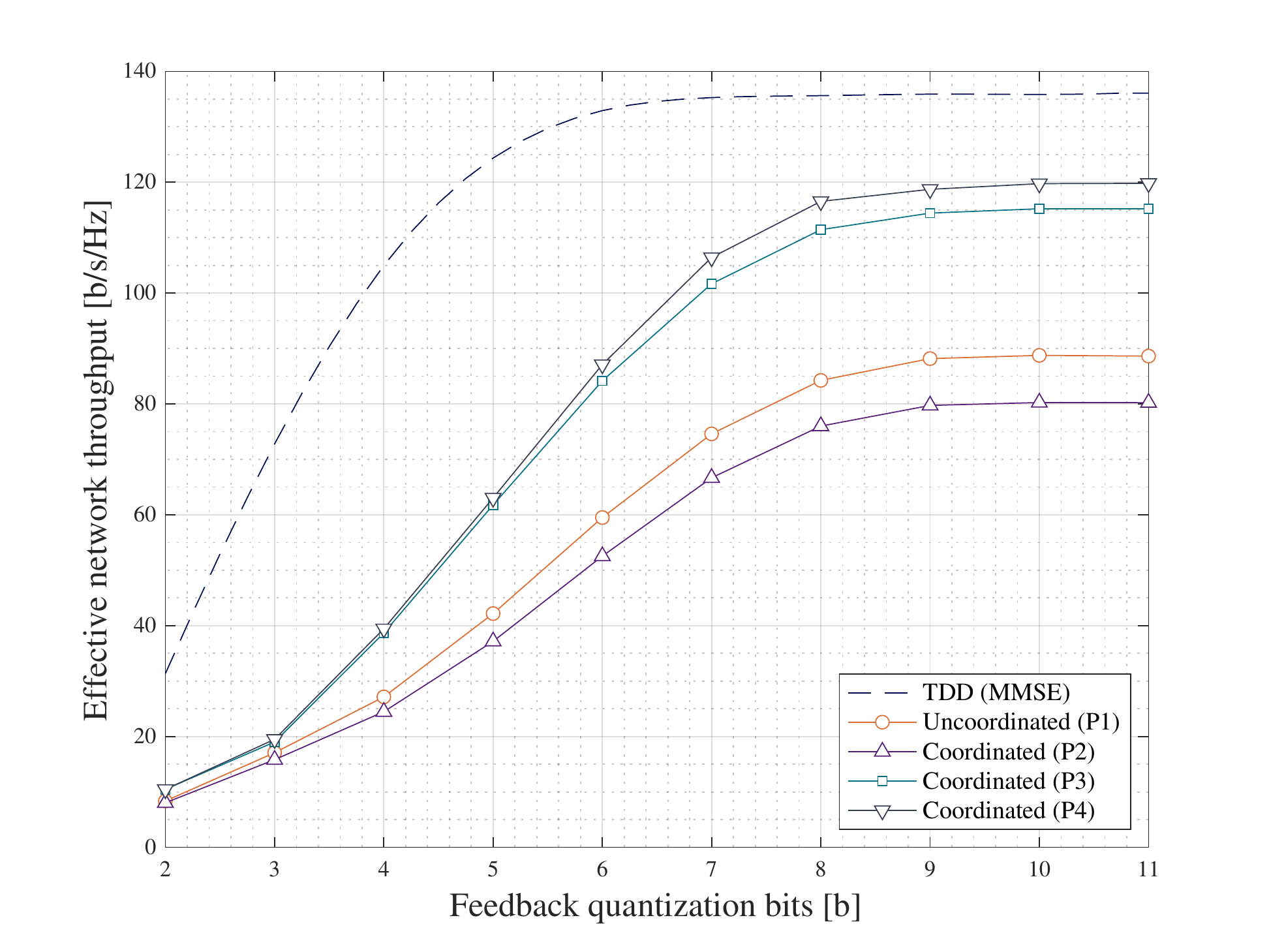}
				\caption{$K = 11$ \acp{UE}}
				\label{fig:ent_vs_qB_K11}
			\end{subfigure}
			\vspace{1em}
			\caption{Average effective network throughput vs $q_\text{B}$ for $K = 7$ \acp{UE}. The transmit \ac{SNR} is $11$ dB.
			$T_{\text{coh}} = 18$ ms. Reducing $q_\text{B}$ reduces the feedback overhead, but decreases performance \emph{sharply}.}
			\label{fig:ent_vs_qB}
		\end{figure}
	
\section{Conclusions}
	
	In this paper, we have shown that beam-domain coordination between the \acp{UE} offers a convenient 
	means to improve the performance in \ac{FDD} \ac{mMIMO} networks under the \ac{GoB} design assumption. 
	We have proposed a decentralized beam selection algorithm exploiting long-term statistical information and 
	its exchange through D$2$D side-links. The proposed scheme explores the interesting \emph{trade-off} between 
	\emph{i)} harvesting large channel gain, \emph{ii)} avoiding multi-user interference (low spatial separation among the \acp{UE}), 
	and \emph{iii)} minimizing the training overhead, which arises in this context. 
	Simulation results demonstrate the effectiveness of the proposed algorithm. 
	In particular, under fast pedestrian or vehicular channels where the channel coherence time is below $20$ ms, 
	the proposed scheme gives consistently better performance in terms of overall network throughput than uncoordinated beam selection 
	or schemes that do not properly address the above trade-off.

\bibliography{Bibl}
\bibliographystyle{IEEEtran}
				
\end{document}